\newcommand{\CC}[0]{\operatorname{Constrained-Cut}}
\algnewcommand{\InlineComment}[1]{\hfill {\itshape // #1}}
\par\vspace{4mm}}
\newcommand{\set}[1]{\left\{ #1 \right\}}
\newcommand{\parenbig}[1]{\bigl( #1 \bigr)}
\newcommand{\Hcal}[0]{\ensuremath{\mathcal H}\xspace} 
\newcommand{\Tcal}[0]{\ensuremath{\mathcal T}\xspace} 
\newcommand{\Gnet}[0]{\ensuremath{\Hcal'}\xspace}
\newcommand{\Dnet}[0]{\ensuremath{\Hcal}\xspace}
\DeclareMathOperator{\tw}{tw}
\DeclareMathOperator{\rootn}{root}
\newcommand{\be}{\begin{enumerate}}
\newcommand{\ee}{\end{enumerate}}
\newcommand{\bd}{\begin{description}}
\newcommand{\ed}{\end{description}}
\newcommand{\bi}{\begin{itemize}}
\newcommand{\ei}{\end{itemize}}
\newtheorem{theorem}{Theorem}
\newtheorem{lemma}[theorem]{Lemma}
\newtheorem{observation}[theorem]{Observation}
\newtheorem{claim}[theorem]{Claim}
\newtheorem{proposition}[theorem]{Proposition}
\newtheorem{definition}[theorem]{Definition}
\newenvironment{proof}{\par \smallskip{\bf Proof:}}{\hfill\stopproof}
\def\stopproof{\square}
\def\square{\vbox{\hrule height.2pt\hbox{\vrule width.2pt height5pt \kern5pt
\vrule width.2pt} \hrule height.2pt}}
\newenvironment{prog}[1]{
\begin{minipage}{5.8 in}
{\sc\bf #1}
\begin{enumerate}}
{
\end{enumerate}
\end{minipage}
}
\renewcommand{\phi}{\varphi}
\newcommand{\poly}{\operatorname{poly}}
\newcommand{\Z}{\ensuremath{\mathbb Z}}
\newcommand{\cset}{{\mathcal C}}
\newcommand{\pset}{{\mathcal P}}
\newcommand{\tset}{{\mathcal T}}
\newcommand{\mincut}[0]{\operatorname{mincut}}
\newcommand{\Xset}{\ensuremath{X}\xspace} 
\newcommand{\card}[1]{\left| #1 \right|}
\newcommand{\nosolution}{\textsc{No Valid Solution}\xspace}
\title{Mimicking Networks Parameterized by Connectivity}
\author{Parinya Chalermsook\thanks{Aalto University, Finland.} \and 
	Syamantak Das\thanks{Indraprastha Institute of Information Technology Delhi, India} \and 
	Bundit Laekhanukit\thanks{Shanghai University of Finance and Economics, China.} \and 
	Daniel Vaz\thanks{Technische Universität München, Germany. Work done while at Max-Planck-Institut für Informatik, Germany.}} 
\begin{document}

\maketitle

\begin{abstract}
Given a graph $G=(V,E)$, capacities $w(e)$ on edges, and a subset of terminals $\tset \subseteq V: |\tset| = k$, a mimicking network for $(G,\tset)$ is a graph $(H,w')$ that contains copies of $\tset$ and preserves the value of minimum cuts separating any subset $A, B \subseteq \tset$ of terminals. 
Mimicking networks of size $2^{2^k}$ are known to exist and can be constructed algorithmically, while the best known lower bound is $2^{\Omega(k)}$; therefore, an exponential size is required if one aims at preserving cuts exactly.

In this paper, we study mimicking networks that preserve connectivity of the graph exactly up to the value of $c$, where $c$ is a parameter. This notion of mimicking network is sufficient for some applications, as we will elaborate.  
We first show that a mimicking of size $3^c \cdot k$  exists, that is, we can preserve cuts with small capacity using a network of size linear in $k$. 
Next, we show an algorithm that finds such a mimicking network in time $2^{O(c^2)} \poly(m)$.
\end{abstract}

\section{Introduction} 

\textit{Graph compression} is a basic information-theoretic and computational question of the following nature: Given an $n$-node graph $G$ (imagine $n$ to be very large), can we compute a ``compact'' (much smaller) representation of $G$ that preserves information that is important for us? 
Many such objects have been central in algorithm designs. 
For instance, graph spanners aim at preserving the distance between nodes in the graphs, while vertex sparsifiers focus on preserving the cut sizes among designated nodes. 

In this paper, we focus on \emph{mimicking networks}: Given a graph $G=(V,E)$, capacities $w(e)$, and a subset of terminals $\tset\subseteq V: |\tset| = k$, our goal is to find a smaller capacitated graph $(H,w')$ that contains copies of $\tset$ and preserves the value of minimum cuts separating any subset $A,B \subseteq \tset$. In this case, we say that $H$ is a mimicking network of $G$. 
This question was introduced by Hagerup et al.~\cite{hagerup1998characterizing} where they presented a mimicking network of size $2^{2^k}$ which depends only on $k$ but not $n$ (see also an improvement by Khan and Raghavendra~\cite{khan2014mimicking}). 
Krauthgamer and Rika~\cite{krauthgamer2013mimicking} showed that the exponential dependence on $k$ is needed; they presented a lower bound of $2^{\Omega(k)}$. 
It remains an intriguing open problem to close this gap.

While for some applications (e.g. for cut and flow problems), it is desirable to preserve the cut values exactly for every cut, this is not the case for connectivity problems. 
For instance, if we want to keep only information about $c$-connected subgraphs, it would be enough to treat all cuts of size larger than $c$ as having value exactly $c$. 
Therefore, we initiate the study of \textit{connectivity-$c$ mimicking network} and present some application in fast computation in graphs of low treewidth. 
The following is our main theorem. 

\begin{theorem}
\label{thm:connc:mimnet}
There is a connectivity-$c$ mimicking network of size $3^c \cdot c \cdot k$.
Moreover, there exists a deterministic algorithm to find such a network in
time $O\parenbig{2^{O(c^2)}\cdot k^2\cdot m^2}$.
\end{theorem}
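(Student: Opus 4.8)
The first step is to pass to a convenient normal form. Replacing each capacity $w(e)$ by $\min\{w(e),c\}$ does not change $\min\{\mincut_G(A,B),c\}$ for any $A,B\subseteq\tset$: any flow of value at most $c$ puts load at most $c$ on every edge, so it survives the truncation and no min-cut of value at most $c$ can drop; and a cut of value exceeding $c$ keeps truncated value at least $c$ (either one of its edges already had capacity $\ge c$, or the rest are untouched and still sum to more than $c$). After clearing denominators we may also assume integer capacities, so \emph{every cut of value at most $c$ has at most $c$ edges}. Finally, it suffices to preserve $\mincut(A,B)$ for partitions $\tset=A\cup B$: for disjoint $A,B$ a minimum $(A,B)$-cut induces a partition $(A',B')\supseteq(A,B)$ with the same cut value, so preserving all partition min-cuts (up to $c$) preserves all subset min-cuts (up to $c$) in both $G$ and $H$.

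\textbf{The decomposition.} Given the normalised instance I would compute the Steiner min-cut value $\lambda=\min_{\tset=A\cup B}\mincut(A,B)$. If $\lambda\ge c$, then $\mincut(A,B)\ge c$ for every disjoint pair, and the star joining a fresh center to each terminal by a capacity-$c$ edge is a valid mimicking network of size $|\tset|+1$. Otherwise fix a minimum Steiner cut $\delta(S)$, which has at most $c-1$ edges, and recurse on the two sides: on the side $S$, replace each edge $e=uv\in\delta(S)$ with $u\in S$ by a pendant edge of capacity $w(e)$ from $u$ to a new degree-one terminal $p_e$ (symmetrically on $V\setminus S$); recursively build mimicking networks of the two pieces, maintaining the invariant that pendant degree-one terminals stay degree-one and attached to the image of their anchor; then glue the two networks by identifying each $p_e$ with its partner and suppressing the resulting degree-two vertex. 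Correctness of the glue rests on a composition lemma: a mimicking network of $G$ can be reassembled from mimicking networks of the two sides of an at-most-$(c-1)$-edge cut, because every cut of the glued graph decomposes into its two restricted sides plus a subset of the $\le c-1$ bridging edges, and this matches the cut structure of $G$ up to value $c$.

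\textbf{The size bound, and the main obstacle.} The delicate point is that the naive recursion does not even terminate in a size-respecting way: each split can introduce up to $c-1$ new pendant terminals per side. The remedy is to split only along cuts that provably shrink the instance (a cheap cut one of whose sides has many more vertices than its terminal count justifies) and to organise the recursion by connectivity value --- in round $i$ one removes all Steiner cuts of value exactly $i$ via a laminar, Gomory--Hu-style family of such cuts, so that each round multiplies the terminal count by only a constant; after $c-1$ rounds every remaining piece has Steiner connectivity $\ge c$ and collapses to a star (or, when a near-$c$-connected piece still talks to the outside through $\le c-1$ edges, to an $O(c)$-size gadget). Summing over the final pieces gives a total terminal count $O(3^{c}k)$ and hence $|V(H)|=O(3^{c}ck)$, with the extra factor $c$ absorbing the per-round slack and the gadget sizes. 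Getting the per-round blow-up down to an absolute constant --- rather than something like $c!$ --- together with the composition lemma, is in my view the crux of the whole proof, and the two requirements pull against each other: finer cuts make the level structure cleaner but create more pendant terminals.

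\textbf{The algorithm.} Everything above is constructive and deterministic. Each Steiner min-cut, each canonical minimum cut side, and each laminar family of cheap cuts reduces to $O(k)$ max-flow computations of value at most $c$, each solved in $O(cm)$ time by augmenting paths. The $2^{O(c^2)}$ factor comes from searching, at every recursion node, over the candidate cheap separators and the associated $O(c)$-size gadgets: the number of relevant value-$\le c$ separators near a given terminal is $2^{O(c)}$ (an important-separator-type bound), and compounding this over the $O(c)$ connectivity levels yields $2^{O(c^2)}$. With $O(k)$ recursion nodes and polynomial work per node after the enumeration factor, the running time is of the claimed form $O(2^{O(c^2)}\cdot k^2\cdot m^2)$.
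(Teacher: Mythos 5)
Your route---split the graph along a minimum Steiner cut, recurse on both sides with fresh pendant terminals, and re-glue---is genuinely different from the paper's, which never cuts the graph apart; instead it finds \emph{connectivity-$c$ linked} subsets of $S$ and contracts them in place, so no composition lemma is ever needed. The linked-set framework is the paper's central definition, and the recursion does not split on arbitrary cheap Steiner cuts but on \emph{violating cuts}, i.e.\ partitions $(A,B)$ of the current cluster with $|E(A,B)|=\ell$ strictly less than both $|\partial(A)\cap\partial(X)|=k_1$ and $|\partial(B)\cap\partial(X)|=k_2$ (and less than $c$). That strict inequality $k_1>\ell$ is exactly the handle needed to prove the size recurrence $N_q(k)\le 3^{q-2}(k-2(q-1))$: either both children have large boundary and you pay $2\ell<2(q-1)$ extra terminals that the $-2(q-1)$ term absorbs, or one child has small boundary and you drop to $q-1$ there. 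In addition, when $|\partial(X)|\le 2q-1$ the paper lowers $q$, which is where the extra $\times 3$ per level comes from.

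This is precisely where your proof has a gap, which you identify yourself: you need a cheap-cut splitting rule with a provable per-round constant blow-up of the terminal count, and you state that obtaining it is ``the crux of the whole proof'' without providing it. A minimum Steiner cut has no reason to satisfy $k_1>\ell$; it can put almost all terminals on one side and introduce up to $c-1$ new pendants on the small side, so iterating it can multiply the terminal count by an unbounded factor before each piece becomes $c$-connected. Your suggested fix (a Gomory--Hu-style laminar family of all value-$i$ cuts in round $i$) is plausible in spirit but unsubstantiated; the laminarity of all minimum Steiner cuts of a fixed value does not by itself control how many new boundary edges accumulate on a single region across rounds. The composition (glue) lemma is a second unproved step, though it is the smaller of the two obstacles; the paper sidesteps it entirely by only ever contracting. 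Finally, on the running time, the paper's $2^{O(c^2)}$ factor does not come from enumerating important separators near terminals but from a recursive ``constrained cut'' procedure that reduces a two-sided terminal-count requirement to $2^{O(c)}$ one-sided subproblems across a recursion of depth $O(c)$; the important-cuts machinery appears only in the base case of that recursion. Your time analysis would need to be rebuilt on whatever decomposition you ultimately use.
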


Our main theorem shows that the exponential lower bound in $k$ does not apply in the ``low-connectivity'' setting.

\paragraph{Related work:} For special graph classes, better bounds are known. For instance, Krauthgamer and Rika~\cite{krauthgamer2013mimicking} presented an upper bound of $k^2 2^{2k}$ for planar graphs, and this was proved to be almost tight~\cite{karpov2018exponential}. 
When all terminals lie on the same face, the exponential lower bound does not apply and mimicking networks of size $O(k^2)$ are known~\cite{goranci2017improved}.
For bounded treewidth graphs, a the upper bound of $O(k)$ is known~\cite{chaudhuri2000computing}.

If we allow graph $H$ to only approximate the cut size, such an object is known under the name of cut sparsifiers (introduced in~\cite{moitra2009approximation,leighton2010extensions}): For $q \geq 1$, a quality-$q$ cut sparsifier preserves cut values between terminals up to a factor of $q$. (in this language, a mimicking network is simply a quality-$1$ cut sparsifier.)
Please refer to~\cite{englert2014vertex,andoni2014towards} and references therein for discussions on cut sparsifiers and most recent results.

\section{Preliminaries} 

For simplicity,  we view any capacitated graph $(G,w)$ as a multi-graph $G$ obtained by making $\min(w(e),c)$ copies of parallel edges of $e$.
Hence, we will only be dealing with uncapacitated,  multi-graphs from now on.   
For any graph $G$ and two disjoint subsets $A,B \subseteq V(G)$, denote by $E_G(A,B)$ the edges with one endpoint in $A$ and the other one in $B$, while, by $\mincut_G(A,B)$, the value of the minimum cut separating the sets $A$ from $B$.
If either set is empty, the mincut has a value of $0$.
We also need the notion of thresholded minimum cut; that is, for an integer $C$, denote by $\mincut_G^c(A,B) = \min \{\mincut_G(A,B), c \}$.

\paragraph{Bounded-Connectivity Mimicking Network:} For any graph $G$ and terminal set $\tset$, we say that graph $H$ is a \emph{connectivity-$c$ mimicking network} for $G$ if the following holds:
\begin{itemize} 
\item $V(H)$ contains at least one copy of each terminal in $\tset$. 
\item For any pair of disjoint subsets $A, B \subseteq \tset$ of terminals, the thresholded minimum cuts are preserved in $H$, i.e.  
\[ \mincut_H^c(A, B) = \mincut^c_G(A,B) \] 
\end{itemize}

We will assume that every terminal has degree exactly one. 
This assumption can be made by creating $c$ auxiliary vertices for each terminal $t \in \tset$, and connecting them to $t$; each of these auxiliary vertices becomes a new terminal. 
Notice that this increases the number of terminals by a factor of $c$, and the
bounds for the size of the connectivity-$c$ mimicking network correspondingly. %
From now on, we assume an input  graph $G= (S \cup \tset, E)$ where each node in $\tset$ has degree one and is attached to the set $S$. 
We interchangeably refer to terminals as either (1) the nodes in $\tset$ or (2) the edges connecting $\tset$ to $S$.

For a set of vertices $X \subseteq S$, we use the notation $\partial(X)$ to denote the set of boundary edges $E_G(X, V(G)-X)$. 
By definition, $\partial(S)= \tset$. 
Notice that for $X \subseteq S$, it could be that $\partial(X)$ contains edges that are not in $\partial(S)$.

\paragraph{Contraction-based mimicking network:} We are interested in mimicking networks with a specific structure.
Let $X \subseteq S$. Denote by $G/X$ the graph $G$ obtained after contracting every edge in $G[X]$. 
In particular, given an input $G=(S \cup \partial(S), E)$, our mimicking network is always obtained by contracting disjoint subsets $S_1,\ldots, S_{\ell} \subseteq S$.

\paragraph{Well-linked sets:} A standard tool for studying flows and cuts is the notion of well-linkedness.
We extend this notion so as to capture our bounded connectivity setting.   
A set $X \subseteq S$ is said to be a \emph{connectivity-$c$ linked set} in $G$ if for every pair of disjoint sets $A, B \subseteq X$, we have that:  
\[|E_G(A,B)| \geq \min ( |\partial(A) \cap \partial(X)|, |\partial(B)\cap \partial(X)|, c) \] 
% \daniel{I am not sure if $A \cup B = X$ is sufficient for our purpose, we might need disjoint $A, B \subseteq X$. Also, why not make it $A, B; \subseteq \partial(X)$?}

When $X$ is not connectivity-$c$ linked, a cut $(A,B)$ that violates the linkedness condition is referred to as a {\em violating cut} for $X$. 
A low-connectivity linked set is desirable for us since it can be contracted without changing the connectivity, as formalized in the following lemma.

\begin{lemma}
Let $X$ be a connectivity-$c$ linked set in $G$. Then $G/X$ is a connectivity-$c$ mimicking network for $G$. 
\end{lemma}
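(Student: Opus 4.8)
The plan is to show that contracting a connectivity-$c$ linked set $X$ changes no thresholded mincut between terminal subsets. Since $G/X$ is obtained from $G$ by contracting edges, every cut in $G/X$ corresponds to a cut in $G$, so $\mincut^c_{G/X}(A,B) \ge \mincut^c_G(A,B)$ for all disjoint $A,B\subseteq\tset$ is immediate (contraction can only increase mincuts). The real content is the reverse inequality: contracting $X$ does not destroy any small cut. So fix disjoint $A, B\subseteq\tset$ and let $F$ be a minimum cut separating $A$ from $B$ in $G$ with $|F| = \mincut_G(A,B)$; we may assume $|F| \le c$, since otherwise $\mincut^c_G(A,B)=c$ and the reverse inequality is trivial once we know $\mincut^c_{G/X}(A,B)$ cannot drop below $c$ — which again follows because every $G/X$-cut lifts to a $G$-cut of the same size. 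Write the cut as a vertex bipartition $(P, Q)$ of $V(G)$ with $A\subseteq P$, $B\subseteq Q$, and $F = E_G(P,Q)$.

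The key step is to modify $(P,Q)$ into another cut of no larger size that does not split $X$, so that it survives contraction. Set $A' = X\cap P$ and $B' = X\cap Q$. If one of these is empty we are already done, so assume both are nonempty; they form a partition of $X$ into two nonempty disjoint sets. Apply the connectivity-$c$ linkedness of $X$ to the pair $(A', B')$: it gives
\[
|E_G(A',B')| \ge \min\big(|\partial(A')\cap\partial(X)|,\ |\partial(B')\cap\partial(X)|,\ c\big).
\]
Now I want to argue that moving all of $X$ to one side — say forming $P'' = P\setminus X$ and $Q'' = Q\cup X$ — does not increase the number of cut edges, provided we move it to the "cheaper" side. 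The edges whose cut-status changes are exactly the edges incident to $X$: edges inside $E_G(A',B')$ stop being cut (a gain of $|E_G(A',B')|$), while edges from $A'$ to $P\setminus X$ become cut and edges from $B'$ to $Q\setminus X$ become uncut. A clean way to bound this is to observe that the edges of $F$ incident to $A'$ that go to $P\setminus X$ are precisely those edges of $\partial(A')$ that lie in $F$; and since every edge of $\partial(A')$ either goes to $B'\subseteq Q$ (hence is in $F$) or goes outside $X$, one gets that the number of newly-cut edges created by moving $A'$ across is at most $|\partial(A')\setminus E_G(A',B')| = |\partial(A')| - |E_G(A',B')|$, while the saving is at least $|E_G(A',B')|$ from the $A'$–$B'$ edges, minus whatever $B'$-side edges we lose. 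Choosing to absorb $X$ into the side containing the smaller of $|\partial(A')\cap\partial(X)|$ and $|\partial(B')\cap\partial(X)|$, and using the linkedness inequality to say $|E_G(A',B')|$ is at least that smaller value (or at least $c\ge|F|$, in which case $F$ could not have been a cut of size $<c$ separating anything that splits $X$ nontrivially), one concludes that the resulting bipartition $(P'', Q'')$ has $|E_G(P'',Q'')| \le |F|$ and does not split $X$. Since $A\subseteq P''$ still (terminals have degree one and are not in $X\subseteq S$) and $B\subseteq Q''$, this is a valid $A$–$B$ cut of size $\le \mincut_G(A,B)$ that descends to a cut in $G/X$ of the same size. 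Hence $\mincut_{G/X}(A,B)\le \mincut_G(A,B)$, and combined with the easy direction and the threshold, $\mincut^c_{G/X}(A,B) = \mincut^c_G(A,B)$.

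The main obstacle I anticipate is the bookkeeping in the uncrossing step: one must be careful that "moving $X$ to one side" is charged correctly, i.e. that the edges gained are really bounded by $|\partial(B')\cap\partial(X)|$ (or symmetrically) and not by all of $\partial(B')$, which requires noting that edges of $\partial(B')$ going to other parts of $X$ don't exist (they'd be in $E_G(A',B')$ or internal) and edges going to $Q\setminus X$ were already not cut. Making this precise — ideally via a short submodularity/uncrossing argument on the cut function $X\mapsto |\partial(X)|$ restricted appropriately, rather than edge-by-edge casework — is the crux; the rest is the routine observation that contraction only inflates mincuts and that the threshold $c$ neutralizes the case $|F|\ge c$.
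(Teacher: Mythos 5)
Your proof takes the dual route to the paper's: you uncross a minimum \emph{cut} so that it no longer splits $X$, whereas the paper works on the flow side --- starting from $\ell$ edge-disjoint $X'_A$--$X'_B$ paths in $G/X$ and rerouting those that pass through the contracted vertex $v_X$ via a system of edge-disjoint paths from $E_{in}$ to $E_{out}$ inside $G[X]$, whose existence is the Menger dual of the linkedness inequality you invoke. Both routes are sound and of comparable length; the cut version avoids path-stitching, while the flow version avoids the case analysis over which side of the cut absorbs $X$. Your bookkeeping step, though, contains local slips that you half-anticipate and should fix. With $X$ absorbed into $Q$ (so $P'' = P\setminus X$, $Q''=Q\cup X$), edges from $B'$ to $Q\setminus X$ were \emph{already} uncut, so ``become uncut'' is wrong; and ``the edges of $F$ incident to $A'$ that go to $P\setminus X$'' is an empty set, since $F$-edges incident to $A'\subseteq P$ have their other endpoint in $Q$. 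The clean accounting is: the swap removes $E_G(A',B')$ and $E_G(A', Q\setminus X)$ from the cutset and adds only $E_G(A', P\setminus X)$; as $E_G(A',P\setminus X)\subseteq\partial(A')\cap\partial(X)$, and linkedness gives $|E_G(A',B')|\geq|\partial(A')\cap\partial(X)|$ in the case $|\partial(A')\cap\partial(X)|\leq\min\bigl(|\partial(B')\cap\partial(X)|,c\bigr)$, the net change is nonpositive (the remaining cases are handled as you sketch). Also, your parenthetical ``cannot drop below $c$'' is backwards: for the $\leq$ direction you only need $\mincut^c_{G/X}(A,B)\leq c$, which holds by definition of the threshold. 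With these corrections your uncrossing argument is a correct alternative to the paper's flow-based one.
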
 

\begin{proof} 
Let $G' = G/X$ be the contracted graph and $v_{X}$ be the contracted vertex in $G'$ that is obtained by contracting $G[X]$. Since we do not contract the terminals, it suffices to show that, for any two subsets $\Xset_A, \Xset_B \subseteq \tset$, we have $\mincut^c_{G'}(\Xset_A, \Xset_B) = \mincut^c_G(\Xset_A, \Xset_B)$. %

Starting with $\mincut^c_{G'}(\Xset_A, \Xset_B) \geq \mincut^c_G(\Xset_A,
\Xset_B)$, we can see that all the edges in $G'$ are also in $G$, which
implies that any cutset in $G'$ is also in $G$. We conclude that the size of
the minimum cut in $G$ must be at most the size of the minimum cut in $G'$,
for any pair of terminals sets. In general, we can say that contraction of
edges only ever increases connectivity, which implies the above.

Let us now show the converse, that is, $\mincut^c_{G'}(\Xset_A, \Xset_B) \leq
\mincut^c_G(\Xset_A, \Xset_B)$. %
Since we are in the unweighted setting, it is sufficient to consider 
$|\Xset_A|, |\Xset_B| \leq c$. %
Suppose that $\mincut^C_{G'}(\Xset_A, \Xset_B) = \ell \leq c$.  
Then there must be $\ell$ disjoint paths connecting $X'_A \subseteq X_A$ to $X'_B \subseteq X_B$ such that $|X'_A| = |X'_B| = \ell$. Denote the set of such paths in $G'$ by $\pset'$. 

We will construct the set of edge-disjoint paths $\pset$ in $G$ connecting $X'_A$ to $X'_B$, thus implying that $\mincut^c_G(X_A, X_B) \geq \ell$. 
We write $\pset'$ as $\pset' = \pset'_1 \cup \pset'_2$ where $\pset'_1$ are the paths that do not go through the contracted vertex $v_{X}$. 
We add the paths in $\pset'_1$ to $\pset$, since they correspond to edge disjoint paths in the original graph $G$.  
For paths in $\pset'_2$, we will need to specify their behavior inside the contracted set $G[X]$. Let $E_{in} \subseteq \partial(X)$ be the set of boundary edges of $X$ that paths in $\pset'_2$ use to enter $v_{X}$; analogously, we define $E_{out} \subseteq \partial(X)$. Notice that $|E_{in}| = |E_{out}| = |\pset'_2| \leq c$. 
Since $X$ is connectivity-$c$ linked, there is a collection of disjoint paths $\pset_{X}$ connecting $E_{in}$ to $E_{out}$.  
We stitch the three parts of the paths in $\pset'_2 \cup \pset_{X}$ together to add to $\pset$: (1) a subpath of some path $P \in \pset'_2$ from a node in $X'_A$ to $E_{in}$, (2) a path in $\pset_{X}$ from such edge in $E_{in}$ to an edge in $E_{out}$, and (3) a subpath of some path $Q \in \pset'_2$ from such an edge in $E_{out}$ to a node in $X'_B$. %
We remark that, even though $\pset$ contains $\ell$ edge-disjoint paths connecting $X'_A$ to $X'_B$, the pairing induced by $\pset$ and $\pset'$ may be different.
% Notice that the operation above may change the pairing of $X'_A$ to $X'_B$, meaning that a vertex $v \in X'_A$ may be connected to a different vertex in $X'_B$ by $\pset$ and $\pset'$. However, 
\end{proof}

\section{Constructing Bounded Connectivity Mimicking Network}

In this section, we present an algorithm that produces a connectivity-$c$ mimicking network of size at most $3^c \cdot k$.
We show how to efficiently implement our algorithm in the next section.

\subsection{Warmup: Connectivity Two}

Let $G[S] \cup \partial(S)$ be a graph where $|\partial(S)| = k$ and $G[S]$ is
connected. %
In this section, we familiarize readers with the arguments we use by showing how to construct a connectivity-$2$ mimicking network of size at most $2k- 2$.

\begin{lemma}
\label{lem:base case}  
If $k \geq 3$, set $S$ can be decomposed into at most $k-2$ sets that are connectivity-$2$ linked.
\end{lemma}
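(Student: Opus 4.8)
The plan is to prove the lemma by strong induction on $k = |\partial(S)|$: at each step we either stop (when $S$ is already connectivity-$2$ linked) or split $S$ along a single bridge of $G[S]$ and recurse on the two pieces, each viewed as a smaller instance of the same problem.

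The easy case is when $S$ itself is connectivity-$2$ linked: then, since $k\ge 3$, the one-part decomposition $\{S\}$ already has at most $k-2$ parts. So assume $S$ is not connectivity-$2$ linked and fix a pair of disjoint sets violating the condition. Writing $b(A) = |\partial(A)\cap\partial(S)|$ for the number of terminal edges attached to $A$, and noting $b(A)+b(B)=k$ whenever $A\sqcup B = S$ (each terminal edge meets exactly one vertex of $S$), the first real step is to clean this violating pair into a bipartition realized by a bridge. The number of direct edges between the two sides upper-bounds the minimum edge cut between them inside $G[S]$, which by connectedness of $G[S]$ is at least $1$ (both sides are nonempty, as each carries a terminal); so the violating inequality forces this minimum cut to equal $1$ and the smaller of the two terminal counts to be at least $2$. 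The separating edge $e$ is then a bridge of $G[S]$; it induces a bipartition $S = A\sqcup B$ with $E_G(A,B)=\{e\}$ and $b(A),b(B)\ge 2$, and $G[A]$, $G[B]$ are exactly the two components of $G[S]-e$, hence connected. This also disposes of the base case $k=3$: there $b(A),b(B)\ge 2$ would give $k\ge 4$, a contradiction, so $S$ must be connectivity-$2$ linked and $\{S\}$ works.

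For the recursion, from the $A$-side I would build the instance $G_A$ obtained from $G[A]$ by keeping the $b(A)$ terminals attached to $A$ and adding one new pendant terminal at the endpoint of $e$ inside $A$; this is a connected-core instance with $b(A)+1$ terminals and $3\le b(A)+1\le k-1$, so the induction hypothesis applies (and symmetrically to $G_B$). The crux of the write-up is the observation that connectivity-$2$ linkedness of a set $Y\subseteq A$ depends only on $G[Y]$ and on the boundary profile of $Y$ (which vertices of $Y$ the boundary edges attach to, with multiplicity), and that this profile is identical in $G$ and in $G_A$: the only $A$--$B$ edge of $G$ is the bridge $e$, which $G_A$ replaces by a pendant at the same vertex, while edges internal to $A$ are untouched. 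Hence a subset of $A$ is connectivity-$2$ linked in $G_A$ iff it is connectivity-$2$ linked in $G$. Granting this, the induction hypothesis yields partitions of $A$ and of $B$ into at most $(b(A)+1)-2 = b(A)-1$ and $b(B)-1$ connectivity-$2$ linked sets; their union is a partition of $S$ into at most $(b(A)-1)+(b(B)-1)=k-2$ such sets, completing the induction.

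The substantive steps are the reduction of an arbitrary violating cut to a single bridge with at least two terminals on each side — this is where connectedness of $G[S]$ is essential, and it is what guarantees that the two subinstances are again connected-core instances with at least three terminals — and the invariance of connectivity-$2$ linkedness under replacing that bridge by a pendant terminal. The remainder (the identity $b(A)+b(B)=k$, the bound $b(A)+1\le k-1$ that makes $k$ strictly decrease, and the final summation) is routine bookkeeping. I expect the later generalization to arbitrary $c$ to run on the same template, with ``bridge'' replaced by ``edge cut of size less than $c$ inside $G[S]$'', which introduces up to $c$ new pendant terminals per side at each split and thereby degrades the part count from linear in $k$ to exponential in $c$.
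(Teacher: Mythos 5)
Your proof is correct and takes a genuinely different route from the paper. The paper's proof is structural: it contracts the $2$-connected components of $G[S]$ into single vertices (these are connectivity-$2$ linked by one of the paper's observations), then contracts non-terminal degree-$2$ vertices into their neighbors, and finally counts the internal nodes of the resulting tree via a token-passing argument, concluding there are at most $k-2$ of them; each internal node corresponds to one connectivity-$2$ linked cluster. You instead run a strong induction on $k$: if $S$ is not already connectivity-$2$ linked, you extract a violating bipartition, argue it must consist of exactly one crossing edge (a bridge of $G[S]$) with at least two terminals on each side, and recurse on the two pieces after replacing the bridge by a pendant terminal on each side. Your key observation --- that connectivity-$2$ linkedness of a set $Y \subseteq A$ depends only on $G[Y]$ and on the boundary profile of $Y$, and that this profile is unchanged by the pendant-for-bridge substitution --- is correct and is exactly what makes the recursion legitimate; the additive bookkeeping $b(A)-1$ plus $b(B)-1$ equals $k-2$ then closes the induction. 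This is, as you remark, an instantiation of the paper's general {\sc MarkClusters} recursion specialized to $c=2$ but with a tight count, whereas the paper gives a bespoke block-cut-tree argument for this base case and then plugs the lemma into {\sc MarkClusters}. One small clarification your write-up would benefit from: the notion of a violating cut should be read as a \emph{bipartition} $A \sqcup B = S$ (this is how the paper and its algorithm use the term, and is the only reading under which the linkedness definition is sensible). Under that reading your ``cleaning step'' is immediate: the violating inequality $|E_G(A,B)| < \min(b(A),b(B),2)$ together with $|E_G(A,B)| \ge 1$ (from connectedness of $G[S]$ and both sides being nonempty) directly forces $|E_G(A,B)| = 1$ and $b(A), b(B) \ge 2$, and the detour through an auxiliary minimum cut, which is where your phrasing is loosest, is unnecessary.
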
  

As discussed in the previous section, contracting such clusters gives us the desired mimicking network containing $2w -2$ vertices for all $k\geq 3$. 
The proof relies on two simple observations: 

\begin{observation}
Consider a graph $G[S] \cup \partial(S)$. 
Let $X \subseteq S$ be a $2$-connected component in $G[S]$. Then $X$ is connectivity-$2$ linked. 
\end{observation}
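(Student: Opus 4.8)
The plan is to show directly that any 2-connected component $X$ of $G[S]$ satisfies the connectivity-$2$ linkedness condition: for every pair of disjoint $A, B \subseteq X$, we need $|E_G(A,B)| \geq \min(|\partial(A) \cap \partial(X)|, |\partial(B) \cap \partial(X)|, 2)$. Since the threshold on the right-hand side is $2$, it suffices to handle three cases according to the value of that minimum, namely $0$, $1$, and $2$. The case where the minimum is $0$ is trivial, since $|E_G(A,B)| \geq 0$ always. If the minimum is $1$, then both $\partial(A) \cap \partial(X)$ and $\partial(B) \cap \partial(X)$ are nonempty; in particular, both $A$ and $B$ are nonempty and each touches $\partial(X)$. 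Because $X$ is a connected component of $G[S]$, there is a path inside $G[X]$ from $A$ to $B$, which already shows $|E_G(A,B)| \geq 1$.

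The main case is when the minimum equals $2$, so each of $\partial(A) \cap \partial(X)$ and $\partial(B) \cap \partial(X)$ has size at least $2$; we must produce two edge-disjoint (in fact, we will get vertex-disjoint) paths between $A$ and $B$ within $G[X]$. Here I would use 2-connectedness of $X$ together with a fan-type / Menger-type argument. Contract $A$ to a single vertex $a$ and $B$ to a single vertex $b$ inside $G[X]$; since $X$ is 2-vertex-connected and $A, B$ are disjoint vertex subsets, the contracted graph remains 2-connected (contraction of a connected vertex set in a 2-connected graph cannot drop connectivity below 2, provided at least three vertices remain — which holds since $\partial(A)\cap\partial(X)$ and $\partial(B)\cap\partial(X)$ being large forces $|X|$ to be large enough, or one argues more carefully that if $|X|=2$ then $A,B$ are singletons joined by an edge and the two boundary edges give the count directly). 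By Menger's theorem there are then two internally-vertex-disjoint $a$–$b$ paths, which pull back to two paths in $G[X]$ from $A$ to $B$ that share no vertex of $X \setminus (A \cup B)$; these contribute two distinct edges to $E_G(A,B)$, giving $|E_G(A,B)| \geq 2$.

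The step I expect to be the most delicate is the boundary-case bookkeeping in the "$=2$" case: ensuring that the two paths really yield \emph{two distinct edges} crossing from $A$ to $B$ rather than reusing one edge, and handling degenerate configurations (e.g. $|X|$ small, or $A$ and $B$ exhausting $X$, or $G[X]$ being a single edge or a triangle). The clean way around this is to first dispose of the small cases by hand — if $A \cup B = X$ then $E_G(A,B)$ is just the edge cut of $X$ separating $A$ from $B$, and 2-connectedness of $X$ directly lower-bounds this by $2$ — and otherwise invoke Menger on the contracted graph as above. I do not anticipate needing anything beyond Menger's theorem and the definition of 2-connected component.
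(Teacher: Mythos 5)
The route you chose is substantially over-engineered and, in its main step, actually incorrect as stated. The paper itself offers no proof of this observation; it is one of the ``two simple observations'' feeding the warmup lemma, and the intended justification is a one-liner: since $G[X]$ is $2$-connected (hence $2$-edge-connected), every non-trivial bipartition $A\cup B=X$ has $|E_G(A,B)|\ge 2$, which trivially dominates the right-hand side of the linkedness condition.

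The error in your write-up is the treatment of the case $A\cup B\subsetneq X$. You contract $A$ and $B$ and invoke Menger to obtain two internally disjoint $A$--$B$ paths, then claim these ``contribute two distinct edges to $E_G(A,B)$.'' They do not: $E_G(A,B)$ counts only edges with one endpoint in $A$ and the other in $B$, so a path routed through $X\setminus(A\cup B)$ contributes nothing to that set. Indeed, if one allowed arbitrary disjoint $A,B\subseteq X$, the observation would simply be false --- take $X$ to be a triangle $\{u,v,w\}$ with $u$ and $v$ each attached to two terminals and set $A=\{u\}$, $B=\{v\}$; then $|E_G(A,B)|=1$ while $|\partial(A)\cap\partial(X)|=|\partial(B)\cap\partial(X)|=2$. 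The definition of connectivity-$c$ linked is therefore to be read, as its later use in the paper confirms (the procedure searches for ``a violating cut $X=A\cup B$,'' and the counting in the main theorem accounts for $\partial(A)$ as $\partial(X)$-edges plus cut edges), as quantifying over \emph{bipartitions} of $X$. Once one restricts to $A\cup B=X$, the remark you relegate to ``disposing of the small cases by hand'' --- namely that $E_G(A,B)$ is an edge cut of a $2$-(edge-)connected graph and hence has size at least $2$ --- is not a corner case; it is the entire argument, and the Menger-plus-contraction apparatus is both unnecessary and, for the non-bipartition case you aimed it at, unsound.
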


\begin{observation}
Consider a graph $G[S] \cup \partial(S)$. 
Let $uv \in E(G)$ where $u, v\in S$ and $deg_G(u) = 2$. Then $\{u,v\}$ is connectivity-$2$ linked. 
\end{observation}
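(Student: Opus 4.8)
The plan is to verify the defining inequality of connectivity-$2$ linkedness directly for $X = \{u,v\}$. Recall that we must show, for every pair of disjoint $A, B \subseteq X$, that $|E_G(A,B)| \geq \min(|\partial(A)\cap\partial(X)|, |\partial(B)\cap\partial(X)|, 2)$. If $A = \emptyset$ or $B = \emptyset$ the right-hand side contains the term $0$ and the inequality is trivial, so the only case to check is $\{A,B\} = \{\{u\},\{v\}\}$; since the $\min$ is symmetric in its first two arguments we may take $A = \{u\}$, $B = \{v\}$.

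First I would record the structure of the edges at $u$. Since $uv \in E(G)$ we have $u \neq v$, and in particular $u$ carries no self-loop (a self-loop would already use up both units of $\deg_G(u) = 2$). Writing $t := |E_G(u,v)| \ge 1$ for the multiplicity of the edge $uv$, the remaining $2 - t$ edges incident to $u$ are the only ones incident to $u$ that can leave $X = \{u,v\}$, whence $|\partial(\{u\}) \cap \partial(X)| = 2 - t \le 1$.

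Then I would simply compare the two sides. On the one hand $|E_G(\{u\},\{v\})| = t$. On the other hand $\min(|\partial(\{u\})\cap\partial(X)|, |\partial(\{v\})\cap\partial(X)|, 2) \le |\partial(\{u\})\cap\partial(X)| = 2 - t$. So it suffices to observe that $t \ge 2 - t$, i.e.\ $t \ge 1$, which holds since $uv \in E(G)$. This closes the only nontrivial case, so $\{u,v\}$ is connectivity-$2$ linked.

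There is essentially no obstacle here: the statement is a one-line consequence of the definition once one notices that the degree-$2$ vertex $u$ can send at most one edge across the boundary of $\{u,v\}$. The only thing to be careful about is the multigraph bookkeeping — counting parallel $uv$-edges correctly and excluding self-loops at $u$ — and recording that the two orientations of the pair $(A,B)$ give the same bound because the $\min$ is symmetric.
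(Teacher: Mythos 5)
Your proof is correct: the paper states this observation without proof, and your direct verification of the linkedness inequality for the single nontrivial pair $A=\{u\}$, $B=\{v\}$ is exactly the intended argument. The key point — that the degree-$2$ vertex $u$ sends at most $2-t$ edges out of $\{u,v\}$ when it has $t\geq 1$ parallel edges to $v$, so $\min(|\partial(A)\cap\partial(X)|,\,|\partial(B)\cap\partial(X)|,\,2)\leq 2-t\leq t=|E_G(A,B)|$ — is handled cleanly, including the multigraph bookkeeping.
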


We can now prove the lemma.

\begin{proof}
There are two steps. 
In the first step, we contract all $2$-connected components in $G[S]$ into nodes.   
The graph $G'$ obtained after contraction is a forest. 
Next, whenever there is an edge $uv \in E(G')$ where both $u$ and $v$ are not terminals and $u$ is of degree two, we contract edge $uv$.

We are left with the forest $G''$ such that leafs correspond to terminals, and each internal node has degree at least $3$, except for when it is only adjacent to terminals.
A simple counting argument implies that the number of internal nodes is at most $k-2$: Each leaf receives one token; it sends its token to the parent; each internal node receives at least two tokens from its children, and it keeps one for itself and passes along the rest; in the end, this process leaves at least one token on each internal node and at least $3$ tokens at the root since the root must have degree at least $3$; we conclude that the number of internal nodes is at most $k-2$\footnote{We remark that this claim does not hold when $k=2$. In such case, we have one internal node.}. 

Notice that each internal node in $G''$ corresponds to a connectivity-$2$ linked set $W \subseteq S$. Therefore, this procedure in fact computes a collection of disjoint connectivity-$2$ linked clusters and contracts them. 
\end{proof}

\subsection{General Case}  

In this section, we generalize the arguments above to show that we can decompose $S$ into a relatively small number of connectivity-$q$ linked clusters. 
Our recursive procedure takes set $X \subseteq S$ and connectivity parameter $q$ and is supposed to mark connectivity-$q$ linked clusters inside $X$.  
In particular, the procedure {\sc MarkClusters}($X$, $q$) performs the following steps: 

\begin{itemize}
\item (Base:) If $X$ is connectivity-$q$ linked, mark $X$ as a tentative cluster and return. 
Or if $q = 2$, then we use the procedure described in Lemma~\ref{lem:base case} to mark tentative clusters in $X$ and return.

\item (Inductive:) if $|\partial(X)| \leq 2q-1$, call {\sc MarkClusters}($X$, $q-1$). Else, we find a violating cut $X = A \cup B$ and make recursive calls to {\sc MarkClusters}($A, q$) and {\sc MarkClusters}($B, q$).  

\end{itemize}  

The following observation follows trivially. 

\begin{observation}
When the procedure {\sc MarkClusters}($X,q$) returns, the clusters in $X$ form a partition of $X$. 
\end{observation}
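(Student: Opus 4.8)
The plan is to prove this by a straightforward induction on the recursion, using the lexicographic order on the pair $(q, |X|)$ as the well-founded measure: I will verify that this measure strictly decreases along every recursive call made by \textsc{MarkClusters}, which simultaneously shows the recursion terminates and justifies the induction. The call \textsc{MarkClusters}($X, q-1$) in the first inductive branch decreases $q$, and the calls \textsc{MarkClusters}($A, q$) and \textsc{MarkClusters}($B, q$) in the second branch decrease $|X|$ once we know $A$ and $B$ are both nonempty (argued below).

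For the base cases: if $X$ is connectivity-$q$ linked, the procedure marks the single set $X$ and returns, and $\{X\}$ is a partition of $X$. If instead $q = 2$ and $X$ is not linked, the procedure runs the decomposition of Lemma~\ref{lem:base case} with $X$ in the role of $S$; that decomposition marks a family of connectivity-$2$ linked clusters that are pairwise disjoint and cover $X$, hence a partition of $X$.

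For the inductive step there are two cases. If $|\partial(X)| \le 2q-1$, then \textsc{MarkClusters}($X,q$) marks exactly the clusters marked by \textsc{MarkClusters}($X, q-1$), which by the inductive hypothesis partition $X$. Otherwise the procedure picks a violating cut written as $X = A \cup B$ with $A \cap B = \emptyset$; since for a violating cut the quantity $\min(|\partial(A) \cap \partial(X)|,\, |\partial(B) \cap \partial(X)|,\, c)$ is at least $1$, both $\partial(A) \cap \partial(X)$ and $\partial(B) \cap \partial(X)$ are nonempty, so $A$ and $B$ are nonempty and $|A|, |B| < |X|$. By the inductive hypothesis the clusters marked by \textsc{MarkClusters}($A, q$) partition $A$ and those marked by \textsc{MarkClusters}($B, q$) partition $B$; since $A$ and $B$ are disjoint with $A \cup B = X$, the union of the two families is a partition of $X$, and these are exactly the clusters marked during \textsc{MarkClusters}($X,q$).

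I do not expect a real obstacle: the statement is essentially bookkeeping over the recursion tree. The only points needing justification are the ones that keep the induction well-founded — that the $q=2$ base case fires before $q$ could drop below $2$, and that a violating cut $X = A \cup B$ has both sides nonempty so that $|X|$ strictly decreases in the second branch — and both are dispatched above.
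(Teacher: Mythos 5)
Your proof is correct: the paper states this observation without proof (calling it trivial), and your structural induction on the lexicographic measure $(q,|X|)$ is exactly the argument it implicitly relies on. The one point genuinely worth spelling out --- that a violating cut forces both $\partial(A)\cap\partial(X)$ and $\partial(B)\cap\partial(X)$ to be nonempty, so both sides are nonempty and the recursion is well-founded --- you handle correctly.
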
 

It is also easy to see that the procedure always terminates. 

\begin{lemma}
Assume that a violating cut can be found in time $f(k,c) {O}(m)$. 
Then {\sc MarkClusters}($X,c$) terminates in time $ O(f(k,c)) {O}(m^2)$. 
\end{lemma}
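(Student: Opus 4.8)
The plan is to bound the total running time by charging the work done across all recursive calls against the recursion tree of \textsc{MarkClusters}. First I would argue that the recursion tree has depth $O(c)$: the connectivity parameter $q$ starts at $c$ and is only ever decreased (by the inductive rule when $|\partial(X)| \le 2q-1$), and between consecutive decreases of $q$ we only split $X$ into $A \cup B$ along a violating cut, which strictly reduces $|X|$; so along any root-to-leaf path the value of $q$ can drop at most $c-2$ times, and since $q$ governs the termination condition this already shows termination, reproving the preceding remark.

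Next I would count the number of nodes of the recursion tree. The key observation is that every internal node of the recursion tree that is not merely a ``$q \to q-1$'' relabeling corresponds to splitting a set $X$ into two nonempty parts $A$ and $B$, so these ``splitting'' nodes form a binary tree whose leaves are subsets of $S$ that are contracted into clusters. Since the number of final clusters is at most $3^c \cdot k$ (by the size bound being established for the mimicking network in this section, Theorem~\ref{thm:connc:mimnet}), the splitting tree has $O(3^c \cdot k)$ leaves and hence $O(3^c \cdot k)$ nodes; inserting the $q$-decrementing nodes multiplies this by at most $c$, so the recursion tree has $\poly(k,c) \cdot 3^c$ nodes in total. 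Crucially, this is bounded independently of $m$ except through a $\poly(m)$ factor, which is what we want.

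Then I would bound the work per recursion-tree node. At each call, the dominant cost is finding a violating cut, which by assumption takes $f(k,c)\cdot O(m)$ time; checking the base case (is $X$ connectivity-$c$ linked?) is done by the same violating-cut subroutine, and the $q=2$ base case runs Lemma~\ref{lem:base case}'s procedure, which is a constant number of connected-component and contraction computations, i.e.\ $O(m)$. Since the vertex sets handled at different nodes of the same level of the splitting tree are disjoint, the total size of all instances at one level is $O(m)$, so summing over the $O(f(k,c))$ levels' worth of work — more precisely, over all $\poly(k,c)\cdot 3^c = O(f(k,c))$ nodes, each costing $f(k,c)\cdot O(m)$ — gives $O(f(k,c)) \cdot O(m^2)$ as claimed. (The $m^2$ rather than $m$ comes from the fact that a single violating-cut computation already costs $\Theta(m)$ even on a small subset, and there can be $\Theta(m)$ such computations along the recursion.)

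The main obstacle I anticipate is making the node-count bound fully rigorous: one must be careful that a chain of ``$q \to q-1$'' calls with no intervening split does not blow up the tree, and that every leaf of the recursion really does produce at least one cluster (so that the $3^c \cdot k$ cluster bound can be invoked to bound the number of leaves). A clean way around this is to observe that \textsc{MarkClusters}$(X, q-1)$ is called in place of \textsc{MarkClusters}$(X,q)$ on the \emph{same} set, so such calls can be coalesced: define a reduced tree in which each edge corresponds to a genuine split, note it has at most $3^c\cdot k$ leaves, and then observe each of its nodes carries at most $c$ ``virtual'' $q$-decrement calls attached in a path. With that bookkeeping in place the time bound follows by the level-disjointness argument above.
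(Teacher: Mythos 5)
Your proof has a genuine gap at its core claim. You assert that the recursion tree has depth $O(c)$, but your own explanation refutes this: you correctly observe that $q$ decreases at most $c-2$ times along any root-to-leaf path, and that between consecutive $q$-decrements the splits strictly reduce $|X|$ — but the number of such splits between two $q$-decrements is bounded only by $|X|$ (or by the number of induced edges), which can be $\Theta(m)$. So the depth is $O(c+m) = O(m)$, not $O(c)$. The paper's proof uses exactly this correct depth bound: at every recursive call either $q$ or $|E(G[X])|$ decreases, so the depth is at most $c+m$; by the level-disjointness observation (which you do mention) the work per level is $f(k,c)\cdot O(m)$; and the product gives $O(f(k,c))\cdot O(m^2)$. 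You had the right ingredients but paired the per-level cost with the wrong count of levels.

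Your fallback argument by node count does not close the gap either. You bound the number of recursion-tree nodes by $\poly(k,c)\cdot 3^c$ using the (forward-referenced, though logically independent) $3^c k$ cluster bound, and then multiply by a per-node cost of $f(k,c)\cdot O(m)$. This yields $\poly(k,c)\,3^c\cdot f(k,c)\,O(m)$, which is not $O(f(k,c))\cdot O(m^2)$: the identification $\poly(k,c)\,3^c = O(f(k,c))$ is unjustified for an abstract subroutine time $f$ (it could be constant), and even if it held you would obtain $O(f(k,c))^2\,O(m)$, not the stated bound. The node-count route can be salvaged by noting that the marked clusters are \emph{disjoint} subsets of $S$, hence at most $n=O(m)$ of them exist, so the tree has $O(cm)$ nodes and the total is $O(c\,f(k,c))\,O(m^2)$ — but this makes the $3^c k$ bound irrelevant. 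The depth-times-level-cost argument in the paper is simpler, self-contained, and avoids the forward reference entirely.
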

\begin{proof}
At each recursive call, either the value of $q$ decreases or the number of edges in the induced subgraph of $X$ decreases. The work done outside the recursive calls involves finding a violating cut in $X$, if there exists one, which takes time  $f(k,c)\tilde{O}(|E_G(X,X)|)$. Hence, the total work done at each level of the recursion tree is $f(k,c)\tilde{O}(m)$, since the subsets on which recursion is called at each level are disjoint. This gives an overall runtime of $f(k,c)\tilde{O}(m^2)$.
\end{proof}

Next, we argue that, when the procedure {\sc MarkClusters}($X,q$) returns, we have at most $3^{q} \cdot |X|$ tentative clusters that are contained in $X$ and each such cluster is connectivity-$q$ linked.
After contracting such clusters, we  obtain the desired weak mimicking network. 

\begin{lemma}
Let $W \subseteq X$ be a tentative cluster marked when calling {\sc MarkClusters}($X,q$). Then $W$ is connectivity-$q$ linked. 
\end{lemma}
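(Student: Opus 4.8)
*Let $W \subseteq X$ be a tentative cluster marked when calling {\sc MarkClusters}($X,q$). Then $W$ is connectivity-$q$ linked.*

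The plan is to prove this by induction on the recursion, following the three ways a tentative cluster $W \subseteq X$ can get marked during the call {\sc MarkClusters}($X,q$).

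Let me think about the structure first. The subtle point is that connectivity-$c$ linkedness of a set $W$ is defined *relative to the ambient graph* — it compares $|E_G(A,B)|$ with $\min(|\partial(A)\cap\partial(W)|, |\partial(B)\cap\partial(W)|, c)$ where $\partial$ is taken in $G$. When we recurse into $A$ after finding a violating cut $X = A\cup B$, the subgraph $G[A]$ has a different boundary $\partial(A)$, and a cluster $W\subseteq A$ marked in the recursive call is "connectivity-$q$ linked" — but with respect to what graph? The key observation I would establish up front is a *locality* fact: whether $W$ is connectivity-$q$ linked depends only on $G[N[W]]$ or really only on the edges inside $W$ together with the multiset of boundary edges $\partial(W)$; in particular it does not matter whether we view $W$ as sitting inside $G$, inside $G[X]$, or inside $G[A]$, as long as every edge leaving $W$ is counted. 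Since the recursion only ever cuts $X$ along a partition $X = A\cup B$ into two pieces and never touches edges internal to a marked cluster $W$ (a cluster is always entirely on one side of any violating cut by the time it is marked — this needs a small argument, using that at the moment $W$ is marked it is a contiguous subset that passed the base test), linkedness of $W$ in the final graph $G$ is equivalent to linkedness of $W$ as tested in the graph where it was marked.

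With that in place, the inductive argument is short. Base cases: if $W$ was marked in the first base case, it passed the explicit test "$X$ is connectivity-$q$ linked" with $W = X$, so we are done; if it was marked via the $q=2$ sub-procedure of Lemma~\ref{lem:base case}, then $W$ is connectivity-$2$ linked, and since $2 \le q$ and $\min(\cdot,\cdot,2) \le \min(\cdot,\cdot,q)$... wait — this is the wrong direction. Here I need to be careful: I claim the $q=2$ base case is only ever invoked with $q$ genuinely equal to $2$ (the recursion decreases $q$ by one at a time and only calls the Lemma~\ref{lem:base case} procedure when the current parameter is $2$), so there is no mismatch. Inductive step, case (a): if $|\partial(X)| \le 2q-1$ and we recursed as {\sc MarkClusters}($X, q-1$), then by the inductive hypothesis $W$ is connectivity-$(q-1)$ linked, i.e. $|E_G(A,B)| \ge \min(|\partial(A)\cap\partial(W)|, |\partial(B)\cap\partial(W)|, q-1)$ for all partitions $A\cup B = W$. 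I must upgrade $q-1$ to $q$. This is the crux: if the minimum is realized by $q-1$ (not by one of the two boundary terms), I need to rule out $|E_G(A,B)| = q-1 < q$ together with both $|\partial(A)\cap\partial(W)|, |\partial(B)\cap\partial(W)| \ge q$. But $|\partial(W)| \le |\partial(X)| \le 2q-1$ — here I use that $W\subseteq X$ so $\partial(W)\cap\partial(X)$ has size at most $2q-1$ and also any edge of $\partial(W)$ not in $\partial(X)$ goes to $X\setminus W$... hmm, this needs the bound $|\partial(W)| \le |\partial(X)|$, which is *not* true in general. Let me reconsider: actually the right statement is that $\partial(A)\cap\partial(W)$ and $\partial(B)\cap\partial(W)$ are disjoint subsets of $\partial(W)$, and every edge of $\partial(W)$ that is in neither must be an edge of $E(A,B)$. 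So $|\partial(W)| \ge |\partial(A)\cap\partial(W)| + |\partial(B)\cap\partial(W)|$, giving $2q \le |\partial(W)|$ in the bad case. I then need $|\partial(W)| \le 2q-1$; since the recursion into smaller $q$ happened precisely because $|\partial(X)| \le 2q-1$, and I'd want $|\partial(W)| \le |\partial(X)|$ — this holds because all of $X$'s recursive descendants down to where $W$ is marked are obtained by cutting, and cutting a set $X$ along $A \cup B$ gives pieces with $|\partial(A)|, |\partial(B)| \le |\partial(X)|$ only if the cut $E(A,B)$ is smaller than... no. I think the clean route is: $|\partial(W)| \le |\partial(X)|$ is false, so instead argue that when $W$ is finally marked (in some descendant call {\sc MarkClusters}($X', q'$) with $q' \le q-1 \le q$), we have $|\partial(W)|$ bounded by $2q'-1 \le 2q-1$ at the relevant step, OR $W$ passed a linkedness test at level $q' $. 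Inductive step, case (b): if a violating cut $X = A\cup B$ was found and $W$ was marked inside the recursive call on $A$ (say), then by induction $W$ is connectivity-$q$ linked as tested in $G[A]$; by the locality observation, since $W \subseteq A$ and all edges leaving $W$ are accounted for, $W$ is connectivity-$q$ linked in $G$.

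The main obstacle I anticipate is exactly the "upgrade from $q-1$ to $q$" in case (a) — making precise why $|\partial(W)| \le 2q-1$ (or $< 2q$) holds for every cluster $W$ marked below a node where $|\partial(X)|\le 2q-1$. The right invariant to maintain is probably: *in any call {\sc MarkClusters}($X,q$), every tentative cluster eventually marked inside $X$ that is not certified by a direct linkedness test at connectivity $\ge q$ satisfies $|\partial(W)| \le 2q-1$*; this should follow because the only way the parameter drops to $q$ from $q+1$ is through the guard $|\partial(\cdot)| \le 2q+1$... actually I'd track that along the recursion path from the $q$-lowering step to the marking of $W$, the boundary only shrinks relative to what is needed, using that a violating cut for parameter $q$ in $X$ has $|E(A,B)| < q$, hence $|\partial(A)|, |\partial(B)| < |\partial(X)| + q \le \ldots$ — I will need to be careful here, and this bookkeeping is where the real work lies. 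Everything else (the two base cases, and case (b) via locality) is routine once the locality observation is stated cleanly.
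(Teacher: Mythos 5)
You correctly reconstruct the skeleton of the paper's induction and, importantly, you put your finger on exactly the step the paper leaves implicit: to upgrade a connectivity-$(q-1)$ linked cluster $W$ (returned by {\sc MarkClusters}($X,q-1$)) to connectivity-$q$ linked, one needs $|\partial(W)| \leq 2q-1$, so that for any partition $A\cup B = W$ the smaller of $|\partial(A)\cap\partial(W)|$ and $|\partial(B)\cap\partial(W)|$ (these two sets partition $\partial(W)$ exactly, not merely by inequality as you write) is at most $q-1$, making $\min(\cdot,\cdot,q)=\min(\cdot,\cdot,q-1)$. This is precisely what the paper is using when it writes ``Since $|\partial(X)| \leq 2q-1$, then $|\partial(A)\cap\partial(W)| \leq q-1$,'' and you are right that it is not justified in the text.

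Where your attempt breaks is in then declaring ``$|\partial(W)| \le |\partial(X)|$ is false'' and looking for a substitute. That invariant is in fact \emph{true} for every cluster $W$ marked inside {\sc MarkClusters}($X,q$) or any of its recursive descendants, and proving it is the missing piece; you were one step from it when you got stuck at ``only if the cut $E(A,B)$ is smaller than\dots''. The point is that the recursion only splits along \emph{violating} cuts: if $(A,B)$ is a violating cut of $Y$ then $|E_G(A,B)| < \min\parenbig{|\partial(A)\cap\partial(Y)|,\, |\partial(B)\cap\partial(Y)|}$, so $|\partial(A)| = |\partial(A)\cap\partial(Y)| + |E_G(A,B)| < |\partial(A)\cap\partial(Y)| + |\partial(B)\cap\partial(Y)| = |\partial(Y)|$, and symmetrically $|\partial(B)| < |\partial(Y)|$; so every split strictly shrinks the boundary. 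The $q$-lowering step leaves the set unchanged. For the $q=2$ base, each cluster from Lemma~\ref{lem:base case} is an internal node of the tree $G''$, whose leaves are exactly the $|\partial(X)|$ boundary edges, and an internal node of a tree with $\ell$ leaves has degree at most $\ell$, hence $|\partial(W)|\le|\partial(X)|$. Chaining these by induction on the recursion depth gives $|\partial(W)|\le|\partial(X)|\le 2q-1$, and your argument closes. Your ``locality'' observation for the violating-cut subcase is correct but is actually vacuous here: linkedness of $W$ depends only on $E_G(\cdot,\cdot)$ and $\partial_G(W)$, and the algorithm always evaluates these in the fixed graph $G$, never in an induced subgraph.
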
  

\begin{proof}
We prove this by induction. The base case when $X$ is connectivity-$q$ linked is obvious. 
The other base case is when $q=2$, where we can use Lemma~\ref{lem:base case}. 

For the inductive case, there are two possible subcases. The first subcase is when $|\partial(X)| \leq 2q-1$. Let $W \subseteq X$ be a tentative cluster marked by {\sc MarkClusters}($X,q-1$), so $W$ is connectivity-$(q-1)$ linked. We claim that it is also connectivity-$q$ linked.  
Indeed, consider any cut $A \cup B = W$ where $|\partial(A) \cap \partial(W)|
\leq |\partial(B) \cap \partial(W)|$. %
Since $|\partial (X) \leq 2q-1|$, then $|\partial(A) \cap\partial(W)| \leq
q-1$. %
Thus, $(A,B)$ is a violating cut for connectivity-$q$
if and only if it is a violating cut for connectivity-$(q-1)$. The second
subcase follows from definition.
\end{proof} 

\begin{theorem}
The number of tentative clusters (and therefore the size of mimicking network)  is at most $3^c |\tset|$.  
\end{theorem}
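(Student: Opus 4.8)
The plan is to bound the number of tentative clusters by analyzing the recursion tree of {\sc MarkClusters}($S,c$) and charging clusters to terminals. Let $N(k,q)$ denote the maximum number of tentative clusters produced by a call {\sc MarkClusters}($X,q$) when $|\partial(X)| = k$; I want to show $N(k,q) \le 3^q \cdot k$ (so that for $X = S$, $q = c$, we get $3^c |\tset|$, and after the degree-one reduction this is the claimed $3^c \cdot c \cdot k$ in the statement of Theorem~\ref{thm:connc:mimnet}). I would set this up as a double induction on $q$ and on the recursion.

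First, the base cases. If $X$ is already connectivity-$q$ linked, it forms one tentative cluster, and $1 \le 3^q k$ trivially (since $k \ge 1$, as we may assume $X$ has at least one boundary edge). If $q = 2$, Lemma~\ref{lem:base case} gives at most $k - 2 \le 3^2 k$ clusters. Next, the inductive cases. If $|\partial(X)| \le 2q - 1$, we recurse with parameter $q-1$ on the same set, so the count is $N(k, q-1) \le 3^{q-1} k \le 3^q k$. The interesting case is when $|\partial(X)| \ge 2q$ and we split along a violating cut $X = A \cup B$; here the subtlety is how the boundary edges distribute. For a violating cut, $|E_G(A,B)| < \min(|\partial(A)\cap\partial(X)|, |\partial(B)\cap\partial(X)|, q)$, so in particular $|E_G(A,B)| \le q - 1$; write $a = |\partial(A)\cap\partial(X)|$ and $b = |\partial(B)\cap\partial(X)|$. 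Then $|\partial(A)| \le a + (q-1)$ and $|\partial(B)| \le b + (q-1)$, while $a + b = |\partial(X)| = k$ (each boundary edge of $X$ lies on exactly one side). By induction, the total number of clusters is at most $N(|\partial(A)|, q) + N(|\partial(B)|, q) \le 3^q(a + q - 1) + 3^q(b + q - 1) = 3^q(k + 2q - 2)$, which is \emph{not} immediately $\le 3^q k$.

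This gap is the main obstacle, and I expect the resolution to require a sharper potential than raw $|\partial(X)|$. The natural fix is to strengthen the inductive claim to something like $N(k,q) \le 3^q k - (\text{correction term})$, or better, to charge using a quantity that already ``pays'' for the $2q - 2$ extra boundary edges created at a split — for instance, proving $N(k,q) \le 3^q \cdot k$ only when the recursion enters with $q$ held fixed for boundedly many steps, and observing that whenever $|\partial(X)| \ge 2q$ the violating-cut inequality $|E_G(A,B)| \le q-1$ forces $\min(a,b) \le q-1$, so one of the two children has boundary size $|\partial(A)| \le (q-1) + (q-1) = 2q - 2 < 2q$, meaning that child \emph{cannot} itself split at level $q$ and must drop to level $q-1$. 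I would use this asymmetry: at most one child of a level-$q$ split is again a ``splitting'' node at level $q$, so the level-$q$ splits form a path-like structure rather than a branching tree, and the number of leaves generated at level $q$ before dropping to $q-1$ is linear (not exponential) in the boundary size, contributing the factor $3$ per level. Carrying this charging argument carefully — tracking how $|\partial(\cdot)|$ and the ``small side'' of each violating cut behave down a chain of level-$q$ splits, and summing the $3^{q-1} \cdot (\cdot)$ contributions from the recursively handled small sides — is the technical heart; the rest is the routine double induction sketched above, combined with the already-proved fact (previous lemma) that every tentative cluster is connectivity-$q$ linked, so that contracting all the clusters at $q = c$ yields a valid connectivity-$c$ mimicking network by the first lemma of Section~2.
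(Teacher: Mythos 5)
You correctly set up the recursion, identify the right base cases, and — importantly — correctly diagnose the obstacle: a naive linear bound $N(k,q)\le 3^q k$ does not close under a violating-cut split, because the split adds $2\ell$ new boundary edges. You even mention the right fix in passing (``strengthen the inductive claim to $N(k,q)\le 3^q k - (\text{correction term})$''). However, the concrete resolution you then commit to is based on a false claim.

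You assert that the violating-cut inequality forces $\min(a,b)\le q-1$, where $a=|\partial(A)\cap\partial(X)|$ and $b=|\partial(B)\cap\partial(X)|$, and conclude that one child must drop to level $q-1$, so that level-$q$ splits form a path. This is not true. A violating cut only guarantees $\ell=|E_G(A,B)|<\min(a,b,q)$, i.e.\ $\ell\le q-1$ and $\ell<a$, $\ell<b$; it places no upper bound on $\min(a,b)$. If $|\partial(X)|$ is large and the terminal boundary splits roughly evenly (say $a=b=k/2\gg q$), both children have boundary $a+\ell$ and $b+\ell$ well above $2q-1$, and both continue splitting at level $q$. So the level-$q$ recursion can branch, and the ``path-like'' charging you propose does not go through.

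The paper's proof instead makes precise the correction-term idea you gestured at: it proves the strengthened hypothesis $N_q(k)\le 3^{q-2}\bigl(k-2(q-1)\bigr)$ when $k>2(q-1)$, and $N_q(k)\le 3^{q-2}$ otherwise. The $-2(q-1)$ offset is what pays for the split: when both children stay at level $q$, summing two copies of the linear bound gives $3^{q-2}\bigl(k+2\ell-4(q-1)\bigr)\le 3^{q-2}\bigl(k-2(q-1)\bigr)$ using $\ell\le q-1$, because the offset appears once per child. When one child has $|\partial(A)|\le 2q-1$, that child drops to level $q-1$ and contributes at most $3^{q-2}$, while the other contributes $N_q(k-1)$ using $k_1>\ell$. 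To fix your argument you would need to carry out this strengthened induction (or an equivalent potential argument); the branching structure alone will not save you.
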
 

\begin{proof}
We again prove this by induction. 
Let $N_q(k)$ be the maximum number of tentative clusters when running the procedure on $(X,q)$ where $|\partial(X)| = k$. 
We will prove that: 
\begin{align*}
	N_q(k) &\leq 3^{q-2}(k-2(q-1)) & \text{if } k > 2(q-1) \\
	N_q(k) &\leq 3^{q-2}           & \text{if } k \leq 2(q-1)
\end{align*}
The base case when $X$ is already linked is trivial since we have one cluster in $X$. Notice that if $k=1$, the set $X$ is also linked, so we only need to consider the other base case when $k \geq 2$. 
The other base case is when we have $q=2$ and $k \geq 2$. In such case, $N_2(2) = 1$, and when $k \geq 3$, we have that $N_2(k) \leq k-2$, from Lemma~\ref{lem:base case}.  
 
For the inductive step, if $|\partial(X)| \leq 2q-1$, then the procedure simply reduces the value of $q$ to $q-1$, and the induction hypothesis applies. 
Therefore, we consider the case when $|\partial(X)| > 2q-1$, and a violating cut is found, resulting in the calls to {\sc MarkClusters}($A,q$) and {\sc MarkClusters}($B,q$). 
Notice that the boundary edges $\partial(A)$ and $\partial(B)$ are the edges from $\partial(X)$ as well as the edges in the violating cut $E_G(A,B)$.
Assume that $|E_G(A,B)| = \ell$, $|\partial(X) \cap \partial(A)| = k_1$ and $|\partial(X) \cap \partial(B)| = k_2$. Also, assume that $k_1 \leq k_2$. Notice that since this is a violating cut, we have that $k_1 >\ell$.  

There are two possibilities.  
In the first case, suppose that $|\partial(A)| = k_1+\ell > 2(q -1)$. In such case, we have that,
\begin{align*}
	N_q(k) &\leq N_q(k_1+\ell) + N_q(k_2+\ell) \\
	       &\leq 3^{q-2}\big(k_1+\ell-2(q-1)\big) + 3^{q-2}\big(k_2+\ell-2(q-1)\big) \\
	       &\leq 3^{q-2}\big(k_1 + k_2 - 2(q-1) +2\ell-2(q-1)\big) \\
	       &\leq 3^{q-2}\big(k - 2(q-1)\big)
\end{align*}
Otherwise, using the fact that $k_1 >\ell$, 
\begin{align*}
	N_q(k) & \leq N_q(k_1+\ell) + N_q(k_2+\ell) \\
	       &\leq N_q(2q-1) + N_q(k-k_1+\ell) \\
	       &\leq N_{q-1}(2q-1) + N_q(k-1) \\
	       &\leq 3^{q-3}\big(2q-1-2(q-2)\big) + 3^{q-2}\big(k-1-2(q-1)\big) \\
	       &\leq 3^{q-3}\cdot 3 + 3^{q-2}\big(k-1-2(q-1)\big) \\
	       &\leq 3^{q-2}\big(k-2(q-1)\big)
\end{align*}
\end{proof}

\section{Efficiently Finding a Violating Cut}  

In order to speed up our computation, it is sufficient to compute a violating cut efficiently in the subroutine {\sc MarkClusters}($X,q$).  
In this section, we present an algorithm that either finds a violating cut in $X$ or certifies that $X$ is connectivity-$q$ linked. 

Observe first that a violating cut can be found in time $k^{O(c)} \poly(n)$  by simply computing all possible minimum cuts separating any disjoint subset of terminals $\tset_0, \tset_1 \subseteq \tset$ of size $q \leq c$, whose minimum cut contains less than $q$ edges. We will refer to the two sides of the cuts as {\em zero side} and {\em one side} respectively. 
We are, however, aiming at running time $f(c) \poly(n)$, so we cannot afford to do this enumeration to find the ``correct'' $\tset_0$ and $\tset_1$.

Our algorithm will actually solve a more general problem. 
We say that a cut $(A_0, A_1)$ of $G$ is a valid $(Q_0,Q_1,c_0,c_1,\ell)$-constrained cut if 

\begin{itemize}

    \item $Q_0 \subseteq A_0 \setminus \tset$ and $Q_1 \subseteq A_1 \setminus \tset$.
    
    \item $|A_j \cap \tset| \geq c_j$ for $j = 0,1$.  
    
    \item $E_G(A_0,A_1)$ contains at most $\ell$ edges. 
\end{itemize}

In words, $Q_0$ and $Q_1$ are the non-terminals that are ``constrained'' to be on different sides. 
The values of $c_0$ and $c_1$ are the minimum required number of terminals on the sides of $A_0$ and $A_1$ respectively. 

\begin{observation}
Given a sub-routine that finds a valid $(Q_0,Q_1,c_0,c_1,\ell)$-constrained cut in time  given by some function $T(m,k,\max(c_0, c_1, \ell))$,
we can compute a violating cut in $G$ or report that such a cut does not exist in time $O(c T(m,k,c))$.
\end{observation}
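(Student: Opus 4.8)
The plan is to reduce the search for a violating cut of $X$ to a small number --- namely $c$ --- of queries to the constrained-cut subroutine. In the recursive call {\sc MarkClusters}$(X,q)$ we may work with the graph $G_X$ obtained from $G[X]$ by attaching, for every boundary edge $e\in\partial(X)$, a fresh degree-one terminal to the endpoint of $e$ lying in $X$ (this matches the convention that terminals have degree one). The graph $G_X$ has at most $m$ edges and $k=|\partial(X)|$ terminals, and bipartitions $(A,B)$ of $X$ correspond to cuts $(A_0,A_1)$ of $G_X$ with $|E_G(A,B)|=|E_{G_X}(A_0,A_1)|$, $|\partial(A)\cap\partial(X)|=|A_0\cap\tset|$ and $|\partial(B)\cap\partial(X)|=|A_1\cap\tset|$ (no terminal edge crosses, since each $t_e$ is placed on the same side as the $X$-endpoint of $e$). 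Under this correspondence, a bipartition of $X$ is a violating cut exactly when, writing $\ell:=|E_{G_X}(A_0,A_1)|$, we have $\ell\le c-1$, $|A_0\cap\tset|\ge \ell+1$ and $|A_1\cap\tset|\ge \ell+1$; equivalently, $(A_0,A_1)$ is a valid $(\emptyset,\emptyset,\ell+1,\ell+1,\ell)$-constrained cut.

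This suggests the algorithm: for $\ell=0,1,\dots,c-1$, call the given subroutine to look for a valid $(\emptyset,\emptyset,\ell+1,\ell+1,\ell)$-constrained cut of $G_X$; if some call returns a cut, output the induced bipartition of $X$ as a violating cut; if all $c$ calls fail, report that $X$ is connectivity-$c$ linked. For correctness, any valid $(\emptyset,\emptyset,\ell+1,\ell+1,\ell)$-constrained cut $(A_0,A_1)$ with $\ell\le c-1$ has $|E_{G_X}(A_0,A_1)|\le\ell<\ell+1\le\min\{|A_0\cap\tset|,|A_1\cap\tset|\}$ and $|E_{G_X}(A_0,A_1)|<c$, so the induced bipartition is genuinely violating; hence a success is never a wrong answer. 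Conversely, if $X$ has a violating cut, pick one and let $\ell$ be its number of internal crossing edges; then $\ell\le c-1$ and this cut is itself a valid $(\emptyset,\emptyset,\ell+1,\ell+1,\ell)$-constrained cut, so the iteration for that value of $\ell$ cannot fail. Thus the procedure answers ``no violating cut'' only when none exists. For the running time, each of the $c$ iterations calls the subroutine with $\max\{c_0,c_1,\ell\}=\ell+1\le c$ and does $O(1)$ extra work, giving $O(c\,T(m,k,c))$ in total.

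I do not expect a genuine obstacle; the work is essentially bookkeeping. Two small points deserve attention. First, one should verify that the re-modelling of boundary edges as degree-one terminals makes $Q_0=Q_1=\emptyset$ sufficient --- we never need to pin a non-terminal to a particular side --- so the constraint sets play no role in this reduction. Second, there is a mild gap between the definition of connectivity-$c$ linkedness (quantified over all disjoint $A,B\subseteq X$) and the violating cuts produced above (bipartitions of $X$); but, as is implicit in {\sc MarkClusters} --- which always splits $X$ as $X=A\cup B$ --- and in the proof that a connectivity-$c$ linked set may be contracted, a Menger-type argument needs only that no \emph{bipartition} of $X$ violates the condition, so reporting ``connectivity-$c$ linked'' on that basis is justified.
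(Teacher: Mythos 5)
Your proof is correct and follows essentially the same route the paper takes: iterate $\ell$ from $0$ to $c-1$ and query the subroutine for a valid $(\emptyset,\emptyset,\ell+1,\ell+1,\ell)$-constrained cut, which is exactly what the paper invokes when it remarks that every violating cut is $(\emptyset,\emptyset,\ell+1,\ell+1,\ell)$-constrained for some $\ell\in[c-1]$. The bookkeeping you add (modelling $\partial(X)$ as degree-one terminals, noting that no pinned non-terminals are needed, and observing that a violating disjoint pair can be pushed to a violating bipartition by replacing it with a minimum $A$-$B$ cut in $G[X]$, which only enlarges $\partial(A)\cap\partial(X)$ and $\partial(B)\cap\partial(X)$ while not increasing the crossing-edge count) tightens details the paper leaves implicit, but does not change the approach.
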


In the rest of the section, we shall describe an algorithm that finds a valid $(Q_0,Q_1,c_0,c_1,\ell)$-constrained cut. 
Let $c = \max (c_0,c_1,\ell)$. 
Our algorithm has two steps, encapsulated in the following two lemmas. 

\begin{lemma}[Reduction]
\label{lem:reduction to base} 
There is an algorithm that runs in time $2^{O(c^2)}\cdot k^2 \cdot m$, and reduces the problem of finding a valid $(Q_0,Q_1,c_0, c_1,\ell)$-constrained cut to at most $2^{O(c^2)}$ instances of finding valid $(Q'_0,Q'_1,c'_0,c'_1,\ell')$-constrained cut where $\min(c'_0,c'_1) = 0$. 
\end{lemma}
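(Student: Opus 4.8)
The plan is to remove the lower-bound constraints $c_0$ and $c_1$ on the number of terminals on each side, one "unit of constraint" at a time, by \emph{guessing} a small amount of local information about how an optimal valid cut behaves, and then absorbing that information into the $Q_0,Q_1$ sets. The key observation is that we are only ever interested in cuts $E_G(A_0,A_1)$ of size at most $\ell\le c$, so if a valid $(Q_0,Q_1,c_0,c_1,\ell)$-constrained cut $(A_0,A_1)$ exists, then there are at most $c$ "crossing" edges, and in particular the terminal edges crossing the cut number at most $c$. I would first handle the terminals themselves: every terminal is a degree-one vertex attached to $S$; a terminal $t$ lies on the $0$-side iff its unique neighbour in $S$ lies on the $0$-side (or the edge $t$ itself is cut). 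So "having at least $c_0$ terminals on the $0$-side" is almost equivalent to "having at least $c_0$ of the neighbours-of-terminals on the $0$-side", up to the $\le \ell$ cut edges.

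Concretely, here are the steps I would carry out. First, guess the set $F\subseteq\tset$ of terminal edges that are cut by the target solution; since $|F|\le \ell\le c$ and $|\tset|=k$, there are at most $\binom{k}{\le c}=O(k^c)$ choices — but this is too expensive for the claimed $2^{O(c^2)}\cdot k^2\cdot m$ bound, so instead I would guess only \emph{how many} terminal edges are cut, say $r\le c$, and which "pattern" of multiplicities they contribute; this is where the factor $k^2$ (rather than $k^c$) must come from, presumably by a more clever argument that only a bounded number of the guessed objects are non-terminal vertices. Second, for each terminal $t$ whose edge is \emph{not} cut, its side is forced by its neighbour $s_t\in S$; so the requirement $|A_0\cap\tset|\ge c_0$ becomes a requirement of the form "at least $c_0-r_0$ of the vertices $\{s_t : t\in\tset, t\notin F\}$ lie in $A_0$", where $r_0=|F\cap A_0|$. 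Third — and this is the crux — to convert a \emph{counting} lower bound into a \emph{membership} (i.e.\ into $Q_0,Q_1$), guess which specific $c_0-r_0\le c$ of those neighbour-vertices go to the $0$-side and which $c_1-r_1\le c$ go to the $1$-side; add these to $Q_0$ and $Q_1$ respectively. There are at most $k$ candidate neighbour vertices but we only need to choose $\le c$ of them for each side, and again the honest bound is $k^{O(c)}$ unless one argues that it suffices to enumerate over a bounded-size \emph{pool}: I expect the real argument pre-contracts or pre-selects, for each "type" of terminal, a bounded number of representatives, so that only $2^{O(c^2)}$ guesses remain and the $k^2$ and $m$ factors are just the cost of preprocessing. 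After all guesses, both lower bounds $c_0',c_1'$ are $0$ (the required terminals have been pinned into $Q_0',Q_1'$), so we have reduced to the promised form, and we branch over all $2^{O(c^2)}$ guesses, returning a valid cut if any branch does.

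The main obstacle I anticipate is precisely reconciling the enumeration with the claimed running time: a naive "guess which $\le c$ terminals/neighbours go where" costs $k^{\Theta(c)}$, not $2^{O(c^2)}\cdot\poly(k)$. So the heart of the proof must be a structural reduction showing that, although there may be $k$ terminals, only $2^{O(c)}$-many of them are "relevant" up to symmetry (e.g.\ two terminals attached to the same $S$-vertex, or more generally to the same small "signature", are interchangeable), so one first collapses terminals into at most $O(c)$ equivalence classes in time $O(k^2 m)$ or so, and then the guessing is over classes and multiplicities, giving $2^{O(c^2)}$ branches. I would make sure to: (i) prove correctness — any valid constrained cut survives some branch, and any cut found in a branch is valid for the original instance (here the $\ell$ bound and the degree-one terminal structure are used to argue the "counting $\Leftrightarrow$ membership" equivalence is lossless); and (ii) bound the branching. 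The $\ell$ parameter is unaffected throughout ($\ell'=\ell$), and $\max(c_0',c_1',\ell')\le\max(c_0,c_1,\ell)=c$, so the recursive instances stay within the same parameter regime, which is what the next lemma (the "base case" solver) will need.
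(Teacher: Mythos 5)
Your proposal correctly identifies the crux of the problem --- naive enumeration over which terminals go where costs $k^{\Theta(c)}$, not $2^{O(c^2)} \poly(k,m)$ --- but the fix you speculate about (collapsing terminals into ``equivalence classes'' or ``representatives'' so that only $2^{O(c)}$ are relevant) is not what the paper does, and you do not substantiate it; there is no obvious reason why terminals attached to the same $S$-vertex or sharing a small ``signature'' would be interchangeable with respect to an arbitrary $(Q_0,Q_1,c_0,c_1,\ell)$-constrained cut, so as written the argument has a genuine gap at its central step.

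The paper instead uses a divide-and-conquer through a small cut. One computes a minimum cut $(A'_0,A'_1)$ with $Q_0 \subseteq A'_0$, $Q_1 \subseteq A'_1$ and \emph{at least one terminal on each side}; this is found by trying all $k^2$ ordered pairs of terminals and running a min $s$-$t$ cut truncated at capacity $c$, which is exactly where the $k^2 \cdot O(mc)$ factor comes from. If a valid constrained cut exists at all, this auxiliary minimum cut has at most $\ell$ edges. If both sides already contain enough terminals, we are done. Otherwise, one side (say $A'_0$) contains fewer than $c_0 \le c$ terminals --- and this is the key structural observation you are missing: the \emph{small side of a small min-cut has few terminals}, so one can afford to exhaustively guess, in $2^{O(c)}$ ways, how those terminals and the $\le 2\ell$ boundary vertices of the cut split between the two sides of the eventual solution. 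Each guess turns the subproblem on $G[A'_0]$ into a plain $s$-$t$ min-cut, and the subproblem on $G[A'_1]$ into a constrained-cut instance whose $c_0+c_1$ has strictly decreased (because $A'_0$ contributed at least one terminal). Since $c_0 + c_1 \le 2c$, the recursion has depth at most $2c$ with branching $2^{O(c)}$, yielding $2^{O(c^2)}$ base-case instances. This recursion on $c_0+c_1$ via a small separating cut is the idea your proposal lacks.
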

We remark that each such generated instance may have different constrained parameters. The only property we guarantee is the fact that $\min(c'_0,c'_1) = 0$, that, is, the fact that there is only a one-sided terminal requirement.  

\begin{lemma} [Base case]
\label{lem:algo for base} 
For $\ell \leq c$, 
there is an algorithm that finds a valid $(Q_0,Q_1,0,c,\ell)$-constrained cut (and analogously, $(Q_0,Q_1,c,0,\ell)$-constrained) in time $2^{O(c^2)}\cdot k^2 \cdot m$. 
\end{lemma}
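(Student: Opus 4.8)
The plan is to solve the problem of finding a valid $(Q_0,Q_1,0,c,\ell)$-constrained cut by reducing it to a bounded number of minimum-cut computations, exploiting the fact that only one side (say $A_1$) has a terminal requirement and the separator has at most $\ell \le c$ edges. The crucial observation is that in such a cut, the side $A_1$ need only contain \emph{some} $c$ terminals; we do not know which ones. However, since the cut $E_G(A_0,A_1)$ has at most $\ell \le c$ edges, the set of terminal-edges crossing from $A_1$'s "interior" is also controlled. The idea is to guess a small amount of structural information about the optimal cut and then let a max-flow/min-cut computation fill in the rest.

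Concretely, first I would handle the forced set $Q_1$: contract $Q_1$ into a single source-like vertex $s$ and contract $Q_0$ into a sink-like vertex $t$ (if $Q_0 = \emptyset$, introduce a fresh isolated vertex as $t$). Any valid cut must separate $s$ from $t$. Next, to enforce that $A_1$ contains at least $c$ terminals, I would \emph{guess} which $c$ terminal-edges lie on the $A_1$-side: but there are $k$ terminals, so naive enumeration costs $k^{\Theta(c)}$, which is too expensive. Instead, I would guess only the \emph{boundary behaviour}: the at most $\ell$ edges of $E_G(A_0,A_1)$ induce a partition of $G$; in the graph $G$ with every terminal of degree exactly one, a terminal $t_i$ lies on the $A_1$ side iff its unique neighbour in $S$ lies on that side. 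So the requirement "$|A_1 \cap \tset| \ge c$" becomes a requirement on how many of the $S$-neighbours-of-terminals fall on the $s$-side. I would guess the set $F \subseteq \partial(X)$ of crossing edges that are themselves terminal-edges (at most $\ell \le c$ of them, so $\binom{k}{\le c} $ — still too many).

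To avoid the $k^{O(c)}$ blowup, here is the refined approach, which I expect to be the technical heart of the argument: observe that we only need $c$ terminals on side $A_1$, and a valid cut with $\le \ell$ crossing edges has the property that all but at most $\ell$ terminal-edges are "uncut", i.e. the terminal sits on the same side as its $S$-neighbour. So it suffices to (i) designate some set $P$ of at most $c + \ell \le 2c$ terminals that we \emph{want} on the $A_1$ side, and (ii) then compute a minimum cut that puts the $S$-neighbours of $P \setminus F$ on the $s$-side for the guessed crossing-set $F$; but again choosing $P$ costs $\binom{k}{2c}$. The genuinely efficient resolution must therefore be structural: since the separator has at most $c$ edges, the $s$-side component structure near the cut is a graph on $O(c)$ "interface" vertices, and one enumerates over the $2^{O(c^2)}$ possible interface patterns (which pairs of the $\le c$ crossing edges are connected on which side, plus for each crossing edge whether it is a terminal-edge), contracting accordingly, and \emph{then} a single min-cut call per pattern either certifies a feasible cut of size $\le \ell$ with $\ge c$ terminals on the right side or fails. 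The counting that there are only $2^{O(c^2)}$ patterns, and that each pattern is checkable by one max-flow computation in time $O(km)$ (flow value bounded by $c$, so $O(c)$ augmentations each $O(m)$, times $O(k)$ for handling terminal placement), gives the claimed $2^{O(c^2)}\cdot k^2 \cdot m$ bound.

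The main obstacle, and the step I would spend the most care on, is exactly this: arguing that "place at least $c$ terminals on one side" can be encoded into a min-cut instance \emph{without} enumerating which terminals, by pushing the choice into the $2^{O(c^2)}$ enumeration of the local cut structure (crossing edges, their pairwise connectivity on each side, and their terminal/non-terminal type) and then using a flow that is allowed to route demand from \emph{any} $c$ terminals. Everything else — contracting $Q_0,Q_1$, bounding the number of patterns by $2^{O(c^2)}$, and the per-pattern max-flow running time — is routine once this encoding is in place. I would also double-check the symmetric case $(Q_0,Q_1,c,0,\ell)$ follows by swapping the roles of the two sides, which is immediate from the symmetry of the definition of a constrained cut.
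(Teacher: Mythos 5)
Your plan correctly identifies where the difficulty lies — enforcing ``at least $c$ terminals on side $A_1$'' without paying $k^{\Theta(c)}$ to enumerate which ones — but the resolution you sketch does not close that gap, and it differs substantially from what the paper does.

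The concrete problem with the ``interface pattern plus one min-cut per pattern'' idea is that the information you propose to enumerate (which crossing edges exist, which pairs of them are connected on each side, which are terminal-edges) does not determine, nor constrain, how many terminals land on the $A_1$ side. After contracting according to a pattern, a min-cut computation minimizes the number of crossing edges; it has no mechanism to \emph{maximize} or lower-bound the terminal count on a side. The phrase ``a flow that is allowed to route demand from any $c$ terminals'' gestures at an anonymous-source flow with a lower bound on the number of sources used, but that is not a max-flow instance — it is a cardinality-constrained cut, which is exactly the problem you started with. Moreover, $A_1$ in an optimal solution may decompose into many connected components each holding a few terminals, and your pattern enumeration only looks at the $O(c)$ boundary vertices of $E_G(A_0,A_1)$, not at the component structure of the $A_1$-side or which terminals each component traps; there is no bound of the form $2^{O(c^2)}$ on those choices via boundary patterns alone.

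The paper resolves this with a genuinely different tool: \emph{important cuts} (Marx). It first shows (Lemma~\ref{lem:violcut:impcutcomps}) that some optimal solution can be taken so that every connected component $C$ of $G[A_1]$ is an important $(Q_1, Q_0)$-cut or an important $(t, Q_0)$-cut for some terminal $t$; since there are at most $4^\ell$ important cuts of size $\le\ell$ for each fixed seed, the search space collapses. The remaining issue — there are $k$ choices of seed terminal $t$ — is handled by \emph{cut profile vectors} plus a greedy $(c+1)$-round filling procedure that identifies a set $S$ of $O(\mathrm{poly}(c))$ terminals such that important cuts seeded in $S$ already suffice to assemble a feasible solution. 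Only then does the algorithm enumerate (over $\cset_S$) the $\lambda \le c$ components, giving the $2^{O(c^2)}$ factor. This two-step reduction (structural normalization to important cuts, then bounded seed set via round-robin) is the actual content of the lemma, and none of it is captured by contraction-plus-min-cut.

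In short: your proposal is a plausible first attempt, correctly diagnoses the obstacle, and correctly observes that the symmetric case and the $c_0=0$ case are easy, but the core encoding step is asserted rather than proved and, as stated, does not reduce to min-cut. You would need to bring in important cuts (or an equivalent FPT cut-enumeration tool) to make the base case go through.
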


The following theorem follows in a straightforward manner, since every violating
cut is also $(\emptyset, \emptyset, \ell+1, \ell+1, \ell)$-constrained, for some
$\ell \in [c-1]$.

\begin{theorem}
There is an algorithm that runs in time $2^{O(c^2)}\cdot k^2 \cdot m$ and either returns a violating cut or reports that such a cut does not exist. 
\end{theorem}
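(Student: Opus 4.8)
The final theorem is an immediate corollary of Lemma~\ref{lem:reduction to base} and Lemma~\ref{lem:algo for base}, together with the observation reducing violating-cut computation to constrained-cut computation. So the proof I would write is a short "gluing" argument, and the real content is spelled out by assuming those two lemmas. Concretely, I would proceed as follows. First, recall that $X$ fails to be connectivity-$q$ linked exactly when there is a pair of disjoint $A,B\subseteq X$ with $|E_G(A,B)| < \min(|\partial(A)\cap\partial(X)|, |\partial(B)\cap\partial(X)|, c)$. Since we are in the multigraph (uncapacitated) setting with all capacities truncated at $c$, it suffices to look for such a cut with $|E_G(A,B)| = \ell$ for some $\ell \in \{0,1,\dots,c-1\}$ and with at least $\ell+1$ boundary-of-$X$ terminals on each side — i.e., a valid $(\emptyset,\emptyset,\ell+1,\ell+1,\ell)$-constrained cut in the graph $G[X]\cup\partial(X)$ (treating the $\partial(X)$-edges as the terminal edges for this subinstance). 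I would state this equivalence carefully, noting that we only need one side's count capped at $c$ because $\min(\cdot,\cdot,c)$ already truncates.

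**Main steps in order.** (1) Loop over $\ell = 0,1,\dots,c-1$; for each $\ell$ we must decide whether a valid $(\emptyset,\emptyset,\ell+1,\ell+1,\ell)$-constrained cut exists. (2) For a fixed $\ell$, invoke Lemma~\ref{lem:reduction to base} with $(Q_0,Q_1,c_0,c_1,\ell) = (\emptyset,\emptyset,\ell+1,\ell+1,\ell)$ to reduce to at most $2^{O(c^2)}$ instances of the constrained-cut problem in which one of the two terminal-requirement parameters is $0$, at a cost of $2^{O(c^2)}\cdot k^2\cdot m$. (3) For each such generated instance, which by the guarantee has the form $(Q'_0,Q'_1,0,c',\ell')$ or $(Q'_0,Q'_1,c',0,\ell')$ with $\ell' \le c$, apply Lemma~\ref{lem:algo for base} to solve it in time $2^{O(c^2)}\cdot k^2\cdot m$. (4) If any instance returns a valid cut, translate it back (the reduction of Lemma~\ref{lem:reduction to base} is constructive, so a solution to a reduced instance yields a solution to the original) and output the corresponding violating cut $(A,B)$; if no instance over any $\ell$ yields a cut, report that $X$ is connectivity-$q$ linked. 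The total running time is $\sum_{\ell=0}^{c-1} 2^{O(c^2)}\cdot\big(k^2 m + 2^{O(c^2)}\cdot k^2 m\big) = 2^{O(c^2)}\cdot k^2\cdot m$, absorbing the factor $c$ and the factor $2^{O(c^2)}$ from the number of sub-instances into the $2^{O(c^2)}$ term.

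**Where the difficulty lies.** All the genuine work is inside the two cited lemmas, so in this theorem the only thing to get right is the bookkeeping: verifying that "violating cut for connectivity-$q$" is correctly captured by the family of constrained-cut instances (in particular that truncating the terminal requirement on one side at $c$ loses nothing, since the linkedness inequality already has the $\min(\cdot,c)$), and checking that the claimed running times compose to $2^{O(c^2)}\cdot k^2\cdot m$ rather than something with an extra $\poly(c)$ or $2^{O(c^2)}$ blow-up that escapes the exponent's $O(\cdot)$. I expect the main subtlety to be making sure the reduction's output instances genuinely satisfy $\ell' \le c$ so that Lemma~\ref{lem:algo for base}'s hypothesis applies, and that back-translation of a found cut through the reduction is valid; both should follow directly from the statements of Lemmas~\ref{lem:reduction to base} and~\ref{lem:algo for base}, but I would state the back-translation explicitly rather than leaving it implicit.
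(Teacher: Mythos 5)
Your proposal is correct and follows essentially the same route as the paper: it is exactly the short gluing argument the paper intends, reducing the existence of a $q$-violating cut to an $(\emptyset,\emptyset,\ell+1,\ell+1,\ell)$-constrained cut for some $\ell \in \{0,\dots,c-1\}$, invoking Lemma~\ref{lem:reduction to base} and then Lemma~\ref{lem:algo for base}, and composing running times. The paper itself just states the key observation ("every violating cut is $(\emptyset,\emptyset,\ell+1,\ell+1,\ell)$-constrained for some $\ell\in[c-1]$") in one sentence, while you spell out the loop over $\ell$, the back-translation of solutions, and the $\ell'\le c$ hypothesis check; none of this departs from the paper's intent.
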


\subsection{The reduction to the base case} 
% @TODO: Read section

In this subsection, we prove Lemma~\ref{lem:reduction to base}. 
The main ingredient for doing so is the following lemma. 

\begin{lemma}
\label{lem: one step} 
There is a reduction from $(Q_0,Q_1,c_0,c_1,\ell)$-constrained cut to solving at most $2^{O(c)}$ instances of finding valid $(Q'_0,Q'_1,c'_0,c'_1, \ell')$-constrained cut where $(c'_0 +c'_1) < (c_0 +c_1)$. 
\end{lemma}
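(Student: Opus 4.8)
My plan is to prove Lemma~\ref{lem: one step} and then obtain Lemma~\ref{lem:reduction to base} by iterating it: since each application strictly decreases $c_0+c_1$, after at most $c_0+c_1\le 2c$ applications every surviving instance has $\min(c'_0,c'_1)=0$; since each application multiplies the instance count by $2^{O(c)}$, the final count is $2^{O(c^2)}$, and bounding the per-application cost by $2^{O(c)}\poly(m)$ also yields the running time claimed in Lemma~\ref{lem:reduction to base}. So fix an instance $(Q_0,Q_1,c_0,c_1,\ell)$ with $c_0+c_1\ge 1$; one of $c_0,c_1$ is positive, and by symmetry assume $c_1\ge 1$. I want to output $2^{O(c)}$ instances, each with a requirement vector $(c'_0,c'_1)$ satisfying $c'_0+c'_1=c_0+c_1-1$, such that the original instance admits a valid constrained cut iff some output instance does.

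The guiding idea is that any valid cut $(A_0,A_1)$ contains a \emph{witness terminal} $t^\ast\in A_1\cap\tset$ (because $|A_1\cap\tset|\ge c_1\ge 1$), and I want each branch to ``commit'' one witness: either move its unique $S$-neighbour $s^\ast$ into the appropriate committed set and decrement $c_1$, or, when the pendant edge $t^\ast s^\ast$ itself is cut, add $s^\ast$ to $Q_0$, delete $t^\ast$ from $G$, and decrement both $c_1$ and $\ell$. The whole difficulty — and the step I expect to be the main obstacle — is that a branch must not read off \emph{which} terminal $t^\ast$ is, since that would cost $\Theta(k)$ branches rather than $2^{O(c)}$. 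The branching therefore has to be driven by the cut, which has only $|E_G(A_0,A_1)|\le\ell\le c$ edges, not by the (possibly huge) terminal set.

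To carry this out I would exploit the structure of small cuts together with the pendant-terminal assumption. Let $F=E_G(A_0,A_1)$ and let $K$ be the connected component of $G-F$ that contains $t^\ast$; then $K\subseteq A_1$, $K\cap Q_0=\emptyset$, $K$ carries a terminal, and $\partial_G(K)\subseteq F$ has at most $\ell\le c$ edges. I propose to branch over a bounded \emph{fingerprint} of the cut near $K$: the number $j\le\ell$ of edges of $F$ lying on $\partial_G(K)$, the partition of those $j$ edges into $S$--$S$ edges and pendant terminal-edges (the latter recording how much of the budget $\ell$ is spent carrying witness terminals), and the side-assignment of the $O(\ell)$ pieces into which $G$ breaks once those edges are deleted. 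For each fingerprint I then solve a single max-flow / closure computation — pushing $Q_0$ and $Q_1$ apart subject to the guessed interface — to recover a concrete vertex $v\in S$ that every valid cut consistent with the fingerprint must place on the $A_1$ side and that is adjacent to a not-yet-committed terminal; I add $v$ to $Q_1$ (or, in the terminal-edge case, add the $A_0$-endpoint to $Q_0$ and delete the terminal) and decrement $c_1$ (and $\ell$ when a terminal-edge was used). There are $2^{O(\ell)}=2^{O(c)}$ fingerprints, hence $2^{O(c)}$ output instances, each producible in $2^{O(c)}\poly(m)$ time.

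Correctness would then be routine bookkeeping: a valid cut for the original matches exactly the fingerprint of its own component $K$, and the committed vertex (resp.\ the deletion) is consistent with it, so the corresponding sub-instance is feasible; conversely, any valid cut for a sub-instance, with the deleted terminal replaced on its forced side, is a valid cut for the original, with at most $\ell$ cut edges and $c_0+c_1$ terminals correctly split. The genuinely delicate point is the claim that a bounded fingerprint plus one flow computation really does pin down a \emph{forced}, terminal-bearing vertex $v$ rather than merely a ``possible'' one; this is exactly where the $\ell\le c$ bound and the fact that each terminal is a pendant edge do the heavy lifting, and getting the quantifiers right there (``for every valid cut consistent with this fingerprint, $v$ lies on the $A_1$ side'') will be the heart of the argument.
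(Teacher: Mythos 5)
Your high-level plan (branch $2^{O(c)}$ ways, strictly decrease $c_0+c_1$, iterate $\le 2c$ times to get $2^{O(c^2)}$ base instances) matches the paper, but the mechanism you propose for a single branching step has a genuine gap, and it is exactly at the spot you yourself flag as ``the heart of the argument.'' Your fingerprint of the component $K$ (number of boundary edges, split into $S$--$S$ vs.\ pendant edges, side labels of the resulting pieces) does \emph{not} determine $K$, and hence cannot force any particular vertex to the $A_1$ side. To see why, take $G$ to be a long path from $Q_0$ to $Q_1$ with a pendant terminal hanging off every interior vertex, $c_0=0$, $c_1=1$, $\ell=1$: every single-edge cut along the path is feasible, each with a different component $K$, and all of them have the identical fingerprint (one $S$--$S$ boundary edge, two pieces, trivial side labels). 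No vertex other than $Q_1$ is on the $A_1$ side of \emph{all} such cuts, so ``for every valid cut consistent with this fingerprint, $v$ lies on the $A_1$ side'' fails for every candidate $v$ with an uncommitted terminal neighbour. The obstruction is precisely the one you worried about in advance: the terminal set has size $k$, and nothing in a $2^{O(c)}$-sized fingerprint distinguishes among the $\Theta(k)$ interchangeable witnesses, so a closure/flow computation cannot conjure a unique forced vertex.

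The paper sidesteps this by not trying to localize a witness at all. Instead it first computes an \emph{auxiliary} minimum cut $(A'_0,A'_1)$ subject only to $Q_0\subseteq A'_0$, $Q_1\subseteq A'_1$, and at least one terminal on each side; if a valid constrained cut exists this auxiliary cut has at most $\ell$ edges, and if one side (say $A'_0$) is terminal-deficient then $\tset_0 := A'_0\cap\tset$ satisfies $1\le|\tset_0|<c_0\le c$ while the boundary $E_G(A'_0,A'_1)$ also has at most $\ell\le c$ edges. Both objects being of size $O(c)$, one can \emph{enumerate} (in $2^{O(c)}$ branches) how the unknown target cut $(A_0,A_1)$ splits $\tset_0$ and the boundary vertices; the $G[A'_0]$ part then reduces to a plain $(s,t)$-min-cut and the $G[A'_1]$ part to a recursive constrained-cut instance with the terminal requirement reduced by $|\tset_0|\ge1$. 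In short, the paper manufactures a bounded, enumerable set (the small side of a small auxiliary cut) rather than trying to deduce a forced vertex from cut-local information, and that is the ingredient your argument is missing.
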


In other words, this lemma allows us to reduce the number of required terminals on at least one of the sides by one. 
Applying Lemma~\ref{lem: one step} recursively will allow us to turn an input instance of $(Q_0, Q_1, c_0,c_1,\ell)$ constrained cut into at most $2^{O(c^2)}$ instances of the base problem: This follows from the fact that at every recursive call, the value of at least one of $c_0$ and $c_1$ decreases by at least one. Therefore the depth of the recursion is at most $2c$, and the ``degree'' of the recursion tree is at most $2^{O(c)}$ as guaranteed by the above lemma.

Let $(G,\tset)$ be an input. We now proceed to prove Lemma~\ref{lem: one step}, that is, we show how to compute a $(Q_0,Q_1, c_0,c_1, \ell)$-constrained cut in $(G,\tset)$.

\paragraph{Our algorithm:}Let $(A'_0, A'_1)$ be a minimum cut in $G$ such that $Q_0 \subseteq A'_0$ and $Q_1 \subseteq A'_1$ and each side contains at least one terminal.  This cut can be found by a standard minimum $s$-$t$ cut algorithm. Observe that the value of this cut is at most $\ell$ if there is a valid constrained cut.  

Such a cut can be used for our recursive approach to solve smaller sub-problems by recursing on $G[A'_i]$ as follows. 
Denote by $\tset_i = A'_i \cap \tset$ for $i=0,1$. 
By definition, each set $\tset_i$ is non-empty, and this is crucial for us. 

If $|E_G(A'_0, A'_1)| > \ell$, the procedure terminates and reports no valid solution. Or, if $|\tset_i| \geq k_i$ for all $i=0,1$, then we have found our desired constrained cut. 
Otherwise, assume that $|\tset_0| < k_0$ (the other case is symmetric). We create a collection of $2^{O(c)}$ instances of smaller sub-problems as follows.

\begin{center}
\fbox{\begin{minipage}{\textwidth}
\noindent {\bf Sub-Instances.}
\begin{itemize}

    \item First, we guess the ``correct'' way to partition terminals in $\tset_0$ into $\tset_0 = \tset_0^0 \cup \tset_0^1$. 
    There are at most $2^c$ possible guesses. 
    
    \item Second, we guess the ``correct'' partition of the (non-terminal) boundary vertices in $V(E_G(A'_0, A'_1)) - \tset$, into $B_0 \cup B_1$ where $B_0$ and $B_1$ are the vertices supposed to be on the zero-side and one-side respectively. 
    Let $\tilde E = E_G(B_0, B_1)$. 
    There are $2^{c}$ possible guesses.

\end{itemize}

\end{minipage}}    
\end{center}

Now we will solve sub-problems in $G[A'_0]$ and $G[A'_1]$. Notice that $G[A'_0]$ has small number of terminals, so we could solve it by brute force. For $G[A'_1]$ we will solve it recursively. 
% @TODO: can be solved by min-cut

Let $E_0$ be the minimum cut in $G[A'_0]$ that separates $S_0 = Q_0 \cup (B_0 \cap A'_0) \cup \tset_0^0$ and $T_0 = (B_1 \cap A'_0) \cup \tset_0^1$.   
Next, we solve an instance of valid $(Q'_0, Q'_1, c'_0, c'_1, \ell')$-constrained cut  in $G[A'_1]$ with terminal set $\tset_1$, where $Q'_0 = (B_0 \cap A'_1)$,  $Q'_1 = Q_1\cup (B_1 \cap A'_1)$,  $c'_0 = \max(c_0 - |\tset_0^0|, 0)$, $c'_1 = \max(c_1- |\tset_0^1|,0)$, and $\ell' = \ell - |\tilde E| - |E_0|$. Let $E_1$ be a $(Q'_0, Q'_1, c'_0, c'_1, \ell')$-constrained cut. Our algorithm outputs $E_0 \cup E_1 \cup \tilde{E}$.

\paragraph{Analysis.} 
Clearly, $c'_0 +c'_1 < c_0 +c_1$. 
The following lemma will finish the proof.

\begin{lemma}
There is a $(Q_0,Q_1,c_0, c_1, \ell)$-constrained cut in $(G,\tset)$ if and only if there exist correct guesses $(B_0,B_1, \tset_0^0, \tset_0^1)$ such that a $(Q'_0, Q'_1,c'_0,c'_1, \ell')$-constrained cut exists in $(G[A'_1], \tset_1)$. 
\end{lemma}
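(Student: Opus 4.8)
The plan is to prove the biconditional by establishing both directions, carefully tracking how a global constrained cut decomposes across the cut $(A_0', A_1')$ and how local solutions stitch back together.

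\textbf{Forward direction.} Suppose $(A_0, A_1)$ is a valid $(Q_0, Q_1, c_0, c_1, \ell)$-constrained cut in $(G, \tset)$. First I would argue that we may assume $(A_0, A_1)$ is a \emph{minimum} such cut, and use submodularity of cuts together with the fact that $(A_0', A_1')$ is a minimum cut respecting $Q_0 \subseteq A_0'$, $Q_1 \subseteq A_1'$ with a terminal on each side: this is the standard uncrossing step. The goal is to conclude that we may assume $A_0' \subseteq A_0$ is not quite right in general; instead I would restrict attention to how $A_0$ and $A_1$ interact with $A_0'$. Define $\tset_0^0 = A_0 \cap \tset_0$ and $\tset_0^1 = A_1 \cap \tset_0$ (recall $\tset_0 = A_0' \cap \tset$); since $|\tset_0| < c_0 \le c$, these partitions are among the $2^c$ guesses. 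Similarly, the boundary vertices $V(E_G(A_0', A_1')) - \tset$ get split by $(A_0, A_1)$ into $B_0$ (those landing in $A_0$) and $B_1$ (those landing in $A_1$) — again one of the $2^c$ guesses. With these ``correct'' guesses fixed, I would write $E_G(A_0, A_1) = E_0^* \cup E_1^* \cup \tilde E$ where $E_0^* = E_G(A_0, A_1) \cap E(G[A_0'])$, $E_1^* = E_G(A_0, A_1) \cap E(G[A_1'])$, and $\tilde E$ the edges crossing between $B_0$ and $B_1$ — and verify this is exactly a disjoint union (any edge of the global cut either lies inside $A_0'$, inside $A_1'$, or crosses, and a crossing edge with at least one endpoint a constrained/guessed boundary vertex must go between $B_0$ and $B_1$; here I'd need the observation that every edge of $E_G(A_0',A_1')$ has its non-$\tset$ endpoints among the guessed boundary vertices, and terminals have degree one so a terminal edge in the cut is accounted for by the $\tset_0^j$ split). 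Then $E_1^*$ restricted to $G[A_1']$ is a valid $(Q_0', Q_1', c_0', c_1', \ell')$-constrained cut: the containment conditions $Q_0' = B_0 \cap A_1' \subseteq A_0 \cap A_1'$ and $Q_1' = Q_1 \cup (B_1 \cap A_1') \subseteq A_1$ hold by construction; the terminal counts satisfy $|A_0 \cap \tset_1| \ge c_0 - |\tset_0^0| $ and symmetrically (since $|A_0 \cap \tset| = |A_0 \cap \tset_0| + |A_0 \cap \tset_1| \ge c_0$); and the edge bound follows because $|E_1^*| = \ell - |\tilde E| - |E_0^*| \le \ell - |\tilde E| - |E_0|$ where $E_0$ is the \emph{minimum} $S_0$--$T_0$ cut in $G[A_0']$, hence $|E_0| \le |E_0^*|$. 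So $\ell' = \ell - |\tilde E| - |E_0| \ge |E_1^*|$, as needed.

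\textbf{Reverse direction.} Conversely, given correct guesses and a valid $(Q_0', Q_1', c_0', c_1', \ell')$-constrained cut $E_1$ in $G[A_1']$, I would show $E_0 \cup E_1 \cup \tilde E$ is a valid $(Q_0, Q_1, c_0, c_1, \ell)$-constrained cut in $G$. The key point is that $E_0$, the minimum $S_0$--$T_0$ cut in $G[A_0']$, partitions $A_0'$ into a ``zero part'' containing $S_0 = Q_0 \cup (B_0 \cap A_0') \cup \tset_0^0$ and a ``one part'' containing $T_0 = (B_1 \cap A_0') \cup \tset_0^1$; likewise $E_1$ partitions $A_1'$; and the guess $B_0 / B_1$ is precisely what makes these two partitions consistent across the cut $E_G(A_0', A_1')$ — the edges of $E_G(A_0',A_1')$ internal to a single side contribute nothing, while edges between $B_0$ and $B_1$ are exactly $\tilde E$. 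Gluing the zero-parts into $A_0$ and the one-parts into $A_1$, the total cut is $E_0 \cup E_1 \cup \tilde E$ with size $\le |E_0| + \ell' + |\tilde E| = \ell$; the containments $Q_0 \subseteq A_0$, $Q_1 \subseteq A_1$ follow from $Q_0 \subseteq S_0$ and $Q_1 \subseteq Q_1' $; and the terminal counts add up: $|A_0 \cap \tset| = |\tset_0^0| + |A_0 \cap \tset_1| \ge |\tset_0^0| + c_0' \ge c_0$ by definition of $c_0'$, and symmetrically for $c_1$.

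\textbf{Main obstacle.} The delicate step is the decomposition $E_G(A_0, A_1) = E_0^* \sqcup E_1^* \sqcup \tilde E$ and, dually, the gluing: I must be sure that when I reassemble the two local cuts there are no ``leftover'' edges crossing between $A_0'$ and $A_1'$ that were not anticipated by the $\tilde E$ guess, and that no edge gets double-counted. This requires pinning down that $V(E_G(A_0', A_1'))$ — apart from terminals, which have degree one and are handled by the $\tset_0^j$ split — is entirely covered by the guessed set $B_0 \cup B_1$, so that the only crossing edges after reattachment are those between $B_0$ and $B_1$. I would state this as a small separate claim about the structure of the minimum cut $(A_0', A_1')$ and verify it before doing the two directions, since both directions rely on it.
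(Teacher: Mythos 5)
Your proof follows the same overall structure as the paper's: the forward direction (the paper's ``only if'') decomposes $E_G(A_0,A_1)$ along $(A'_0,A'_1)$ and recovers the guesses from where $(A_0,A_1)$ places the boundary vertices and the terminals of $\tset_0$, while the reverse direction (the paper's ``if'') stitches $E_0$, $E_1$, $\tilde E$ back together. The one stylistic difference is in the reverse direction: you glue the two vertex partitions induced by $E_0$ and $E_1$ explicitly and bound the resulting cutset, while the paper instead proves a disconnection Claim for the two target terminal sets by case-analyzing a hypothetical surviving path, splitting it into $P_1 P_2 P_3$ at the last vertex before leaving $A'_0$ and the first vertex of the terminal suffix in $A'_1$. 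These are morally the same argument, and your partition-gluing framing is arguably easier to audit; you also sensibly abandon the uncrossing idea, which the paper does not use either.

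The one place you should not wave a hand is exactly the place you flag as the main obstacle. You assert that terminal edges across $E_G(A'_0,A'_1)$ are ``handled by the $\tset_0^j$ split,'' but that is not enough to get the containment $E_G(A_0,A_1)\subseteq E_0\cup E_1\cup\tilde E$. Consider $t\in\tset_0^0$ whose unique incident edge goes to $v\in B_1\cap A'_1\subseteq Q'_1$ (possible: $v$ is a non-terminal boundary vertex, and the guess is free to put it in $B_1$). Since $t$ is a terminal, $tv\notin\tilde E=E_G(B_0,B_1)$; since $tv$ has one endpoint in $A'_0$ and one in $A'_1$, it lies in neither $E(G[A'_0])$ nor $E(G[A'_1])$, so $tv\notin E_0\cup E_1$. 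Thus $tv$ survives the removal of $\tilde E\cup E_0\cup E_1$, and the ``zero-side'' terminal $t$ stays attached to the ``one-side'' vertex $v$, which in turn reaches $Q'_1\supseteq Q_1$ after removing $E_1$. Symmetrically, a terminal in $\tset_1$ whose attachment lies in $A'_0$ is isolated in $G[A'_1]$, so $E_1$ may place it on a side inconsistent with where its attachment lands under the $B_0/B_1$ guess, and its edge again escapes $\tilde E\cup E_0\cup E_1$. For fairness, the paper's own Claim proof has the same blind spot — its case analysis assumes $u,v\in B_0\cup B_1$, which fails when one of them is a terminal — so this is not a defect unique to your write-up. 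But it is precisely the ``small separate claim'' you promise to prove, and it requires an actual argument (e.g.\ restricting the enumerated guesses to be consistent with terminal attachments, or using minimality of $(A'_0,A'_1)$ to show at most one terminal per side can have its edge cross and that it can always be reassigned consistently) rather than an appeal to degree one.
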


\begin{proof}
We argue  the ``if'' part. Suppose that there exists such a guess $(B_0,B_1, \tset_0^0, \tset_0^1)$. 
We claim that $E_0 \cup E_1 \cup \tilde{E}$ is actually a $(Q_0,Q_1,c_0,c_1,\ell)$-constrained cut that we are looking for. Observe that the size of the cut is at most $\ell$. 

We argue that there are two subsets of terminals $\widetilde{\tset}_0$ of size $c_0$  and $\widetilde{\tset}_1$ of size $c_1$ that are separated after removing $E_0 \cup E_1 \cup \tilde{E}$. 
Let $\tset_1^0$ and $\tset_1^1$ be the sets of terminals in $\tset_1$ that are on the side of $Q'_0$ and $Q'_1$ respectively (in particular, $\tset_1^0$ cannot reach $Q'_1$  in $G[A'_1]$ after removing $E_1$). 
Notice that $| \tset_0^0 \cup \tset_1^0| \geq c_0$ and $|\tset_0^1 \cup \tset_1^1| \geq c_1$. The following claim completes the proof of the ``if'' part. 
\end{proof}
\begin{claim}
$Q_0 \cup \tset_0^0 \cup \tset_1^0$ and $Q_1 \cup \tset_0^1 \cup \tset_1^1$ are not connected in $G$ after removing $\tilde{E} \cup E_0 \cup E_1$. 
\end{claim}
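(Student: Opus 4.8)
The plan is to verify directly that removing the edge set $\tilde E \cup E_0 \cup E_1$ disconnects $Q_0 \cup \tset_0^0 \cup \tset_1^0$ from $Q_1 \cup \tset_0^1 \cup \tset_1^1$ in $G$. First I would recall the structure: the cut $(A'_0,A'_1)$ partitions $V(G)$, with $Q_0 \subseteq A'_0$, $Q_1 \subseteq A'_1$, and the only edges crossing between $A'_0$ and $A'_1$ are those in $E_G(A'_0,A'_1)$; after the guesses, every crossing edge has its endpoint on the zero-side in $B_0 \cup (\text{terminals assigned to side } 0)$ and on the one-side in $B_1 \cup (\text{terminals assigned to side }1)$, and by construction $\tilde E = E_G(B_0,B_1)$ removes precisely the crossing edges both of whose endpoints are non-terminal boundary vertices. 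So after deleting $\tilde E$, any path from the zero-side to the one-side that crosses $E_G(A'_0,A'_1)$ must use an edge incident to a terminal among $\tset_0$; but terminals have degree one, so such a terminal is an endpoint of the path, not an interior vertex.

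The key step is then a case analysis on any hypothetical path $P$ in $G \setminus (\tilde E \cup E_0 \cup E_1)$ from a vertex $u \in Q_0 \cup \tset_0^0 \cup \tset_1^0$ to a vertex $v \in Q_1 \cup \tset_0^1 \cup \tset_1^1$. If $P$ stays entirely within $G[A'_0]$, then it connects $S_0 = Q_0 \cup (B_0\cap A'_0)\cup \tset_0^0$ to $T_0 = (B_1\cap A'_0)\cup \tset_0^1$ — here I would need to check that $u$ lies in (or is reachable within $A'_0$ only from) $S_0$ and $v$ in $T_0$, using that $\tset_1^0, \tset_1^1 \subseteq A'_1$ so they cannot be the $A'_0$-internal endpoints unless the path already left $A'_0$ — contradicting that $E_0$ is a cut separating $S_0$ from $T_0$ in $G[A'_0]$. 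Symmetrically, if $P$ stays within $G[A'_1]$, it would connect $S_1 := Q'_0 = B_0\cap A'_1$ (together with $\tset_1^0$) to $T_1 := Q'_1 = Q_1\cup(B_1\cap A'_1)$ (together with $\tset_1^1$), contradicting that $E_1$ is a valid $(Q'_0,Q'_1,c'_0,c'_1,\ell')$-constrained cut in $G[A'_1]$, which by definition separates $Q'_0$ from $Q'_1$ and puts $\tset_1^0$ on the $Q'_0$ side and $\tset_1^1$ on the $Q'_1$ side. Finally, if $P$ crosses between $A'_0$ and $A'_1$, it does so through a crossing edge not in $\tilde E$; since the endpoints of such an edge (apart from $\tilde E$) involve a degree-one terminal in $\tset_0$, the crossing edge must be the first or last edge of $P$, so effectively $P$ begins or ends at a terminal in $\tset_0^0$ or $\tset_0^1$ and the remainder of $P$ lies on one side — reducing to one of the two previous cases.

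The main obstacle I anticipate is bookkeeping around exactly which vertices land on which side and ensuring the boundary-vertex guess $B_0 \cup B_1$ is exhaustive: I must argue that every vertex incident to a crossing edge (other than terminals, which are handled separately and assigned via $\tset_0 = \tset_0^0 \cup \tset_0^1$) is indeed in $V(E_G(A'_0,A'_1)) - \tset$, hence gets a side in the guess, so no crossing edge escapes being accounted for by either $\tilde E$ or by incidence to a side-$0$/side-$1$ vertex whose side is consistent with $E_0$ and $E_1$. Once that exhaustiveness is pinned down, the three-case argument closes cleanly, and combined with the cardinality bounds $|\tset_0^0 \cup \tset_1^0| \geq c_0$ and $|\tset_0^1 \cup \tset_1^1| \geq c_1$ already observed, this establishes that $E_0 \cup E_1 \cup \tilde E$ is a valid $(Q_0,Q_1,c_0,c_1,\ell)$-constrained cut, completing the ``if'' direction; the ``only if'' direction follows by taking an optimal constrained cut, reading off the induced partition of $\tset_0$ and of the boundary vertices as the correct guess, and noting the induced restrictions to $G[A'_0]$ and $G[A'_1]$ are feasible for the respective subproblems.
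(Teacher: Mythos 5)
Your cases (1) and (2) — paths staying entirely within $G[A'_0]$ or $G[A'_1]$ — are handled correctly, and the endpoint bookkeeping you sketch there is sound. The gap is in case (3). Your pivotal claim, that after deleting $\tilde E$ ``any path from the zero-side to the one-side that crosses $E_G(A'_0,A'_1)$ must use an edge incident to a terminal among $\tset_0$,'' is false. The guess $(B_0, B_1)$ is a partition of the non-terminal boundary vertices that need \emph{not} align with $(A'_0, A'_1)$: a crossing edge $(x,y)$ with $x \in A'_0$, $y \in A'_1$ can have \emph{both} endpoints guessed into $B_0$ (meaning both $x$ and $y$ are predicted to lie on the zero side of the eventual cut $(A_0,A_1)$), or both into $B_1$. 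Such an edge is not in $\tilde E = E_G(B_0, B_1)$, is not incident to any terminal, and survives the deletion. Consequently your reduction of the crossing case to ``the crossing edge is the first or last edge of $P$, and the remainder lies on one side'' fails: the path can cross through interior non-terminal boundary vertices, and indeed can cross back and forth several times, with prefixes and suffixes that do not straddle the $S_0/T_0$ or $Q'_0/Q'_1$ pairs you need.

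The paper's proof resolves exactly this by not trying to reduce to the single-sided cases. It fixes a path $P$, lets $u$ be the last vertex whose prefix lies entirely in $G[A'_0]$ and $v$ the first vertex whose suffix lies entirely in $G[A'_1]$, splits $P = P_1 P_2 P_3$, and then cases on the $B$-labels of $u$ and $v$: if $u$ and $v$ fall on opposite sides of the $B_0/B_1$ partition, $P_2$ hits $\tilde E$; if both are in $B_0$, then $v \in Q'_0$ and $P_3$ connects $Q'_0$ to $Q'_1$ inside $G[A'_1]$, so $P_3$ hits $E_1$; if both are in $B_1$, then $u \in T_0$ and $P_1$ connects $S_0$ to $T_0$ inside $G[A'_0]$, so $P_1$ hits $E_0$. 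This choice of $u$ and $v$, together with the case split on their $B$-labels rather than on how many times $P$ crosses $(A'_0,A'_1)$, is the ingredient your argument is missing. To repair your proof you would need to adopt something equivalent: in particular, you must allow for crossing edges that are terminal-free and not in $\tilde E$, and you must handle multiple crossings, which your current ``reduce to case 1 or 2'' step does not do.
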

\begin{proof}
Let us consider a path $P$ from $Q_0$ to $Q_1$ in $G$; we view it such that the first vertex starts in $Q_0$ and so on until the last vertex on the path is in $Q_1$.  
Let $u$ be the last vertex the path from the start lies completely in $G[A'_0]$ and $v$ be the first vertex such that the path from $v$ to the end lies completely in $G[A'_1]$. Break path $P$ into $P_1 P_2 P_3$ where $P_1$ is the path from the first vertex to $u$, $P_2$ is the path from $u$ to $v$, and $P_3$ the path from $v$ to the last vertex of $P$ in $Q_1$. 
If $|\{u,v\} \cap B_0|= 1$, we would be done, since $P_2$ contains some edge in $\tilde{E}$. 
So it must be that  (i) $u,v \in B_0$ or (ii) $u,v \in B_1$. In case (i), we have $v \in Q'_0$ while the last vertex of $P$ is in $Q_1 \subseteq Q'_1$, so path $P_3$ is path in $G[A'_1]$ connecting $Q'_0$ to $Q'_1$. Hence,  $P_3$ contains an edge in $E_1$. 
In case (ii), we have that $u \in T_0$, while the first vertex in $P$ is in $Q_0 \subseteq S_0$. Therefore, path $P_1$ is a path in $G[A'_0]$ connecting $S_0$ to $T_0$, which must be cut by $E_0$. 

Similar analysis can be done when considering path $P$ that connects $Q_0$ and $\tset_1^1$, or between $\tset_0^0$ and $Q_1 \cup \tset_1^1$. 
The only (somewhat) different case is when path $P$ connects $Q_0$ to $\tset^1_0$. Assume that $P$ is not completely contained in $G[A'_0]$, otherwise it's trivial. 
Let $u$ be the last vertex on $P$ such that the path from the start to $u$ lies completely inside $G[A'_0]$; and $v$ be the first vertex on $P$ such that the path from $v$ to the end of $P$ lies completely inside $G[A'_0]$. 
Again, we break $P$ into three subpaths $P_1 P_2 P_3$ similarly to before. If $u \in B_1$, we are done because $P_1$ would then contain an edge in $E_0$; or if $v \in B_0$, we are also done since $P_3$ would contain an edge in $E_0$; therefore, $u \in B_0$ and $v \in B_1$, so $P_2$ must contain an edge in $\tilde{E}$. 
\end{proof}

To prove the ``only if'' part, assume that $(A_0,A_1)$ is a valid $(Q_0,Q_1, c_0,c_1, \ell)$-constrained cut. 
We argue that there is a choice of guess such that the sub-problem also finds a valid $(Q'_0,Q'_1,c'_0,c'_1,\ell')$-constrained cut. 
We define $B_i = V(E_G(A_0,A_1)) \cap A_i$ for $i=0,1$, and $\tset_0^i = \tset_0 \cap A_i$ for $i=0,1$. With these choices, we have determined the values of $Q'_0$, $Q'_1$, $c'_0$ and $c'_1$. 
The following claim will finish the proof. 

\begin{claim}
There exists a cut $E_0$ that separates $S_0$ and $T_0$ in $G[A'_0]$ and a cut $E_1$ that is a  $(Q'_0, Q'_1, c'_0, c'_1, \ell')$-constrained cut. 
\end{claim}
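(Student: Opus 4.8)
The plan is to exhibit the cut $E_0\cup E_1$ directly from the assumed valid constrained cut $(A_0,A_1)$, using the guesses $B_i=V(E_G(A_0,A_1))\cap A_i$ and $\tset_0^i=\tset_0\cap A_i$ that we have just fixed. First I would define the ``correct'' cut on the $A'_0$ side: let $E_0$ be the set of edges of $E_G(A_0,A_1)$ that lie inside $G[A'_0]$, i.e. $E_0 = E_G(A_0,A_1)\cap E(G[A'_0])$, and similarly let $E_1 = E_G(A_0,A_1)\cap E(G[A'_1])$, while $\tilde E = E_G(B_0,B_1)$ consists of the crossing edges with one endpoint in each of $A'_0,A'_1$. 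Since $(A'_0,A'_1)$ is itself a minimum cut separating $Q_0$ from $Q_1$, every edge of $E_G(A_0,A_1)$ with one endpoint in $A'_0$ and the other in $A'_1$ must actually be one of the boundary edges $E_G(A'_0,A'_1)$, so each such edge has both endpoints in $V(E_G(A'_0,A'_1))$ and is therefore captured correctly by the guessed partition $B_0\cup B_1$; this gives the clean decomposition $E_G(A_0,A_1) = E_0 \,\dot\cup\, E_1 \,\dot\cup\, \tilde E$.

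Next I would verify that $E_0$ separates $S_0$ from $T_0$ in $G[A'_0]$. Recall $S_0 = Q_0\cup(B_0\cap A'_0)\cup\tset_0^0$ and $T_0 = (B_1\cap A'_0)\cup\tset_0^1$. By construction $S_0\subseteq A_0$ and $T_0\subseteq A_1$: indeed $Q_0\subseteq A_0$ by validity of $(A_0,A_1)$, $\tset_0^i = \tset_0\cap A_i$ by our guess, and $B_i\cap A'_0\subseteq A_i$ since $B_i$ was defined as $V(E_G(A_0,A_1))\cap A_i$. Hence in $G[A'_0]$ the restriction of $(A_0,A_1)$ is a cut separating $S_0$ from $T_0$, and the edges it cuts are exactly $E_0$; so a separating cut of size at most $|E_0|\le |E_G(A_0,A_1)\cap E(G[A'_0])|$ exists, and the minimum such cut (which is what the algorithm computes) is no larger. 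Then I would check that the induced cut on $G[A'_1]$, namely $(A_0\cap A'_1, A_1\cap A'_1)$, is a valid $(Q'_0,Q'_1,c'_0,c'_1,\ell')$-constrained cut: the inclusions $Q'_0 = B_0\cap A'_1\subseteq A_0$ and $Q'_1 = (Q_1\cup(B_1\cap A'_1))\subseteq A_1$ hold by the same reasoning together with $Q_1\subseteq A_1$; the terminal counts $|A_j\cap A'_1\cap\tset_1|\ge c'_j$ follow because $|A_j\cap\tset|\ge c_j$, the terminals of $\tset$ outside $A'_1$ that lie in $A'_0$ contribute exactly $|\tset_0^j|$ to side $j$, and $c'_j = \max(c_j-|\tset_0^j|,0)$; and the edge bound $|E_1|\le \ell' = \ell-|\tilde E|-|E_0|$ follows from $|E_0|+|E_1|+|\tilde E| = |E_G(A_0,A_1)|\le\ell$.

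I expect the main obstacle to be the claim that every ``long-range'' edge of $E_G(A_0,A_1)$ — one crossing between $A'_0$ and $A'_1$ — is already a boundary edge of the minimum cut $(A'_0,A'_1)$ and hence correctly handled by the $B_0,B_1$ guess; this is where the choice of $(A'_0,A'_1)$ as a \emph{minimum} $Q_0$--$Q_1$ cut (rather than an arbitrary one) is essential, and one has to argue that no edge of the valid cut can ``straddle'' the $A'_0/A'_1$ partition except across its true boundary. The remaining verifications — the inclusion bookkeeping for $Q'_0,Q'_1$, the arithmetic for $c'_0,c'_1$, and the additivity $|E_0|+|E_1|+|\tilde E|\le\ell$ — are routine once the decomposition of $E_G(A_0,A_1)$ into its three pieces is established, and together with the already-proved ``if'' direction they complete the proof of the lemma, and therefore of Lemma~\ref{lem: one step}.
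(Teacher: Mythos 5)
Your approach mirrors the paper's: both restrict the assumed valid cut $(A_0,A_1)$ to each side of $(A'_0,A'_1)$, define $E_0$ as the crossing edges inside $A'_0$ and $E_1$ as those inside $A'_1$, and check the containments $S_0\subseteq A_0$, $T_0\subseteq A_1$, $Q'_0\subseteq A_0$, $Q'_1\subseteq A_1$ plus the terminal counting $|A_j \cap A'_1\cap\tset|\geq c_j-|\tset_0^j|$. The bookkeeping you carry out is essentially what the paper intends.

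However, your self-identified ``main obstacle'' is illusory, and the role you assign to minimality there is wrong. The claim that every edge of $E_G(A_0,A_1)$ with one endpoint in $A'_0$ and the other in $A'_1$ belongs to $E_G(A'_0,A'_1)$ is a \emph{tautology}: $E_G(A'_0,A'_1)$ is defined precisely as the set of edges with one endpoint in each part. No property of $(A'_0,A'_1)$ is needed, and in particular minimality contributes nothing here. Minimality of $(A'_0,A'_1)$ is used in the algorithm's analysis for a different purpose, namely to bound $|E_G(A'_0,A'_1)|\leq\ell$ so that the boundary vertex set $V(E_G(A'_0,A'_1))-\tset$ is small and the $2^{O(c)}$ guess count holds. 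So your proof does reach the right conclusion, but the sentence attributing the edge-decomposition to minimality should be cut; it both mislocates where minimality matters and signals a confusion worth fixing. One further caveat: you assert a \emph{disjoint} union $E_G(A_0,A_1)=E_0\,\dot\cup\,E_1\,\dot\cup\,\tilde E$, which you actually need for the budget arithmetic $|E_0|+|E_1|+|\tilde E|\leq\ell$. This holds only when $\tilde E$ is taken to be the crossing edges of $E_G(B_0,B_1)$ (i.e., edges with one endpoint in each of $A'_0,A'_1$); with the literal definition $\tilde E=E_G(B_0,B_1)$, an edge between two boundary vertices both inside $A'_0$ would be counted in both $E_0$ and $\tilde E$. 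The paper is equally terse on this point, but you should make the intended disjointness explicit rather than just asserting ``clean decomposition.''
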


\begin{proof}
First, we remark that $|E_G(A_0, A_1)| \leq \ell$ and 
\[E_G(A_0, A_1) = E_G(B_0, B_1) \cup E_G(A'_0 \cap A_0, A'_0 \cap A_1) \cup E_G(A'_1 \cap A_0, A'_1 \cap A_1)\] 
To complete the proof of the claim, it suffices to show that  $E_G(A'_0 \cap A_0, A'_0 \cap A_1)$ is an $(S_0,T_0)$ cut in $G[A'_0]$ and that $E_G(A'_1 \cap A_0, A'_1 \cap A_1)$ is a valid constrained cut in $G[A'_1]$. 

The first claim is simple: Since $S_0 \subseteq A_0$ and $T_0 \subseteq A_1$, any path from $S_0$ to $T_0$ in $G[A'_0]$ must contain an edge in $E_G(A'_0 \cap A_0, A'_0 \cap A_1)$. 

The second claim is also simple: (i) $Q'_0 \subseteq A_0$ and $Q'_1 \subseteq A_1$, so the edge set $E_G(A'_1 \cap A_0, A'_1 \cap A_1)$ separates $Q'_0$ and $Q'_1$, (ii) For $i=0,1$, the number of terminals on the $Q'_i$-side must be at least $c_i - |\tset_0^i|$, for otherwise this would contradict the fact that $(A_0,A_1)$ is a $(Q_0,Q_1,c_0,c_1,\ell)$-constrained cut.  
\end{proof}

\begin{lemma}
	\label{lem:runtime}
	Let $c = \max\{\ell, c_0, c_1\}$. Then, the algorithm to reduce the problem of finding a $(Q_0,Q_1,c_0,c_1,\ell)$-constrained cut with $\min{c_0, c_1} > 0$ to an instance to find a $(Q'_0,Q'_1,c'_0,c'_1,\ell)$-constrained cut with $\min{c_1, c_0} =0$ terminates in time $2^{O(c^2)}\cdot k^2\cdot {O}(m)$.
\end{lemma}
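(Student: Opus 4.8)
\textbf{Proof plan for Lemma~\ref{lem:runtime}.}

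The plan is to bound the running time by analysing the recursion tree induced by the repeated application of Lemma~\ref{lem: one step}, where each node of the tree corresponds to one constrained-cut sub-instance. First I would observe that the work done \emph{locally} at a single node — before spawning the recursive calls — consists of: (i) computing one minimum $s$-$t$ cut $(A'_0,A'_1)$ in $G$ with the sets $Q_0,Q_1$ forced to opposite sides and one terminal on each side, which by a standard max-flow computation with flow value capped at $\ell+1 \le c+1$ costs $O(c\cdot m)$; (ii) enumerating the $2^{O(c)}$ guesses for the partition $\tset_0 = \tset_0^0 \cup \tset_0^1$ and for the partition $B_0 \cup B_1$ of the at most $2\ell \le 2c$ non-terminal endpoints of $E_G(A'_0,A'_1)$; and (iii) for each guess, solving one $(S_0,T_0)$-minimum-cut instance inside $G[A'_0]$ (again an $O(c\cdot m)$ flow computation) and setting up the parameters of the single recursive sub-instance in $G[A'_1]$. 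Summing (i)–(iii), the local cost at one node is $2^{O(c)}\cdot O(c\cdot m) = 2^{O(c)}\cdot m$, which I will fold into $2^{O(c^2)}\cdot m$ to match the claimed bound with room to spare; the factor $k^2$ will appear only to absorb bookkeeping such as identifying which vertices of $G$ lie in $\tset$ and restricting the graph to $A'_0,A'_1$.

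Next I would bound the \emph{size} of the recursion tree. By the discussion following Lemma~\ref{lem: one step}, each application strictly decreases $c_0 + c_1$ by at least one while keeping $c_0+c_1$ nonnegative, so the depth of the recursion until $\min(c'_0,c'_1)=0$ is reached is at most $c_0+c_1 \le 2c$. Each node has branching factor at most $2^{O(c)}$ (the product of the two guess-counts), so the total number of nodes is at most $\bigl(2^{O(c)}\bigr)^{2c} = 2^{O(c^2)}$. Multiplying the number of nodes by the per-node cost gives total time $2^{O(c^2)} \cdot m$, and after including the $k^2$ bookkeeping overhead we obtain $2^{O(c^2)}\cdot k^2\cdot O(m)$, as required. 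I would also note that the recursion is invoked only on the subgraph $G[A'_1]$, and although different nodes at the same depth operate on overlapping vertex sets, this does not matter for the bound because we are counting tree nodes, not charging disjoint pieces of the graph.

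The one point that needs care — and which I expect to be the main obstacle — is verifying that each of the $2^{O(c)}$ guesses at a node really produces a \emph{single} recursive sub-instance rather than several, and that the parameter $c'_0 + c'_1$ genuinely drops. This is exactly where the correctness lemma and its claims are used: the cut $E_0$ inside $G[A'_0]$ is a plain min-cut (no recursion), $G[A'_0]$ has fewer than $c_0 \le c$ terminals so it is handled directly, and the only recursive call is the $(Q'_0,Q'_1,c'_0,c'_1,\ell')$-instance in $G[A'_1]$ with $c'_0+c'_1 = \max(c_0-|\tset_0^0|,0) + \max(c_1-|\tset_0^1|,0) \le c_0 + c_1 - |\tset_0^0| - |\tset_0^1| < c_0+c_1$, the strict inequality holding because $\tset_0 = \tset_0^0 \cup \tset_0^1$ is nonempty (each side of $(A'_0,A'_1)$ contains a terminal by construction). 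Once this single-child, strictly-decreasing-potential structure is pinned down, the depth bound $2c$, the branching bound $2^{O(c)}$, and hence the node count $2^{O(c^2)}$ follow immediately, and the running-time statement drops out by multiplying through.
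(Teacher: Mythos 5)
Your recursion-tree analysis matches the paper exactly: depth at most $2c$ because $c_0+c_1$ strictly decreases, branching factor $2^{O(c)}$ from the two guesses, hence $2^{O(c^2)}$ nodes; per-node local work dominated by min-cut computations bounded to $O(cm)$ by capping the flow at $c$; and the observation that each guess spawns exactly one recursive child (the $G[A'_0]$ part is solved directly by a plain min-cut). The verification that $\tset_0 \neq \emptyset$ forces $c'_0 + c'_1 < c_0 + c_1$ is also correct and matches the paper's reasoning. So far, same approach.

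The one point where you go wrong is the provenance of the $k^2$ factor. You describe step (i) as ``computing \emph{one} minimum $s$-$t$ cut $(A'_0,A'_1)$ in $G$ with the sets $Q_0,Q_1$ forced to opposite sides \emph{and one terminal on each side},'' and then say the $k^2$ factor ``will appear only to absorb bookkeeping such as identifying which vertices of $G$ lie in $\tset$.'' That does not hold up: a single max-flow/min-cut between $Q_0$ and $Q_1$ cannot enforce the side constraint ``at least one terminal on each side'' — that constraint is not expressible as a single $s$-$t$ separation. The paper makes this explicit: the algorithm guesses an ordered pair $(t_0, t_1)$ of terminals to be placed on the two sides and runs a min $\bigl(Q_0 \cup \{t_0\}\bigr)$–$\bigl(Q_1 \cup \{t_1\}\bigr)$ cut for each of the $k^2$ choices. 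That enumeration is precisely where the multiplicative $k^2$ comes from; ``bookkeeping'' such as identifying terminals or restricting to $G[A'_i]$ is only $O(n+m)$, not a $k^2$ factor. Your final bound happens to be correct because you inserted $k^2$ anyway, but the argument as written leaves a real step unjustified — you have asserted, not shown, that a single min-cut computation finds the cut $(A'_0, A'_1)$ with a terminal on each side. Replacing that assertion with the $k^2$-pair enumeration closes the gap and makes the proof match the paper's.
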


\begin{proof}
	Lemma~\ref{lem: one step} implies that the depth of the recursion tree is at the most $2c$ and that each recursive step reduces to solving $2^{O(c)}$ sub-instances. Hence, the total number of nodes in the recursion tree is $2^{O(c^2)}$. 
	
	 The total runtime outside the recursive calls is dominated by a  minimum $s-t$-cut computation. However, we observe that we are only interested in minimum cuts that are of value at the most $c$. Hence, such a cut can be found in time $O(mc)$ using any standard augmentation path based algorithm.
	 Also, recall that we are looking for cuts that have at least one terminal on each side and hence we need to make $k^2$ guesses. The total runtime for this procedure is $k^2\cdot {O}(mc)$ and we have the lemma.
\end{proof}

\subsection{Handling the base case} 
In this subsection, we prove Lemma~\ref{lem:algo for base}, i.e.\ we present
an algorithm that finds a $(Q_0,Q_1,c_0,0,\ell)$-constrained cut $(A'_0,A'_1)$. %
We first consider the case of $c_0=0$: since neither side of the cut must
contain any terminals, we can simply compute a minimum-cut between $Q_0$ and
$Q_1$. %
If one of these is empty (say $Q_1$), we take $A'_0 = V(G)$, $A'_1 =
\emptyset$. %
In any case, let $E_1$ be the edges of the cut. %
Now, there are two possibilities: if $|E_1| \leq \ell$, our cut is a solution
to the subproblem; %
if $|E_1| > \ell$, then there is no cut separating $Q_0$ from $Q_1$ with at
most $\ell$ edges, and therefore, there is no valid constrained cut.

We can now focus on the case where $c_0 >0$. 
We can further assume that $|\tset| \geq c_0$; otherwise, there is no feasible
solution. %
For simplification, we also assume that $Q_0$ is connected; if it is not, we
can add fake edges to make it connected in the run of the algorithm, which we
can remove afterwards (these edges will never be cut, since $Q_0 \subseteq A'_0$). %
% @TODO: Mention that the hard case is finding multiple components

\paragraph{Important Cuts.}
The main tool we will be using is the notion of important cuts, introduced by
Marx \cite{Marx06} (see \cite{CyganFKLMPPS15} and references within for other
results using this concept).

\begin{definition}[Important cut]
Let $G$ be a graph and $X,Y \subseteq V(G)$ be disjoint subsets of vertices of
$G$.

A cut $(S_X, S_Y)$, $X \subseteq S_X$, $Y \subseteq S_Y$ is an important cut
if it has (inclusion-wise) maximal reachability (from $X$) among all cuts with
at most as many edges. %
In other words, there is no cut $(S'_X, S'_Y)$, $X \subseteq S'_X$, $Y
\subseteq S'_Y$, such that $|E(S'_X, S'_Y)| \leq |E(S_X, S_Y)|$ and $S_X
\subsetneq S'_X$. %
\end{definition}

\begin{proposition}[{\cite{CyganFKLMPPS15}}]
\label{prop:impcut:exist}
Let $G$ be an undirected graph and $X, Y \subseteq V(G)$ two disjoint sets of vertices. %

Let $(S_X, S_Y)$ be an $(X,Y)$-cut. Then there is an important $(X,Y)$-cut $(S'_X, S'_Y)$ (possibly $S_X = S'_X$) such that $S_X \subseteq S'_X$ and $|E(S'_X, S'_Y)| \leq |E(S_X, S_Y)|$.
\end{proposition}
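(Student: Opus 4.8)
The plan is to prove this by a simple extremal (``pushing'') argument, which needs no structural facts about cuts beyond the finiteness of the vertex set. The idea is to take, among all cuts that dominate $(S_X, S_Y)$ in the $X$-side and cost no more edges, one whose $X$-side is largest, and then observe that maximality forces it to be important.

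Concretely, I would first consider the family $\mathcal{C}$ of all $(X,Y)$-cuts $(S'_X, S'_Y)$ of $G$ with $S_X \subseteq S'_X$ and $|E(S'_X, S'_Y)| \leq |E(S_X, S_Y)|$. This family is nonempty, since $(S_X,S_Y) \in \mathcal{C}$, and finite, since an $(X,Y)$-cut is determined by a bipartition of $V(G)$. Next I would pick $(S'_X, S'_Y) \in \mathcal{C}$ with $|S'_X|$ as large as possible (ties broken arbitrarily); such a member exists because $\mathcal{C}$ is finite and nonempty. By construction $S_X \subseteq S'_X$ and $|E(S'_X, S'_Y)| \leq |E(S_X, S_Y)|$, so it remains only to check that $(S'_X, S'_Y)$ is an important $(X,Y)$-cut. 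Suppose not: then by the definition in the excerpt there is an $(X,Y)$-cut $(S''_X, S''_Y)$ with $|E(S''_X, S''_Y)| \leq |E(S'_X, S'_Y)|$ and $S'_X \subsetneq S''_X$. But then $S_X \subseteq S'_X \subsetneq S''_X$ and $|E(S''_X, S''_Y)| \leq |E(S'_X, S'_Y)| \leq |E(S_X, S_Y)|$, so $(S''_X, S''_Y) \in \mathcal{C}$ with a strictly larger $X$-side, contradicting the choice of $(S'_X, S'_Y)$. Hence $(S'_X, S'_Y)$ is important, proving the proposition.

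There is essentially no hard step here; the only point requiring a little care is threading the two inequalities $|E(S''_X, S''_Y)| \leq |E(S'_X, S'_Y)|$ and $|E(S'_X, S'_Y)| \leq |E(S_X, S_Y)|$ through the definition of ``important'' so that the hypothetical violator $(S''_X, S''_Y)$ is seen to lie back in $\mathcal{C}$. An equivalent iterative phrasing avoids even naming $\mathcal{C}$: starting from $(S_X, S_Y)$, as long as the current cut is not important replace it by a witness to non-importance; the $X$-side grows strictly at each step and is bounded by $V(G)$, so the process halts at an important cut with no more edges than $(S_X, S_Y)$. I would also remark that the sharper quantitative facts about important cuts (for instance, that there are at most $4^{k}$ important cuts of size at most $k$) do require submodularity of the cut function and an uncrossing argument, but the existence statement claimed here does not.
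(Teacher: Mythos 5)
Your proof is correct. The paper states this proposition as imported from Cygan et al.\ (Proposition~\ref{prop:impcut:exist} carries a citation and is given without proof), so there is no in-paper argument to compare against; the extremal argument you give — take, among all $(X,Y)$-cuts dominating $(S_X,S_Y)$ on the $X$-side with no more cut edges, one of maximum $|S'_X|$, and note that a witness to non-importance would contradict maximality — is exactly the standard argument for this existence fact, and your iterative rephrasing is the form it usually takes in the source. Your closing remark is also accurate: the quantitative $4^k$ bound needs submodularity and uncrossing, but the bare existence of a dominating important cut does not, and your proof correctly avoids invoking any of that machinery.
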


\begin{theorem}[{\cite{CyganFKLMPPS15}}]
Let $G$ be an undirected graph, $X, Y \subseteq V(G)$ be two disjoint sets of
vertices and $k\geq 0$ be an integer. %
There are at most $4^k$ important $(X,Y)$-cuts of size at most $k$.
\end{theorem}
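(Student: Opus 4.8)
The plan is to prove the stronger bound that the number of important $(X,Y)$-cuts of size at most $k$ is at most $2^{2k-\lambda}$, where $\lambda=\lambda(X,Y)$ denotes the minimum size of an $(X,Y)$-separator; since $\lambda\ge 0$ this is at most $4^k$. The argument is a binary branching whose depth is controlled by the potential $\mu:=2k-\lambda$.

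The first ingredient is the structure of minimum cuts. Writing $d(Z)=|E(Z,V(G)\setminus Z)|$, submodularity of $d$ shows that the source sides of minimum $(X,Y)$-cuts are closed under union, so there is a unique \emph{source-maximal} minimum cut $(S_{\max},V(G)\setminus S_{\max})$. I would then establish the key structural fact: every important $(X,Y)$-cut $(S_X,S_Y)$ satisfies $S_{\max}\subseteq S_X$. Indeed, $S_X\cap S_{\max}$ still separates $X$ from $Y$, so $d(S_X\cap S_{\max})\ge\lambda=d(S_{\max})$; submodularity then gives $d(S_X\cup S_{\max})\le d(S_X)$, and since $S_X\cup S_{\max}$ contains $S_X$ and still excludes $Y$, maximality of reachability in an important cut forces $S_X\cup S_{\max}=S_X$.

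Now the branching. If $\lambda=0$ then $X,Y$ lie in different components and there is a unique important cut; if $\lambda>k$ there is none — both are base cases. Otherwise $1\le\lambda\le k$, and I pick any edge $e=uv$ with $u\in S_{\max}$, $v\notin S_{\max}$ (one exists since $d(S_{\max})=\lambda\ge1$). Every important $(X,Y)$-cut $(S_X,S_Y)$ has $u\in S_X$, and I branch on the side of $v$:
\begin{itemize}
\item If $v\in S_Y$, then $e$ is a cut edge, and $(S_X,S_Y)$ is an important $(X,Y)$-cut of size $\le k-1$ in $G-e$; since deleting one edge decreases the min-cut value by at most one, the potential drops by at least $1$.
\item If $v\in S_X$, then $(S_X,S_Y)$ is an important $(X\cup\{v\},Y)$-cut of size $\le k$ in $G$, and because $v\notin S_{\max}$ the minimum $(X\cup\{v\},Y)$-cut has value $\ge\lambda+1$ (a value-$\lambda$ cut here would be a minimum $(X,Y)$-cut whose source side unioned with $S_{\max}$ would violate maximality of $S_{\max}$), so again the potential drops by at least $1$.
\end{itemize}
These two maps send important cuts of $G$ to valid sub-instances of the same kind, are injective, and have disjoint images (distinguished by the side of $v$); hence if $f(\mu)$ bounds the number of important cuts over all instances with potential at most $\mu$, then $f(\mu)\le 2f(\mu-1)$ with $f\le 1$ at the two base cases, so $f(\mu)\le 2^{\mu}=2^{2k-\lambda}\le 4^k$. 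Equivalently, the branching tree has depth at most $2k$ and every important cut is routed to a distinct leaf carrying at most one important cut.

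The step needing the most care is checking that the two maps really land in sub-instances of the same type and that the potential strictly decreases: namely, that the restricted or relocated cut is still an \emph{important} cut (not merely a separator) in the new instance, and in particular the strict inequality $\lambda(X\cup\{v\},Y)\ge\lambda+1$ in the second branch — this is precisely where choosing $e$ to leave the \emph{maximal} minimum cut $S_{\max}$, rather than an arbitrary minimum cut, is essential. The remaining pieces — the submodularity bookkeeping, the two degenerate base cases, and the final geometric-sum count — are routine.
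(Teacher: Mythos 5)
Your proof follows the standard branching argument (essentially the one in Cygan et al.\ and Marx, which the paper simply cites), and the core pieces are correct: submodularity of $d$ gives a unique source-maximal minimum cut $S_{\max}$, every important cut contains $S_{\max}$, and the two branches on an edge $uv$ leaving $S_{\max}$ land in valid subinstances where the potential $\mu = 2k-\lambda$ drops by at least one, yielding $f(\mu)\leq 2f(\mu-1)$.

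The gap is the base case $\lambda=0$, where you assert there is a unique important cut. That is false. Take $G$ to be the two disjoint edges $xa$ and $yb$, with $X=\{x\}$, $Y=\{y\}$. Then $\lambda=0$ and $S_{\max}=\{x,a\}$, but both $(\{x,a\},\{y,b\})$ (size $0$) and $(\{x,a,b\},\{y\})$ (size $1$) are important cuts; more generally, with $c$ pendant edges hanging off $Y$-vertices one gets $2^c$ important cuts of size at most $c$ while $\lambda=0$. This is not a vacuous corner of the recursion: branch (1) deletes an edge of the minimum cut, sending $(k,\lambda)\to(k-1,\lambda-1)$, so $k-\lambda$ is invariant along that branch, and from any start with $\lambda<k$ the recursion tree has a leaf with $\lambda=0$ and $k\geq 1$ after $\lambda$ consecutive branch-(1) steps. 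At such a leaf the correct count can be as large as $2^{2k}$ (and that is exactly what the claimed bound requires), so asserting one important cut there breaks the inductive bound $f(\mu)\leq 2^\mu$. Fixing it requires a genuine extra argument for $\lambda=0$ --- one must bound the important cuts that extend $S_{\max}$ into the $Y$-side components, which $S_{\max}$ does not touch --- rather than merely observing that $X$ and $Y$ lie in different components.
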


\begin{proposition}
\label{prop:violcut:impcutforall}
Let $G$ be an undirected graph and $X, Y \subseteq V(G)$ two disjoint sets of
vertices, and let $(S_X, S_Y)$ be an important $(X,Y)$-cut. %

Then $(S_X, S_Y)$ is also an important $(X',Y)$-cut for all $X' \subseteq S_X$.
\end{proposition}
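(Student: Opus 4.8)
The plan is to derive this directly from the definition of an important cut by a short monotonicity argument; no submodularity or uncrossing is needed. First I would observe that $(S_X,S_Y)$ is itself an $(X',Y)$-cut: the hypothesis $X'\subseteq S_X$ gives the required containment on the $S_X$-side, while $Y\subseteq S_Y$ holds because $(S_X,S_Y)$ is an $(X,Y)$-cut (note also that $X'\cap Y=\emptyset$ automatically, since $S_X\cap S_Y=\emptyset$). So the only thing left to verify is the maximal-reachability (``importance'') condition for the pair $(X',Y)$.

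Suppose, for contradiction, that $(S_X,S_Y)$ is \emph{not} an important $(X',Y)$-cut. By the ``in other words'' formulation of the definition, there is an $(X',Y)$-cut $(S'_X,S'_Y)$ with $|E(S'_X,S'_Y)|\le|E(S_X,S_Y)|$ and $S_X\subsetneq S'_X$. The key point is that this very cut is also an $(X,Y)$-cut: indeed $X\subseteq S_X\subsetneq S'_X$, and $Y\subseteq S'_Y$ since $(S'_X,S'_Y)$ is an $(X',Y)$-cut. Thus $(S'_X,S'_Y)$ is an $(X,Y)$-cut of no larger size with $S_X\subsetneq S'_X$, contradicting the assumption that $(S_X,S_Y)$ is an important $(X,Y)$-cut. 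Hence $(S_X,S_Y)$ is an important $(X',Y)$-cut, as claimed.

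There is essentially no obstacle here; the only care needed is to keep the containments straight, and in particular to use $X'\subseteq S_X$ (not merely $X'\subseteq V(G)$) precisely at the step where we certify that the would-be better $(X',Y)$-cut $(S'_X,S'_Y)$ still contains $X$ on its $S'_X$-side. Alternatively, the same conclusion is immediate from Proposition~\ref{prop:impcut:exist}: applying it to $(S_X,S_Y)$ regarded as an $(X',Y)$-cut produces an important $(X',Y)$-cut $(S'_X,S'_Y)$ with $S_X\subseteq S'_X$ and $|E(S'_X,S'_Y)|\le|E(S_X,S_Y)|$; if the containment were strict, $(S'_X,S'_Y)$ would again be an $(X,Y)$-cut violating importance of $(S_X,S_Y)$ for $(X,Y)$, so $S'_X=S_X$ and $(S_X,S_Y)$ is itself the important $(X',Y)$-cut delivered by the proposition.
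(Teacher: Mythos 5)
Your argument is correct and is essentially the same contradiction argument the paper gives: assume $(S_X,S_Y)$ is not an important $(X',Y)$-cut, obtain a dominating cut $(S'_X,S'_Y)$ with $S_X\subsetneq S'_X$ and no more edges, note $X\subseteq S_X\subsetneq S'_X$ makes it an $(X,Y)$-cut, and contradict importance of $(S_X,S_Y)$ for $(X,Y)$. The only cosmetic difference is that you invoke the definition directly on the primary route (and defer Proposition~\ref{prop:impcut:exist} to the alternative), whereas the paper goes through Proposition~\ref{prop:impcut:exist} immediately; you also explicitly check the easy fact that $(S_X,S_Y)$ is an $(X',Y)$-cut, which the paper leaves implicit.
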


\begin{proof}
Assume that the statement is false for contradiction. Then there is an
important cut $(S'_X, S'_Y)$ for $(X', Y)$, with $|E(S'_X, S'_Y)| \leq |E(S_X,
S_Y)|$ and $S_X \subsetneq S'_X$ by Proposition \ref{prop:impcut:exist}. %
But then, $X \subseteq S_X \subseteq S'_X$, which means $(S'_X, S'_Y)$ is an
$(X,Y)$-cut, and therefore $(S_X, S_Y)$ is not an important cut for $(X,Y)$,
which is a contradiction.
\end{proof}

\paragraph{Cut profile vectors.}
In order to make the exposition of the algorithm clearer, we introduce the concept of cut profile vectors.

\begin{definition}
\label{def:k0:slot}
Let $c, \ell \geq 0$. A \emph{cut profile vector} is a vector of $\lambda \leq c$
pairs of numbers $\{(\kappa_i, \ell_i)\}_{i \in [\lambda]}$, with $\kappa_i \in
[c-1]$, $\ell_i \in [\ell]$, satisfying
\[
c \leq \sum_{i = 1}^\lambda \kappa_i \leq 2c, \quad \quad \sum_{i=1}^\lambda \ell_i \leq \ell
\]
Each of the pairs $(\kappa_i, \ell_i)$ is called a \emph{slot} of this profile. %
We say a cut $(A,B)$ is compatible with a slot $(\kappa_i, \ell_i)$ if $|A
\cap \tset| = \kappa_i$ and $|E(A,B)| = \ell_i$
\end{definition}

\begin{observation}
There are at most $c^c \cdot \ell^c$ different cut profile vectors. 
\end{observation}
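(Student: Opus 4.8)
The plan is to count cut profile vectors by first fixing the length $\lambda$ of the vector and then bounding the number of ways to choose the $\lambda$ slots. A cut profile vector consists of $\lambda \le c$ ordered pairs $(\kappa_i, \ell_i)$, where each $\kappa_i$ ranges over $[c-1]$ (so at most $c-1 < c$ choices) and each $\ell_i$ ranges over $[\ell]$ (so at most $\ell$ choices). Thus, ignoring the two summation constraints entirely (which only restrict the set of valid vectors, hence can only decrease the count), the number of vectors of a fixed length $\lambda$ is at most $(c \cdot \ell)^\lambda$.

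Next, I would sum over all admissible lengths $\lambda \in \{1, \dots, c\}$. Since $c \cdot \ell \ge 1$, the term $(c\ell)^\lambda$ is nondecreasing in $\lambda$, so
\[
\sum_{\lambda = 1}^{c} (c\ell)^\lambda \le c \cdot (c\ell)^c = c^{c+1}\ell^c.
\]
A slightly cleaner bound avoiding the extra factor of $c$: since each slot index $i$ can be thought of as contributing a pair from a set of size $(c-1)\ell \le c\ell$, and there are at most $c$ slots, one can pad every profile to exactly $c$ slots by appending ``dummy'' pairs from an augmented alphabet of size $c\ell + 1$, giving at most $(c\ell+1)^c \le (2c\ell)^c$ profiles; but the simplest statement matching the paper is just $c^c \cdot \ell^c$ after absorbing the length choice. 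To get exactly the claimed bound $c^c \ell^c$, I would argue directly: a profile is determined by the (ordered) sequence of its slots together with its length; encode it as a function from $[c]$ to $([c-1]\times[\ell]) \cup \{\star\}$ where $\star$ marks ``no slot here'', padding trailing positions — this set has size $((c-1)\ell + 1)^c$, which is at most $c^c\ell^c$ provided $c,\ell \ge 1$ (since $(c-1)\ell + 1 \le c\ell$).

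I do not anticipate a real obstacle here; the only mild care needed is (i) making sure the encoding of variable-length profiles into fixed-length strings is injective so the count is an upper bound, and (ii) checking the inequality $(c-1)\ell + 1 \le c\ell$, which holds whenever $\ell \ge 1$ (equivalently $c - 1 \le c\ell - 1$, i.e. $c\ell \ge c$, true for $\ell \ge 1$), together with the edge cases $c = 0$ or $\ell = 0$ where the definition forces the empty profile and the bound reads $1 \le 1$ vacuously or by convention. The whole argument is a two-line counting estimate, so the ``hard part'' is merely bookkeeping the off-by-one in the slot alphabet size.
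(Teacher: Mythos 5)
Your counting argument is correct: padding each profile to a fixed-length-$c$ string over the alphabet $\bigl([c-1]\times[\ell]\bigr)\cup\{\star\}$ of size $(c-1)\ell+1\le c\ell$ (using $\ell\ge 1$) yields an injective encoding, hence at most $(c\ell)^c=c^c\ell^c$ profiles. The paper states this observation without proof as an immediate count, and your write-up is exactly that standard argument with the off-by-one bookkeeping made explicit.
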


Given a cut $(A,B)$, a cut profile vector represents the bounds for terminals
covered and cut edges for each of the components of $G[A]$: there are $\lambda$
connected components, and component $C_i$ contains $\kappa_i$ terminals and
has $\ell_i$ cut edges. %
Our algorithm will enumerate all the possible cut profile vectors and, for each of
them, try to find a solution that fits the constraints given by the input. % 
If there is a solution to the problem, there must be a corresponding profile 
vector, and therefore the algorithm finds a solution. %
We refer to Figure \ref{alg:CC:k0} for a formal description of the algorithm.

We will now show that, if there is a solution to the problem, our algorithm
always finds a solution. This implies that, when we output ``No Valid
Solution'', there is no solution. %
From now on, we assume that there is a solution to the problem. %
Let $(A,B)$ be a solution that minimizes the number of connected components of
$G[A]$. %

Let $\cset_0$ be the set of all important cuts $(C, \bar C)$ for $(Q_0, Q_1)$, and let
$\cset$ be the set of all important cuts $(C, \bar C)$ for $(t, Q_1)$, for any $t \in
\tset$.

\begin{lemma}
\label{lem:violcut:impcutcomps}
There is a solution $(A', B')$ such that every connected component $C$ of
$G[A']$ corresponds to an important cut $(C, \bar C)$ in $\cset_0$ or $\cset$. %
Furthermore, the number of connected components of $G[A']$ is not greater than
that of $G[A]$. %
%$(C, \bar C)$ for $(t, Q_1)$, $t \in
%\tset$ or $(Q_0, Q_1)$.
\end{lemma}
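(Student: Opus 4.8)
The plan is to take a solution $(A,B)$ with the minimum number of connected components of $G[A]$ and rewrite each component to be an important cut, without increasing the number of components or violating any of the constraints. First I would observe that since each component $C$ of $G[A]$ is disjoint from $Q_1$ and $B \supseteq Q_1$, the pair $(C, V(G)\setminus C)$ is a $(C \cap (Q_0 \cup \tset), Q_1)$-cut; in particular, replacing $C$ by a superset $C' \supseteq C$ with $C' \cap Q_1 = \emptyset$ and $|E(C', \overline{C'})| \le |E(C,\overline C)|$ can only help, because it keeps all the terminals of $C$ on the zero side (so $c_0$ is still met), it does not decrease the count toward $c_0$, and it does not increase the total number of cut edges. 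The key tool is Proposition~\ref{prop:impcut:exist}: for each component $C$ there is an important $(C, Q_1)$-cut $(C', \overline{C'})$ with $C \subseteq C'$ and no more edges.

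The main step is a careful ``uncrossing'' argument to handle the interaction between components. I would process the components of $G[A]$ one at a time, maintaining the invariant that the current zero side $A'$ is a disjoint union of sets each of which is an important cut in $\cset_0 \cup \cset$. For the current component $C$, apply Proposition~\ref{prop:impcut:exist} to get an important $(C,Q_1)$-cut $C' \supseteq C$. If $C$ contained a vertex of $Q_0$, then $Q_0 \subseteq C'$ (since $Q_0$ is connected and lies on the zero side), so $C'$ is an important cut for $(Q_0, Q_1)$, hence in $\cset_0$; otherwise $C$ contains some terminal $t$, and since $t \in C \subseteq C'$, Proposition~\ref{prop:violcut:impcutforall} shows $C'$ is an important $(t, Q_1)$-cut, hence in $\cset$. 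The difficulty is that the enlarged set $C'$ may now overlap other components of $G[A']$; I would argue that in that case one can \emph{merge} $C'$ with all components it touches, and the union is still contained in a single important cut (by submodularity of the cut function, the union of two sets each realizing an important cut, when they touch, has cut value no larger than the max of the two, so it is dominated by an important cut containing it). Merging strictly decreases the component count, contradicting minimality of $(A,B)$ unless $C'$ is already disjoint from everything else — so in the minimal solution the enlargements are automatically disjoint.

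Once all components have been replaced, I would check the three constraints for the resulting cut $(A',B')$: (i) $Q_0 \subseteq A'$ and $Q_1 \subseteq B'$ hold because $Q_0$ lies in the (unique) component that got enlarged to a $\cset_0$-cut and no enlargement ever swallows a vertex of $Q_1$; (ii) $|E(A',B')| \le |E(A,B)| \le \ell$ since each component's cut edges only decreased and distinct components contribute disjoint edge sets; (iii) the terminal count $|A' \cap \tset| \ge |A \cap \tset| \ge c_0$ since every component kept all its terminals. Finally the component count of $G[A']$ is at most that of $G[A]$ by construction (each step replaces one component by one component, or merges, which only decreases the count). The main obstacle I anticipate is making the merging/uncrossing rigorous: one must ensure that when an important-cut enlargement of one component overlaps another component (itself an important cut), the union is still ``below'' some important cut in $\cset_0 \cup \cset$ and that this does not cascade badly — this is where submodularity of the edge-boundary function and the maximal-reachability definition of important cuts must be combined carefully, and it is the crux that lets us invoke minimality of $(A,B)$ to conclude the enlargements are disjoint in the solution we actually analyze.
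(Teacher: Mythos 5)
Your overall plan (grow each component of $G[A]$ into an important cut from $\cset_0 \cup \cset$ via Propositions~\ref{prop:impcut:exist} and~\ref{prop:violcut:impcutforall}) is the right instinct, and your classification of each grown set as lying in $\cset_0$ (if it absorbs $Q_0$) or $\cset$ (if it contains a terminal) is exactly correct. However, the crucial step where you ``uncross'' overlapping enlargements is wrong. You claim that for the edge-boundary function $f(S)=|E(S,\bar S)|$, ``the union of two sets each realizing an important cut, when they touch, has cut value no larger than the max of the two.'' This inequality $f(S\cup T)\le\max(f(S),f(T))$ is simply false for overlapping sets. For a concrete counterexample, take $V=\{a,b,x,y,a',b',x',y'\}$ with edges $ax,ay,bx,by$ and pendants $aa',bb',xx',yy'$; then for $S=\{a,x\}$ and $T=\{a,y\}$ we have $f(S)=f(T)=4$ but $f(S\cup T)=5$. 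Submodularity only gives $f(S\cup T)\le f(S)+f(T)-f(S\cap T)$, i.e.\ subadditivity, not the max bound. With the max bound gone, your contradiction-to-minimality argument collapses: you cannot conclude that the merged cut is still a valid solution with fewer components, so you cannot conclude the enlargements are automatically disjoint in a minimal solution. Moreover, even if you fall back to subadditivity, which \emph{does} show $\bigl(\bigcup_i C'_i,\ \overline{\bigcup_i C'_i}\bigr)$ is a valid solution with $\le\ell$ cut edges and no more components, the resulting components of $G[\bigcup_i C'_i]$ are unions of the $C'_i$'s and are not themselves guaranteed to be single important cuts, which is what the lemma actually asserts.

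The paper sidesteps disjointness entirely. It iteratively modifies the current solution $(A,B)$: if a component $C$ of $G[A]$ has neither terminals nor $Q_0$, it is moved to $B$ (removing cut edges, reducing components); otherwise, take the important cut $C'\supsetneq C$ from Proposition~\ref{prop:impcut:exist} and replace the cut by $(A\cup C',B\setminus C')$. Here there is no need for $C'$ to avoid other components of $A$: one verifies directly that the number of crossing edges does not increase (the edges newly added to the cutset are a subset of $E(C',\bar C')$, which is no larger than $E(C,\bar C)$, all of which were already cut edges), the number of components of $G[A]$ does not increase, and the pair (\#components, \#vertices in a terminal-containing component of $G[A]$) strictly progresses in each step, guaranteeing termination at a solution whose every component is an important cut. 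No uncrossing or submodularity argument is invoked. You should also note that your statement ``otherwise $C$ contains some terminal $t$'' silently drops the case of a component with neither $Q_0$ nor terminals; this is harmless under your minimality assumption (such a component could be moved to $B$, contradicting minimality), but it deserves a sentence, and in the paper it is the explicit first case of the iteration.
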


\begin{proof}
We will show an iterative process that turns a solution $(A,B)$ into a
solution $(A', B)$ where every component corresponds to an important cut as
above. %

Let $C$ be a component of $G[A]$ that does not correspond to an important cut in
$\cset_0$ or $\cset$. %
Notice that $C$ cannot contain a proper non-empty subset of $Q_0$, since $Q_0
\subseteq A$ and we assume that $Q_0$ is connected. %
If $C$ does not contain any terminals or $Q_0$, we move $C$ to $B$ (resulting
in the cut $(A\setminus C, B \cup C)$). %
Since $C$ is a connected component of $G[A]$, all of the neighbors of $C$ are in
$B$, and therefore moving $C$ to $B$ does not add any cut edges. %

In the remaining case, $C$ contains a terminal $t \in \tset$ or $Q_0$,
but is not an important cut. %
By Proposition \ref{prop:impcut:exist}, there is an important cut $(C',
\bar C')$, with at most as many cut edges as $(C, \bar C)$ and $C \subsetneq
C'$. %
We can replace $C$ by a component corresponding to an important cut by taking
the cut $(A \cup C', B \setminus C')$. %
This is still a valid solution, since all terminals contained in $A$ are
contained in $A \cup C'$, and $Q_0 \subseteq A$, $Q_1 \subseteq B \setminus C'$.
Additionally, the number of edges crossing the cut does not increase: since
$|E(C', \bar C')| \leq |E(C, \bar C)|$, the number of edges added to the
cutset is at most the number of edges removed.

We can apply the operations above until the constraints in the lemma are
satisfied. %
Notice that when applying the operations above, the number of components of
$G[A]$ never increases and the number of vertices in $A$ connected to
terminals in $G[A]$ never decreases. %
Furthermore, each operation changes at least one of the two quantities above,
so this process must finish after a finite number of operations.
\end{proof}

Due to Lemma \ref{lem:violcut:impcutcomps}, we can assume that every connected
component of $G[A]$ corresponds to an important cut. %
Now, let $C^*_0, C^*_1, \ldots, C^*_\lambda$ be the connected components of
$G[A]$, with $C^*_0 \in \cset_0$ being the component that contains $Q_0$. %
Let $\set{(\kappa_i, \ell_i)}_\lambda$ be the cut profile vector corresponding
to the cuts $(C^*_i, \bar C^*_i)$ for $i \in \set{1, \ldots, \lambda}$
(excluding $C^*_0$), meaning that $\kappa_i$, $\ell_i$ are the number of
terminals in $C^*_i$ and the number of edges in the cutset, $E(C^*_i, \bar
C^*_i)$, respectively. %
Notice that, if $C^*_0$ or $C^*_0 \cup C^*_i$ (for some $i \in [\lambda]$)
contain at least $c$ terminals, then we can remove all the other components of
$A$. %
In this case, the algorithm finds $C_0\in\cset_0$ or $C_0 \in \cset_0$, $C_1
\in \cset$ by enumeration and returns a valid solution. %
Otherwise, all the components contain at most $c-1$ terminals each (and thus
$A$ induces a slot vector as in Definition \ref{def:k0:slot}).

Consider the iteration of the algorithm in which the cut profile
vector defined above is considered and $C_0 = C^*_0$. %
The next part of the algorithm (Lines
\ref{alg:k0:forround}--\ref{alg:k0:forroundend}) greedily fills the slots with
compatible important cuts from $\cset$, while making sure that each set
contains a disjoint set of terminals from the others. %
Though it seems that our goal at this stage is to obtain a feasible solution,
what we intend is to obtain a set of terminals, denoted $S$, such that the set
of important cuts for terminals in $S$ contains a feasible solution. %
For instance, if $S$ contains at least one terminal from each $C^*_i$,
$i\in[\lambda]$, our goal is achieved.

The above considerations motivate the following definition. %
We say a slot $i$ is \emph{hit} by $S$ if $S \cap C^*_i \neq \emptyset$. %
Notice that slot $i$ is hit by $S$ if $C_{ji} = C^*_i$ for some $j$, since
the terminals in $C^*_i$ is added to $S$. %
Slot $i$ is also hit by $S$ if, for some $j$, we cannot find a set $C_{ji}$,
since that implies that $C_{ji} = C^*_i$ is not a valid choice, and thus $S
\cap C^*_i \neq \emptyset$. %
%\parinya{This kind of hitting is enough because we use important cuts?
%Therefore, if we manage to hit $C^*_i$, we will have that $C_{ji}$ contains
%all terminals in $C^*_i$} %
%\daniel{No, this does not require important cuts. I think the previous
%paragraph should explain it a bit better now, but let me know if I should
%explain it better.} %
Furthermore, if slot $i$ is not hit by $S$, then $C_{ji}$ is found in all
$(c+1)$ rounds. %

Let $\cset_S \subseteq \cset$ be the subset of important cuts containing
terminals in $S$ %
(by Proposition \ref{prop:violcut:impcutforall} these are the important $(t,
Q_1)$-cuts for $t \in S$). %
It is now suficient to show that there is a sequence of $\lambda$ cuts
$\set{(C_i,\bar C_i)}_{i \in \lambda}$ from $\cset_S$, such that all $C_i$
contain disjoint sets of terminals (also disjoint with the terminals in
$C_0$), and such that $(C_i,\bar C_i)$ is compatible with $(\kappa_i,
\ell_i)$. %
Taking $C=C_0 \cup \bigcup_{i=1}^\lambda C_i$, we obtain a feasible solution
$(C, \bar C)$, which may be different from $(A,B)$, but has the same number of
connected components as $G[A]$, and the same numbers of terminals
contained in each component and cut edges separating each component from the
other side of the cut. %
Since the algorithm enumerates all such sequences of $\lambda$ sets, it will
find either $(C, \bar C)$ or a different feasible solution. %

We now define the sets $C_i$: if a slot $i$ is hit by $S$ %
we can set $C_i = C^*_i$, since there is $t \in C^*_i \cap S$, and therefore,
$(C^*_i, \bar C^*_i) \in \cset_S$. %
This cut is trivially compatible with $(\kappa_i, \ell_i)$, and is disjoint to
all other sets defined similarly. %"
Let $I_H$ be the set of all $i \in [\lambda]$ such that slot $i$ is hit by $S$,
and let $C^* = \bigcup \set{ C_i: i \in I_H}$. %
All that is left to prove is that, for every slot $i$ that is not hit by $S$,
there is an important cut $(C_i, \bar C_i) \in \cset_S$, which
contains terminals not in any previous $C_{i'}$, $i' \leq i$, or in $C^*$. %
Notice that we have covered at most $c$ terminals so far %
(if we covered more, then the components so far are sufficient and therefore the
number of components of $G[A]$ is not minimal). %
Since there are $c+1$ important cuts $(C_{ji}, \bar C_{ji})$, $j\in[c+1]$, all
compatible with slot $i$ and containing disjoint sets of terminals %
(since the terminals of $C_{ji}$ are added to $S$ after being picked), %
there must be one set $C_{ji}$ that does not contain any of the at most $c$
terminals in $\bigcup_{i' < i} C_{i'}$, or in $C^*$, and we can set
$C_i=C_{ji}$. %
Therefore, a sequence $\set{(C_i,\bar C_i)}_{i \in \lambda}$ exists, and the
algorithm outputs a feasible solution. %

\newcommand{\DeclareProblem}[2]{\DeclareRobustCommand{#1}[0]{\textsf{#2}\xspace}}
\DeclareProblem{\SNDP}{rSNDP}

\section{Survivable Network Design on Bounded-Treewidth}

In this section, we consider the rooted survivable network design problem
(\SNDP), in which we are given a graph $G$ with edge-costs $w$, as well as $h$
demands $(v_i, d_i) \in V \times \Z$, $i \in [h]$, and a root $r \in V$. %
The goal is to find a minimum-cost subgraph that contains, for every demand
$(v_i, d_i)$, $i \in [h]$, $d_i$ edge-disjoint paths connecting $r$ to $v_i$. %

We will show how to solve \SNDP optimally in running time $f(c, \tw(G)) n$, where
$c = \max_i d_i$ is the maximum demand, and $\tw(G)$ is the treewidth of $G$. %
Our algorithm uses the ideas of Chalermsook et al.~\cite{ChalermsookDELV18}
together with connectivity-$c$ mimicking networks.
Our running time is $\exp(9^c\tw(G)^2) m$ which is double-exponential in $c$, but only single-exponential in $\tw(G)$ (whereas the result by Chalermsook et al.~\cite{ChalermsookDELV18} is double-exponential in both $c$ and $\tw(G)$).

Let $(T,X)$ be a tree decomposition of $G$ satisfying the following
properties (see~\cite{ChalermsookDELV18}):
\begin{inparaenum}[(i)]
\item $T$ has height $O(\log n)$;
\item $|X_t| \leq O(\tw(G))$ for all $t \in T$;
\item every leaf bag contains no edges ($E_t = \emptyset$ for all leaves $t \in T$); 
\item every non-leaf has exactly $2$ children.
\end{inparaenum}
Additionally, we add the root $r$ to every bag $X_t$, $t \in \tset$.

The main idea of our algorithm is to assign, to each $t \in T$, a state
representing the connectivity of the solution restricted to $X_t$. %
By assigning these states in a manner that is consistent across $T$, we
can piece together the solution by looking at the states for each individual
node. %
We will show that representing connectivity by two connectivity-$c$ mimicking
networks is sufficient for our purposes, and that we can achieve consistency
across $T$ by using very simple local rules between the state for a node
$t$ and the states for its children $t_1$, $t_2$. %
These rules can be applied using dynamic programming to compute the optimum solution.

\subsection{Local Connectivity Rules}

In this section, we will introduce the local connectivity rules which will
allow us to assign states in a consistent manner to the nodes of $T$. %
The states we will consider consist of two connectivity-$c$ mimicking networks
roughly corresponding to the connectivity of the solution in $E(G_t)$ and $E
\setminus E(G_t)$. % 
We then present some rules that make these states consistent across $T$,
while only being enforced for a node and its children.

We remark that this notation deviates from the one used by Chalermsook et
al.~\cite{ChalermsookDELV18}, in which states represent connectivity in
$E(G_t)$ and $E$. %
We do so because taking the union of overlapping mimicking networks would lead
to overcounting of the number of edge-disjoint paths. %

The following local definition of connectivity introduces the desired
consistency rules that we can use to define a dynamic program for the problem. %
Lemma \ref{def:kgst:weakmimnets:unify} shows that a collection of mimicking
networks satisfy the local definition if and only if they represent the
connectivity in $G$ with terminals given by a bag.

% The following equivalents of [Local Connectivity Definitions] and [Local-Global Connectivity ]
% for weak-mimicking networks provide these constructions. For convenience, we
% discard the use of $Y$ in [Local Connectivity Definitions], using instead all of the
% edges of the graph. We can then apply the lemma to the graph $(V(G), Y)$,
% where $Y$ is any subset of edges.

\begin{definition}[Local Connectivity]
\label{def:kgst:weakmimnets:unify}

We say that the pairs of weak-mimicking networks $\{(\Gnet_t, \Dnet_t)\}_{t\in V(T)}$ satisfy the
\emph{local connectivity definition} if
\begin{align*}
\Gnet_t &\equiv^c_{X_t} (X_t, \emptyset)     & \text{for every leaf node $t$ of $T$} \\
\Dnet_{\rootn(T)} &\equiv^c_{X_t} (X_t, \emptyset)
\end{align*}
and for every internal node $t \in V(T)$ with children $t_1$ and $t_2$,
\begin{align*}
\Gnet_t &\equiv^c_{X_t} (X_t, E_t) \cup \Gnet_{t_1} \cup \Gnet_{t_2} \\
\Dnet_{t_1} &\equiv^c_{X_t} (X_t, E_t) \cup \Gnet_{t_2} \cup \Dnet_{t} 
\end{align*}
where $A \equiv^c_{X_t} B$ means that $\mincut^c_{A}(S_1, S_2) = \mincut^c_{B}
(S_1, S_2)$ for all disjoint sets $S_1, S_2 \subseteq X_t$. %
% \mincut^c_{\Gnet_t}(S_1, S_2) &= \mincut^c_{G_t} (S_1, S_2)
% & \forall \text{ disjoint } S_1, S_2 \subseteq X_t \\
% \mincut^c_{\Dnet_t}(S_1, S_2) &= \mincut^c_{G} (S_1, S_2)
% & \forall \text{ disjoint } S_1, S_2 \subseteq X_{t_1}
\end{definition}

\begin{lemma}
\label{lem:kgst:weakmimnets:unify}
Let $G=(V,E)$ be a graph, and $(\Tcal, X)$ its tree decomposition satisfying
[the usual properties]. %
For every $t \in V(T)$, let $(\Gnet_t, \Dnet_t)$ be a
pair as in Definition \ref{def:kgst:weakmimnets:unify}. %

Then, the pairs $\set{(\Gnet_t, \Dnet_t)}_{t \in T}$ satisfy the local
definitions iff for every $t \in V(\Tcal)$,
\begin{align*}
\Gnet_t &\equiv^c_{X_t} G_t \\
\Dnet_t &\equiv^c_{X_t} G \setminus E(G_t)
\end{align*}
where $A \equiv^c_{X_t} B$ means that $\mincut^c_{A}(S_1, S_2) = \mincut^c_{B}
(S_1, S_2)$ for all disjoint sets $S_1, S_2 \subseteq X_t$. %

% Then, the pairs $\set{(\Gnet_t, \Dnet_t)}_{t \in T}$ satisfy the local
% definitions iff for every $t \in V(\Tcal)$ and every pair of disjoint sets
% $S_1, S_2 \subseteq X_t$,
% \begin{align*}
% \mincut^c_{\Gnet_t}(S_1, S_2) &= \mincut^c_{G_t}(S_1, S_2) \\
% \mincut^c_{\Dnet_t}(S_1, S_2) &= \mincut^c_{G\setminus E(G_t)}(S_1, S_2).
% \end{align*}
\end{lemma}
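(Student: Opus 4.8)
The statement is an "if and only if" characterizing when a family of weak-mimicking network pairs $\{(\Gnet_t, \Dnet_t)\}$ satisfies the local definition: precisely when $\Gnet_t \equiv^c_{X_t} G_t$ and $\Dnet_t \equiv^c_{X_t} G \setminus E(G_t)$ for every node $t$.  I would prove both directions by structural induction on the tree decomposition $(\Tcal, X)$, the forward direction ("global $\Rightarrow$ local") being a routine unwinding and the converse ("local $\Rightarrow$ global") being the substantive part.

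\textbf{Forward direction.}  Suppose $\Gnet_t \equiv^c_{X_t} G_t$ and $\Dnet_t \equiv^c_{X_t} G \setminus E(G_t)$ for all $t$.  I must verify each of the four local identities.  For a leaf $t$, $G_t = (X_t, E_t) = (X_t, \emptyset)$ by property (iii), so $\Gnet_t \equiv^c_{X_t} (X_t, \emptyset)$ immediately.  For the root, $G \setminus E(G_{\rootn(T)}) = G \setminus E = (X_t, \emptyset)$ (recall $r$ and all of $V$ collapse appropriately; more precisely $E(G_{\rootn(T)}) = E$ since the root bag's subtree is all of $T$), giving $\Dnet_{\rootn(T)} \equiv^c_{X_t}(X_t,\emptyset)$.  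For an internal node $t$ with children $t_1, t_2$: $G_t$ is (up to isolated vertices) the edge-disjoint union $E_t \cup E(G_{t_1}) \cup E(G_{t_2})$, and any mincut query among vertices of $X_t$ only sees these three edge sets.  The key point is that replacing $E(G_{t_i})$ by $\Gnet_{t_i}$ preserves all mincuts among $X_{t_i}$-vertices, and since $X_t \cap V(G_{t_i}) \subseteq X_{t_i}$ (a tree-decomposition property), this substitution is legitimate — I would invoke a small "composition lemma" stating that glueing mimicking networks along a shared separator preserves thresholded mincuts among terminals that lie in the separator.  The identity for $\Dnet_{t_1}$ is symmetric: $G \setminus E(G_{t_1}) = E_t \cup E(G_{t_2}) \cup (G \setminus E(G_t))$, and we substitute $\Gnet_{t_2}$ for $E(G_{t_2})$ and $\Dnet_t$ for $G \setminus E(G_t)$.

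\textbf{Converse direction.}  Assume the four local identities.  I prove $\Gnet_t \equiv^c_{X_t} G_t$ by induction on the height of $t$, from the leaves up.  The base case is a leaf, where the local identity $\Gnet_t \equiv^c_{X_t}(X_t,\emptyset)$ is exactly $\Gnet_t \equiv^c_{X_t} G_t$.  For the inductive step, the children $t_1, t_2$ satisfy $\Gnet_{t_i} \equiv^c_{X_{t_i}} G_{t_i}$; substituting into the local identity $\Gnet_t \equiv^c_{X_t} (X_t, E_t) \cup \Gnet_{t_1} \cup \Gnet_{t_2}$ and again using the composition lemma (since $X_t \cap V(G_{t_i}) \subseteq X_{t_i}$) gives $\Gnet_t \equiv^c_{X_t} (X_t, E_t)\cup G_{t_1}\cup G_{t_2} \equiv^c_{X_t} G_t$.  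Then I prove $\Dnet_t \equiv^c_{X_t} G \setminus E(G_t)$ by induction \emph{downward} from the root: the base case is the root identity, and for the inductive step I use the local identity $\Dnet_{t_1} \equiv^c_{X_t} (X_t, E_t) \cup \Gnet_{t_2} \cup \Dnet_t$, the already-established $\Gnet_{t_2} \equiv^c G_{t_2}$, the inductive hypothesis $\Dnet_t \equiv^c G \setminus E(G_t)$, and the composition lemma to conclude $\Dnet_{t_1} \equiv^c_{X_t} (X_t, E_t) \cup G_{t_2} \cup (G \setminus E(G_t)) \equiv^c_{X_t} G \setminus E(G_{t_1})$; the analogous statement for $t_2$ follows by symmetry of the two children (or by stating the rule for both children in the definition).

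\textbf{Main obstacle.}  The crux is the composition lemma: that if $H_1 \equiv^c_{Z} G_1$ and $H_2 \equiv^c_{Z'} G_2$ where the shared vertices of $G_1, G_2$ lie in $Z \cap Z'$, then $H_1 \cup H_2 \equiv^c_{Z \cap Z'}$ the glued graph, i.e. thresholded mincuts among the separator terminals are preserved under replacing each side by its mimicking network.  This requires care because the $\min(\cdot, c)$ truncation interacts nontrivially with taking unions — one must argue at the level of edge-disjoint paths (Menger) that every collection of $\le c$ disjoint paths can be rerouted through the mimicking network piece by piece, exactly as in the proof of the first lemma of the paper where a contracted set is "uncontracted" along a linked structure. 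I expect this Menger-style path-stitching argument across the $X_{t_i}$ separators, rather than the inductions themselves, to be where the real work lies.
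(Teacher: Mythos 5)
Your outline follows the paper's skeleton quite closely: prove the statement for $\Gnet_t$ by bottom-up induction and for $\Dnet_t$ by top-down induction, with a composition/gluing step at each internal node. Two remarks.

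First, the paper does not actually split the ``iff'' into a forward and a converse pass. At each internal node $t$ it establishes the single unconditional fact that $H'_t := (X_t, E_t) \cup \Gnet_{t_1} \cup \Gnet_{t_2}$ satisfies $H'_t \equiv^c_{X_t} G_t$ (using the inductive hypothesis on the children), from which ``$\Gnet_t$ satisfies the local rule at $t$'' is equivalent to ``$\Gnet_t \equiv^c_{X_t} G_t$'' immediately, since both are equivalences with the same graph $H'_t$. This handles both directions of the iff in one pass and avoids running the composition argument twice.

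Second, and more substantively: the composition lemma that you correctly identify as where ``the real work lies'' is in fact the entire content of the proof, and you leave it unproved, offering only a conjectural Menger-style path-stitching. The paper proves it directly with cuts, not paths: take a mincut $F$ of $(S_1,S_2)$ in $G_t$ with sides $(A,B)$, decompose it as $F = (F\cap E_t) \cup (F \cap E(G_{t_1})) \cup (F \cap E(G_{t_2}))$, observe that each piece is a cut in the corresponding subgraph separating the traces of $A$ and $B$ in the relevant bag, pass each piece through the induction hypothesis to obtain a cut of no larger thresholded size in the mimicking network, and then reassemble the three cuts into a single cut of $H'_t$, using $X_{t_1} \cap X_{t_2} \subseteq X_t$ to guarantee that the three pieces share vertices only inside $X_t$ and hence the reassembled partition is consistent. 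The symmetric inequality starts from a mincut of $H'_t$. This is not harder than a Menger argument here --- the edge sets of the three pieces are literally disjoint, so the accounting is mechanical. Also note a mismatch in your composition lemma as stated: you query the glued graph over $Z \cap Z'$, but the local and global rules query over all of $X_t$, which generally properly contains $X_{t_1} \cap X_{t_2}$; and you glue two pieces whereas the rule glues three ($E_t$, $\Gnet_{t_1}$, $\Gnet_{t_2}$). The form you actually need is: if the pieces pairwise share only vertices inside the terminal set $X_t$, and each piece is replaced by something equivalent over its own bag, then thresholded mincuts over $X_t$ are preserved. As stated, your lemma would not plug directly into either of your inductions.
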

\begin{proof}
We start by proving the statement for $\Gnet$ by bottom-up induction, and then
the one for $\Dnet$ by top-down induction. %
We will show that 

Let $t \in \Tcal$ be a leaf of the tree decomposition. Then $E(G_t) =
\emptyset$, so the statement immediately follows. Consider now an
internal node $t$ with children $t_1$, $t_2$, and assume
that the claim follows for $t_1$, $t_2$. %
We will define $H'_t = (X_t, E_t) \cup \Gnet_{t_1} \cup \Gnet_{t_2}$, and prove that
$H'_t \equiv^c_{X_t} G_t$. %
That implies that $\Gnet_t \equiv^c_{X_t} H'_t$ (that is, $\Gnet_t$ satisfies
the local connectivity definition) if and only if $\Gnet_t \equiv^c_{X_t}
G_t$. %

Let $S_1, S_2 \subseteq X_t$, and $F$ be the cutset for a mincut between $S_1$
and $S_2$ in $E(G_t)$. %
We will use $c_G(S_1,S_2) = \mincut^c_{G}(S_1, S_2)$ for
conciseness (in this proof only). %
Then
\begin{align*}
c_{G_t}&(S_1, S_2) \\
&= \min(c, |F|) \\
&= \min(c, |F \cap E_t| + |F \cap E(G_{t_1})| + |F \cap E(G_{t_2})|) \\
&\geq \min\parenbig{c, c_{E_t}(S_1, S_2) + c_{E(G_{t_1})}(S_1 \cap X_{t_1}, S_2 \cap X_{t_1}}
			+ c_{E(G_{t_2})}(S_1 \cap X_{t_2}, S_2 \cap X_{t_2})) \\
&= \min\parenbig{c, c_{E_t}(S_1, S_2) + c_{\Gnet_{t_1}}(S_1 \cap X_{t_1}, S_2 \cap X_{t_1}} 
			+ c_{\Gnet_{t_2}}(S_1 \cap X_{t_2}, S_2 \cap X_{t_2})) \\
&\geq c_{H'_t}(S_1, S_2)
\end{align*}

The third inequality follows because each of the three terms corresponds to a
min-cut between $S_1$ and $S_2$ for the respective edge sets. %
The fourth inequality follows by induction hypothesis, and the final one
follows by definition of $H'_t$. %
For this last step, we crucially use that $X_{t_1} \cap X_{t_2} \subseteq
X_t$, which means that any cut for $E_t$, $\Gnet_{t_1}$ and $\Gnet_{t_2}$ uses
disjoint edges and disjoint vertices outside of $X_t$. These edges provide an
upper bound for the cut $c_{H'_t}$.

Analogously, we can prove that $c_{G_t} \leq c_{H'_t}$, by
taking a set of edges $F'$ of $H'_t$ that realizes the minimum cut in that
graph. The same steps then apply to prove the desired inequality. 
\begin{align*}
c_{H'_t}(S_1, S_2) 
&= \min(c, |F'|) \\
&= \min(c, |F' \cap E_t| + |F' \cap E(\Gnet_{t_1})| + |F' \cap E(\Gnet_{t_2})|) \\
&\geq \min\parenbig{c, c_{E_t}(S_1, S_2) + c_{\Gnet_{t_1}}(S_1 \cap X_{t_1}, S_2 \cap X_{t_1}}
			+ c_{\Gnet_{t_2}}(S_1 \cap X_{t_2}, S_2 \cap X_{t_2})) \\
&= \min\parenbig{c, c_{E_t}(S_1, S_2) + c_{G_{t_1}}(S_1 \cap X_{t_1}, S_2 \cap X_{t_1}} 
			+ c_{G_{t_2}}(S_1 \cap X_{t_2}, S_2 \cap X_{t_2})) \\
&\geq c_{G_t}(S_1, S_2)
\end{align*}
This concludes the first part of the proof. %

For the second part of the proof, we will use top-down induction. %
For $t=r$, notice that $E\setminus E(G_t) = \emptyset$, so the statement
follows. %
We now prove the equality for a node $t_1$ with parent $t$ and sibling $t_2$. %
Let  $H_{t_1} = (X_t, E_t) \cup \Gnet_{t_2} \cup \Dnet_{t}$, and prove that $H_t \equiv^c_{X_t} G \setminus G_t$. %
This implies the statement, as it shows that $\Dnet_t \equiv^c_{X_t} H_t$
(that is, $\Dnet_t$ satisfies the local connectivity definition) if and only
if $\Dnet_t \equiv^c_{X_t} G \setminus G_t$. %

\newcommand{\twolineeq}[3]{
	\begin{aligned}[t]
	#1 & #2 \\
	   & #3
	\end{aligned}
}

Let $S_1, S_2 \subseteq X_{t_1}$, and $F$ be the cutset for a
mincut between $S_1$ and $S_2$ in $E\setminus E(G_{t_1})$. Then
\begin{align*}
c_{E\setminus E(G_{t_1})}(S_1, S_2) 
&= \min(c, |F|) \\
&= \min(c, |F \cap E_t| + |F \cap (E\setminus E(G_t))| + |F \cap E(G_{t_2})|) \\
&\geq 
\twolineeq{\min(c, c_{E_t}(S_1 \cap X_t, S_2 \cap X_t)}{
+ c_{E\setminus E(G_t)}(S_1 \cap X_t, S_2 \cap X_t)}{
+ c_{E(G_{t_2})}(S_1 \cap X_{t_2}, S_2 \cap X_{t_2}))} \\
&= 
\twolineeq{\min(c, c_{E_t}(S_1 \cap X_t, S_2 \cap X_t)}{
+ c_{\Dnet_t}(S_1 \cap X_t, S_2 \cap X_t)}{
+ c_{\Gnet_{t_2}}(S_1 \cap X_{t_2}, S_2 \cap X_{t_2}))} \\
&\geq c_{H_{t_1}}(S_1, S_2)
\end{align*}

Similarly to the proof above, we use the fact that $F \cap E_t$, $F \cap
(E\setminus E(G_t))$, $F \cap E(G_{t_2})$ are cuts in the subgraphs
$E_t$, $G \setminus E(G_t)$, $E(G_{t_2})$ respectively. The last step
follows from the fact that the three terms correspond to cuts in $E_t$, $\Dnet_t$
and $\Gnet_{t_2}$, and therefore their union forms a cut in $\Dnet_t \cup E_t \cup
\Gnet_{t_2}$. Since $H_{t_1} \equiv^c_{X_{t_1}} \Dnet_t \cup E_t \cup \Gnet_{t_2}$,
the inequality follows. The converse follows similarly:
\begin{align*}
c_{H_{t_1}}(S_1, S_2)
&= \min(c, |F|) \\
&= \min(c, |F \cap E_t| + |F \cap E(\Dnet_t)| + |F \cap E(\Gnet_{t_2})|) \\
&\geq 
\twolineeq{\min(c, c_{E_t}(S_1 \cap X_t, S_2 \cap X_t)}{ 
  + c_{\Dnet_t}(S_1 \cap X_t, S_2 \cap X_t)}{
  + c_{\Gnet_{t_2}}(S_1 \cap X_{t_2}, S_2 \cap X_{t_2}))} \\
&= 
\twolineeq{ \min(c, c_{E_t}(S_1 \cap X_t, S_2 \cap X_t)}{
  + c_{E\setminus E(G_t)}(S_1 \cap X_t, S_2 \cap X_t)}{
  + c_{E(G_{t_2})}(S_1 \cap X_{t_2}, S_2 \cap X_{t_2}))} \\
&\geq c_{E\setminus E(G_{t_1})}(S_1, S_2) 
\end{align*}
This completes the proof.
\end{proof}

\subsection{Dynamic Program for \SNDP}

In this section, we present an algorithm for \SNDP on bounded-treewidth
graphs, which uses dynamic programming to compute a solution bottom-up. %
Our goal is to assign two mimicking networks $\Gnet_t$, $\Dnet_t$ to each node
$t \in T$, corresponding to the connectivity of the solution in $E(G_t)$
and $E\setminus E(G_t)$. %
We argue that any solution for $G_t$, $t \in T$ that is compatible with a
state $(\Gnet_t, \Dnet_t)$ can be interchangeably used, which implies that the
dynamic program will obtain the minimum-cost solution. %

We define a dynamic programming table $D$, with entries $D[t, \Gnet, \Dnet]$,
$t \in T$, $\Gnet$, $\Dnet$ connectivity-$c$ mimicking networks with
terminal set $X_t$. %
The entry $D[t, \Gnet, \Dnet]$ represents the minimum cost of a solution $F$
that is consistent with $\Gnet$ (i.e.\ $F \equiv^c_{X_t} \Gnet$), such that $F
\cup \Dnet_t$ satisfies all the demands contained in $G_t$.

We compute $D[t, \Gnet, \Dnet]$ as follows:
\begin{itemize}
	\item For any leaf $t$, set $D[t, \emptyset, \Dnet] = 0$ and $D[t, \Gnet,
	\Dnet] = +\infty$ for $\Gnet \neq \emptyset$;
	\item For the root node $\rootn(T)$, set $D[\rootn(T), \Gnet, \Dnet] = +\infty$ if $\Dnet
	\neq \emptyset$;
	\item For any demand $(v_i, d_i)$, and $t \in T$ such that $v_i \in X_t$, set $D[t, \Gnet, \Dnet] = +\infty$ if $\Gnet \cup \Dnet$ contain fewer than $d_i$ edge-disjoint paths connecting $r$ to $v_i$.
\end{itemize}
For all other entries of $T$, compute it recursively as:
\begin{align*}
D[t, \Gnet, \Dnet] 
= \min \Big\{ w(Y) + D[t_1, \Gnet_1, \Dnet_1] &+ D[t_2, \Gnet_2, \Dnet_2] :
Y \subseteq E_t, \\
&\Gnet \equiv^c_{X_t} Y \cup \Gnet_1 \cup \Gnet_2, \\
&\Dnet_1 \equiv^c_{X_{t_1}} Y \cup \Dnet \cup \Gnet_2, \\
&\Dnet_2 \equiv^c_{X_{t_2}} Y \cup \Dnet \cup \Gnet_1
\Big\}
\end{align*}

We now want to prove that the dynamic program is feasible, that is, that the
entries $D[\rootn(T), \Gnet, \emptyset]$ correspond to feasible solutions;
and that it is optimal, meaning that we will obtain the optimum solution to
the problem. %

To prove that the dynamic program is feasible, notice that, by definition, any
solution obtained induces a choice of $Y_t$, $\Gnet_t$, $\Dnet_t$ for each $t
\in T$. Let $Y = \cup_{t \in T}$. %
The recursion formula of the dynamic program implies that the pairs
$\set{(\Gnet_t, \Dnet_t)}_{t \in T}$ satisfy the local connectivity
definition with regard to the graph $(V, Y)$. %

By Lemma \ref{lem:kgst:weakmimnets:unify}, this implies that 
\begin{align*}
\Gnet_t \equiv^c_{X_t} G_t[Y], 
\Dnet_t \equiv^c_{X_t} G[Y] \setminus E(G_t), 
\end{align*}
and hence, $\Gnet_t \cup \Dnet_t \equiv^c_{X_t} G[Y]$.

Let $(v_i, d_i)$ be a demand and $t \in T$ be a node such that $v_i \in
X_t$. %
Since we know that $\Gnet_t \cup \Dnet_t$ contains $d_i$ edge-disjoint paths
from $r$ to $v_i$ (otherwise $D[t, \Gnet_t, \Dnet_t] = + \infty$), then we
know that the minimum cut separating $r$ from $v_i$ has at least $d_i$ edges,
which implies that $Y$ must also contain $d_i$ edge-disjoint paths connecting
$r$ and $v_i$.

For the converse, we will prove that any feasible solution $F$ can be captured
by the dynamic program. %
Given $F$, it is sufficient to take $\Gnet_t$, $\Dnet_t$ to be
connectivity-$c$ mimicking networks for $G_t[F]$, $G[F] \setminus E(G_t)$,
respectively. %
By Lemma \ref{lem:kgst:weakmimnets:unify} (applied to graph $(V,F)$), we know
that $\set{(\Gnet_t, \Dnet_t)}_{t \in T}$ satisfy the local connectivity
definition for $(V,F)$, and therefore $D[t,\Gnet_t,\Dnet_t]$ can be computed
recursively from $D[t_1, \Gnet_{t_1}, \Dnet_{t_1}]$, $D[t_2, \Gnet_{t_2},
\Dnet_{t_2}]$, $Y_t = F \cap E_t$.

Let $(v_i, d_i)$ be a demand and $t \in T$ be a node such that $v_i \in
X_t$. %
Since $F$ is a feasible solution, it contains $d_i$ edge-disjoint paths from
$r$ to $v_i$, and therefore $\mincut^c_F(\set{r}, \set{v_i}) \geq d_i$. This
implies that $\mincut^c_{\Gnet_t \cup \Dnet_t}(\set{r}, \set{v_i}) \geq d_i$,
and thus $\Gnet_t \cup \Dnet_t$ contains $d_i$ edge-disjoint paths from $r$ to
$v_i$ (and is a valid entry of $T$).

We conclude that the dynamic program above computes an optimum solution for
\SNDP. By Theorem \ref{thm:connc:mimnet}, there is a weak-mimicking network
for any graph, with $w$ terminals, of size $O(3^c c w)$. Since the number of
edges is at most the square of the number of vertices, there are $O(\exp(9^c
c^2 w^2))$ possible states for each node, which completes the proof.

\begin{figure}[t]
\label{alg:CC:k0}
\begin{algorithmic}[1]
\Function{$\CC$}{$G$, $\tset$, $Q_0$, $Q_1$,$c$, $\ell$}
	\If {$c = 0$}
	\label{alg:k0:ifk0}
		\State Compute a min-cut $(A'_0, A'_1)$ such that $Q_0\subseteq A'_0, Q_1\subseteq A'_1$
		\If {$|E(A'_0, A'_1)| \leq \ell$} 
			\State \Return $(A'_0, A'_1)$
		\Else 
			\State\Return \nosolution
		\EndIf
	\EndIf
	\label{alg:k0:ifk0end}
	\Statex

	\State Compute the set $\cset_0$ of all  important $(Q_0, Q_1)$-cuts with at most $\ell$ cut edges
	\State Compute the set $\cset$ of all important $(t, Q_1)$-cuts for $t \in \tset$ with at most $\ell$ cut edges
	\label{alg:k0:impcut}
	\Statex

	\State Find an important cut $(C_0, \bar{C_0}) \in \cset_0$ such that $|C_0 \cap \tset| \geq c$
	\label{alg:k0:spec1}
	\If {$(C_0, \bar{C_0})$ exists}
		\State\Return $(C_0, \bar{C_0})$
	\EndIf
	\State  \parbox[t]{0.9\textwidth}{
			Find important cuts $(C_0, \bar{C_0}) \in \cset_0$, $(C_1, \bar{C_1}) \in \cset$, 
			such that $(C_0 \cap \tset) \cap (C_1 \cap \tset) = \emptyset$, \par
			\hskip\algorithmicindent $|C_0\cap\tset| + |C_1 \cap \tset| \geq c$, 
			and $\card{E(C_0 \cup C_1, \bar{C_0} \cap \bar{C_1})} \leq \ell$
			}
	\If {$(C_0, \bar{C_0})$, $(C_1, \bar{C_1})$ exist}
		\State\Return $(C_0 \cup C_1, \bar{C_0} \cap \bar{C_1})$
	\EndIf
	\label{alg:k0:spec2}

	\Statex
	\For {all cut profile vectors $\set{(\kappa_i, \ell_i)}_\lambda$, and all $(C_0, \bar{C_0}) \in \cset_0$ }
	\label{alg:k0:forslotvec}
		\State $S \gets C_0 \cap \tset$
		\For {$j \in \set{1, \ldots, c+1}$} \InlineComment{Round $j$}
		\label{alg:k0:forround}
			\For {$i \in \set{1, \ldots, \lambda}$}
			\label{alg:k0:forslot}
				% \If {slot $i$ has been marked empty} 
				% 	\State Continue to next $i$
				% \EndIf
				% \State \parbox[t]{0.7\textwidth}{
				% 	Find $(C_{ji}, \bar C_{ji}) \in \cset$ \par
				% 	\hskip\algorithmicindent such that $C_{ji} \cap S = \emptyset$, and $(C_{ji}, \bar C_{ji})$ is compatible with slot $i$}
				\State Find $(C_{ji}, \bar C_{ji}) \in \cset$ compatible with slot $(\kappa_i, \ell_i)$, 
						such that $C_{ji} \cap S = \emptyset$
				\If {$C_{ji}$ exists} 
					\State $S \gets S \cup (C_{ji} \cap \tset)$
				% \Else 
				% 	\State Mark slot $i$ as empty
				\EndIf
			\EndFor
		\EndFor
		\label{alg:k0:forroundend}

		\Statex
		\State Let $\cset_S = \set{(C, \bar C) \mid C \cap S \neq \emptyset}$
		\State \parbox[t]{0.8\textwidth}{
			Find (by enumeration) $\set{(C_i,\bar C_i)}_{i \in \lambda}$, with $(C_i,\bar C_i) \in \cset_S$ compatible with slot $i$, \par 
			\hskip\algorithmicindent and all sets $C_i\cap \tset$ are disjoint (including $C_0 \cap \tset$)
			}
		\label{alg:k0:enumset}
		\If {$\set{(C_i,\bar C_i)}_{i \in \lambda}$ exists}
			\State Let $C=C_0 \cup \bigcup_{i=1}^\lambda C_i$
			\State \Return $(C, \bar C)$
		\EndIf
	\EndFor

	\Statex
	\State\Return \nosolution
	\InlineComment{No solution found for any cut profile}
\EndFunction
\end{algorithmic}
\caption{Algorithm to find a constrained cut in the base case}
\end{figure}

\paragraph{Acknowledgement:} Parinya Chalermsook has been  supported by European Research Council (ERC) under the European Union’s Horizon 2020 research and innovation programme (grant agreement No 759557) and by Academy of Finland Research Fellows, under grant number 310415.
Bundit Laekhanukit has been partially supported by the 1000-talent award by the Chinese government.

\bibliographystyle{abbrv}
\bibliography{violating-cut}

\end{document}